\documentclass[12pt,draftcls,onecolumn]{IEEEtran}
%
%
\ifCLASSINFOpdf
\usepackage[pdftex]{graphicx}
\graphicspath{{../pdf/}{../jpeg/}}
\DeclareGraphicsExtensions{.pdf,.jpeg,.png}
\else
\usepackage[dvips]{graphicx}
\graphicspath{{../eps/}}
\DeclareGraphicsExtensions{.eps}
\fi\usepackage{graphicx}

\usepackage{graphics}
\usepackage{epsfig}
\usepackage{epstopdf}
\usepackage{stfloats}
\usepackage[cmex10]{amsmath}
\usepackage{algorithmic}
\usepackage{array}
\usepackage{mdwmath}
\usepackage{mdwtab}
\usepackage{graphicx}
\usepackage{subfigure}
\usepackage{color}
\usepackage{amsfonts,amssymb}
\usepackage{hyperref} 
\usepackage{multirow}
\usepackage{diagbox}
\usepackage{caption}
\usepackage{ulem} 

\usepackage{amsfonts,amsthm,array} 
\usepackage{bbm}

\newtheorem{theorem}{Theorem}

\newtheorem{corollary}{Corollary}

\newtheorem{remark}{Remark}

\begin{document}
	
\title{On Secure NOMA-Aided Semi-Grant-Free Systems\thanks{Manuscript received.}}
	
\author{Hongjiang~Lei, 
		Fangtao~Yang,
		Hongwu~Liu, 
        Imran~Shafique~Ansari,\\
		Kyeong~Jin~Kim, 
        and~Theodoros~A.~Tsiftsis
}
\maketitle
\begin{abstract}
Semi-grant-free (SGF) transmission scheme enables grant-free (GF) users to utilize resource blocks allocated for grant-based (GB) users while maintaining the quality of service of GB users. This work investigates the secrecy performance of non-orthogonal multiple access (NOMA)-aided SGF systems. First, analytical expressions for the exact and asymptotic secrecy outage probability (SOP) of NOMA-aided SGF systems with a single GF user are derived. Then, the SGF systems with multiple GF users and the best-user scheduling scheme is considered. By utilizing order statistics theory, analytical expressions for the exact and asymptotic SOP are derived. {Monte Carlo simulation results are provided and compared with two benchmark schemes. The effects of system parameters on the SOP of the considered system are demonstrated and the accuracy of the developed analytical results is verified.} The results indicate that both the outage target rate for GB and the secure target rate for GF are the main factors of the secrecy performance of SGF systems.
\end{abstract}
	
\begin{IEEEkeywords}
		Non-orthogonal multiple access (NOMA),
		semi-grant-free (SGF) transmission scheme,
		grant-free (GF) user,
		grant-based (GB) user,
		secrecy outage probability.
\end{IEEEkeywords}
	
\section{Introduction}
\label{sec:introduction}
\subsection{Background and Related Work}
	
Ultra-reliable low latency communications (URLLC) and massive machine-type communications (mMTC) are the two most important scenarios for the next internet of things (IoT). URLLC focuses on mission-critical applications wherein unprecedented levels of reliability and latency are of the utmost importance in the fifth generation and it's beyond \cite{JiangW2021OJCS}.
In contrast, mMTC aspires to connect a vast number of intelligent devices to the Internet.
{
The user initiates the traditional grant-based (GB) access scheme with an access request to the base station (BS) in long term evolution.
The BS responds by allocating an access grant through a four-step handshake procedure strategy.
Once the BS grants the access request, data packets can be successfully transmitted without collision under ideal channel conditions.
However, GB scheme does not suit these scenarios due to high latency and heavy signaling overhead \cite{Shirvanimoghaddam2019Mag, Tanab2021NET}.
Moreover, the initial request transmission is still subject to collision and could require multiple transmissions depending on traffic load and the available resources at the BS.}

{To tackle these issues, grant-free (GF) transmission schemes were introduced in \cite{GF3GPP, MaZ2019IoT}, in which multiple users may occupy the same resource without the initial access request procedure.
Unlike the GB principle, no dedicated request transmission for granting access and allocating resource blocks is required for GF communications before starting a data transmission.
Although the GF scheme makes it possible to allow users to choose resource blocks independently and transmit data directly to reduce signaling overhead and latency effectively, collisions will become severe when multiple users select the same resource block to transmit simultaneously \cite{ZhangJ2020IoT}.
The collision issue can be resolved using massive multiple-input multiple-output (MIMO) or non-orthogonal multiple access (NOMA) technologies.
The former solution utilizes spatial degrees of freedom to mitigate multi-user collisions, while the latter focuses on spectrum sharing among multiple users with successive interference cancellation (SIC) \cite{Ding2017JSAC, ShahabMB2020Servey, AbbasR2019TCOM, ChenJ2022JSAC}.}

{Even though the massive connectivity can be supported through GF schemes, GB schemes are still desired, especially when strict quality of service (QoS) requirements exist \cite{ChenJ2022JSAC}.
The GB and GF transmission scheme must coexist in scenarios where URLLC applications are served by the GB scheme and mMTC applications in the same system are served by the GF scheme.
For example, a new hybrid access scheme was proposed in \cite{YangK2019Mag} to meet the various requirements of IoT networks wherein machine-type users with small data packets and delay-tolerant traffic utilized the GF scheme, and some users with large data packets and delay-sensitive traffic used the GB scheme.
NOMA-aided Semi-GF (SGF) transmission scheme was first explicitly introduced in \cite{DingZ2019TC} to alleviate the collisions and obtain massive connectivity.
A single GB user with multiple GF users to perform NOMA and two contention control mechanisms were proposed to suppress the interference on the GB user from the GF users.
Closed-form expressions for the outage probability (OP) of GF users were derived and the impact of different SIC decoding orders was investigated.
Their results demonstrated the superior performance of NOMA-aided SGF schemes.}
Based on the relationship between the GB user's targeted rate and channel conditions, an adaptive power allocation strategy was proposed to control the transmit power of GB users to ensure that the GB user's signals are always decoded in the second stage of SIC \cite{YangZ2020WCL}.
In \cite{ZhangC2020WCL}, the authors investigated the performance of an uplink SGF system with multiple uniformly distributed GF and GB users considered, in which the GF user whose received power is lower at the BS than that of the GB user was selected to pair with the connected GB user. Closed-form expressions for GB and GF users' exact and approximate ergodic rates were derived.
Further, the authors in \cite{ZhangC2021TWC} studied the effect of random locations of GF users on the performance of NOMA-assisted SGF systems by utilizing stochastic geometry. A dynamic threshold protocol was proposed to reduce the interference to GB users, and the outage performance was analyzed and compared with the open-loop protocol.
	
Relative to the SGF schemes proposed in \cite{DingZ2019TC}, a new QoS-guarantee scheme for NOMA-aided SGF systems was proposed in  \cite{DingZ2021TC} to ensure that the QoS of the GB user is the same as that when it solely occupies the channel.
Closed-form expressions were derived for the exact and asymptotic OP with {the best-user scheduling (BUS) scheme }and a hybrid SIC scheme. The results demonstrated that the proposed scheme could significantly improve the reliability of the GF users' transmissions.
Based on \cite{DingZ2021TC}, a new {adaptive} power control strategy was proposed to solve OP error floors entirely by adjusting the GF user's transmit power to change the decoding order of SIC in \cite{SunY2021TVT}.
{
	In \cite{LuH2021TWC}, the authors analyzed the outage performance of the NOMA-aided SGF systems with multiple randomly distributed GF users with fixed power and dynamic power schemes. 
	As discussed in \cite{LuH2021TWC}, the BUS scheme may lead to a fairness problem because the users closer to the base station may be scheduled more frequently due to weak path loss. To solve the fairness problem, a cumulative distribution function (CDF)-based user scheduling (CUS) scheme was proposed where the GF user with the maximal CDF value will be admitted to the channel. 
}
The analytical expressions for the OP with the CUS and BUS schemes were derived and the impacts of small-scale fading, path loss, and random user locations were jointly investigated.

%

{
	Recently, physical layer security for NOMA systems has attracted considerable attention \cite{LiuY2017TWC} - \cite{WangK2023TWC}.
	In \cite{LiuY2017TWC}, the authors investigated the secrecy performance of NOMA systems. Stochastic geometry was utilized to model the locations of legitimate and illegitimate receivers and the analytical expressions for the exact and asymptotic secrecy outage probability (SOP) for both single-antenna and multi-antenna scenarios were derived.
	In \cite{HeB2017JSAC}, the authors investigated the optimal decoding order, transmission rates, and power allocation in the design of NOMA systems. Two optimization problems were proposed and solved: the transmission power was minimized subject to the secrecy outage and QoS constraints and the minimum secrecy rate was maximized subject to the secrecy outage and transmit power constraints, respectively.
	Their results indicated that the optimal decoding order would not vary with the secrecy outage constraint in the considered problems and the power allocation ratio to the user must be increased as the secrecy constraint becomes more stringent.
	In \cite{LvL2019TIFS}, Lv \textit{et al.} proposed a new NOMA-inspired jamming and forwarding scheme to improve the security of cooperative communication systems and derived the analytical expressions for the lower bound of the ergodic secrecy sum rate (ESSR) and the asymptotic ESSR.
	Three relay selection schemes were proposed to enhance the secrecy performance of the multi-relay cooperative NOMA systems and the analytical expressions for the exact and asymptotic SOP were derived in \cite{LeiH2019TCOM}.
	In \cite{WangHM2020TCOM}, the authors proposed a novel downlink multi-user transmission scheme to meet the heterogeneous service requirements for the airborne NOMA systems consisting of security-sensitive users and QoS-sensitive users. 	The scenario where the QoS-sensitive users act as potential internal eavesdroppers were considered. The achievable secrecy rate was maximized through the joint optimization of user scheduling, power allocation, and trajectory design.
	In \cite{ZhaoN2020TVT}, two new schemes were proposed to enhance the security of airborne NOMA systems by the single user requiring and multiple users requiring security, respectively, and the effectiveness of the proposed schemes in ensuring secure transmissions were analyzed.
	In \cite{LeiH2020TCOM}, the relationship between the reliability and security of a two-user NOMA system was investigated. Considering different decoding capabilities at eavesdroppers and imperfect SIC, the analytical expressions of the SOP under the reliability outage probability constraint were derived.
	In \cite{WangK2023TWC}, the authors investigated the secrecy performance of a NOMA-based MEC system using the hybrid SIC decoding scheme.
	The latency was minimized by jointly optimizing the power allocation, task allocation, and computational resource allocation.
	A reinforcement learning-based and a matching-based algorithm were proposed to solve the optimization problems for the single-user and multi-user scenarios.
}

\subsection{Motivation and Contributions}

{
	Based on the authors' knowledge, there are two main differences between traditional NOMA and SGF schemes:
	1) In traditional NOMA systems, all the NOMA users can utilize the resource blocks, such as time slots or subcarriers.
	In NOMA-based SGF  systems, only the selected GF users based on scheduling schemes are allowed to opportunistically gain access to those resource blocks that GB users would exclusively occupy.
	2) For the conventional NOMA systems, the static SIC technology, either channel state information (CSI)-based SIC or QoS-based SIC, is utilized to cancel inter-user interference.
	The method in SGF systems to enhance spectral efficiency is through the hybrid (dynamic) SIC scheme.	
	For these reasons, although the secrecy performance of NOMA systems has been investigated in many works, the results are not applicable to NOMA-based SGF systems. This is the motivation for this work.
	Technically speaking, it is much more challenging to investigate the secrecy performance with a hybrid (dynamic) SIC scheme than that with a static SIC scheme.
}

We investigate a NOMA-aided SGF system with a single GF user, and then the results have been extended to SGF systems with multiple GF users.
The main contributions of this paper are summarized as follows.
\begin{enumerate}
	\item We analyze the secrecy performance of an uplink NOMA-aided SGF system with a single GF user as a benchmark. The analytical expression for the exact SOP of the GF user is derived. To obtain more insights, we derive asymptotic expressions for the SOP of the GF user in the high transmit signal-to-noise ratio (SNR) regime.	

    \item We further investigate the secrecy performance of NOMA-aided SGF systems with multiple GF users. The analytical expression for the exact and asymptotic SOP with {the BUS scheme} is developed based on order statistics to facilitate the performance analysis. {Monte Carlo simulation results are provided and compared with two different scheduling schemes. The effects of system parameters on the SOP of the considered system are demonstrated and the accuracy of the developed analytical results is verified.}

    \item In contrast to the metrics, such as OP and ergodic rate, derived in \cite{DingZ2019TC}-\cite{LuH2021TWC}, the secrecy performance of SGF systems is investigated in this work. Note that it is much more challenging to obtain the analytical expressions of the SOP relative to that of the OP for SGF systems, especially in the presence of multiple GF users.
\end{enumerate}
	
\subsection{Organization}
The rest of this paper is organized as follows. Section \ref{sec:SystemModel1} describes the considered system model. The SOP of the SGF systems with a single GF user and multiple GF users are analyzed in Sections \ref{sec:SOP} and \ref{sec:SystemModel2}, respectively. Section \ref{sec:RESULTS} presents the numerical and simulation results to demonstrate the analysis and the paper is concluded in Section \ref{sec:Conclusion}.
The notations utilized in this paper are summarized in Table I, which is shown at the top of this page.
\begin{table}[!t]
	\caption{\textit{List of Notations}}
	\begin{center}
		\begin{tabular}{|c| c | }
			\hline
			\textbf{Notation}   	& \textbf{Description}\\
			\hline
			$K$                 		& Number of the GF users \\
			\hline
			${N}$              			&{The number of antenna on $E$}\\
			\hline
			$g_B (g_F)$             & Channel coefficient between $U_B (U_F)$ and  $S$ \\
			\hline
			$g_k$                 	& Channel coefficient between $k$-th $U_F$ and  $S$ \\
			\hline
			${{\left| {{H_E}} \right|}^2}$  & Channel gains between $U_k$ and  $E$  \\
			\hline
			{
				$g_{{k_i}}$ }            & {Channel coefficient between $k$-th GF user and $i$-th receive antenna at $E$}\\
			\hline
			{
				$r_F (r_B)$}             & {The distance from $U_k$ ($U_B$) to  $S$  }\\
			\hline
			{
				$r_E$  }                 & {The distance from $U_k$  to   $E$  }\\
			\hline
			{
				$\alpha$ }               & {Path loss exponent } \\
			\hline
			$R_B$                	& Target rate of $U_B$\\
			\hline
			$R_{th}$             	& Secrecy target rate of $U_F$\\
			\hline
			${\sigma ^2}$
			& The noise power\\
			\hline
			$P_B(P_F)$
			& Transmit power of $U_B(U_F)$\\
			\hline
			$\rho_B(\rho_F)$
			& Transmit SNR of $U_B(U_F)$\\
			\hline
			${f_X}\left( \cdot \right)$
			& Probability density function of $X$\\
			\hline
			${F_X}\left(\cdot \right)$
			& Cumulative distribution function of $X$\\
			\hline
		\end{tabular}
	\end{center}
	\label{table1}
\end{table}

\section{System Model}
\label{sec:SystemModel1}

\subsection{NOMA-aided Semi-GF Systems}

\begin{figure}[!t]
	\centering
	\includegraphics[width = 2.45in]{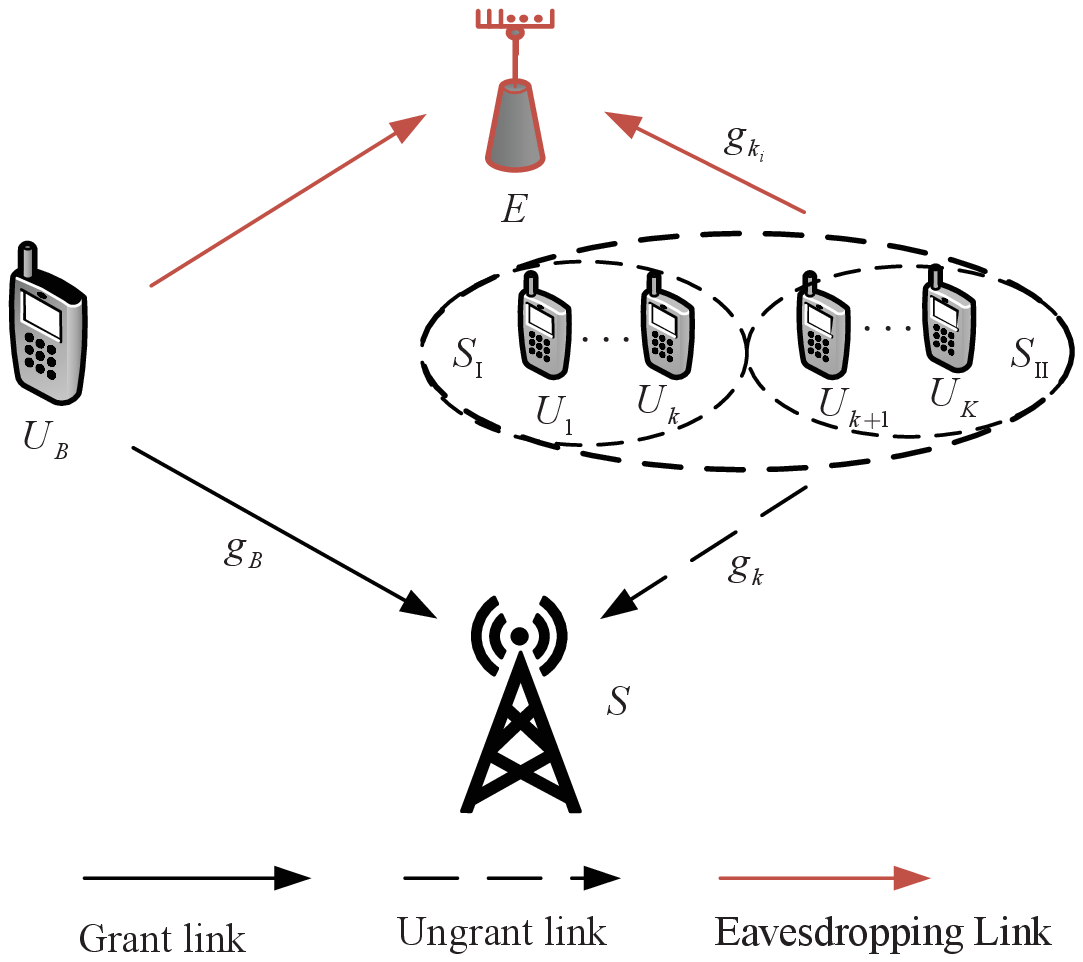}
	\caption{System model consisting of a BS ($S$), a GB user ($U_B$),  $K$ GF users ($U_k$), and an eavesdropper ($E$) {with $N$ antennas. The other  nodes are equipped with single antenna}.}
	\label{figmodel2}
\end{figure}
Consider an uplink SGF system illustrated in Fig. \ref{figmodel2}, a GB user ($U_B$) transmits signals to the BS ($S$), and the channel is re-used by $K$ GF user ($U_k, k=1,\cdots, K$) in SGF mode.
In other words, $U_k$ is allowed to utilize the resource block that would be solely occupied by $U_B$ employing NOMA technology while $U_B$'s QoS experience is the same as when it occupies the channel alone.
All the GF users are assumed to transmit signals with the same power ${\rho_F}$ and the channel gains are ordered as ${\left| {{h_1}} \right|^2} \le  \cdots  \le {\left| {{h_K}} \right|^2}$, where
{${\left| {{h_1}} \right|^2} = \mathop {\min }\limits_{1 \le k \le K} \left( {\frac{{{{\left| {{g_k}} \right|}^2}}}{{r_k^\alpha }}} \right)$
and
${\left| {{h_K}} \right|^2} = \mathop {\max }\limits_{1 \le k \le K} \left( {\frac{{{{\left| {{g_k}} \right|}^2}}}{{r_k^\alpha }}} \right)$}
where
$g_k$ denotes $U_k$'s channel coefficient,
{
	$r_k$ denotes the distance between $U_k$ and $S$, and 
	$\alpha $ signifies the path loss exponent.}
All the channels are assumed to undergo an independent identically and quasi-static Rayleigh fading model.
{To facilitate performance analysis, it is assumed that all the GF users are located in a small size cluster, such that
	the distances between $U_k$ and $S$ are same $\left( {{r_k} = {r_F}} \right)$.
}

The received signal at $S$ is expressed as ${y_B} = \sqrt {{P_B}} {h_B}{x_B} + \sqrt {{P_F}} {h_k}{x_F} + n$,
where
$P_i$ ($i \in \left\{ {B,F} \right\}$) denotes the transmit power,
{${{\left| {{h_B}} \right|^2}} = \frac{{{{\left| {{g_B}} \right|}^2}}}{{r_B^\alpha }}$,
	$g_B$ denotes $U_B$'s channel coefficient,
	$r_B$ denotes the distance between $U_B$ and $S$, }
$x_i$ is the signals from $U_i$ with unit power, i.e., $\mathbb{E}\left[ {{{\left| {{x_i}} \right|}^2}} \right] = 1$,
and $n$ is the additive white Gaussian noise (AWGN) with zero mean and variance ${\sigma ^2}$.

In this work, {the BUS scheme}
is considered, which means the GF user achieving the maximum data rate is scheduled to transmit signals \cite{DingZ2021TC, LuH2021TWC}.
The admission procedure consists of the following steps \cite{LuH2021TWC}:	
1) The $S$ sends pilot signals,
2) Each user estimates its own channel state information (CSI),
3) $U_B$ feedbacks its transmit SNR, target rate, and CSI to  $S$,
4) The $S$ calculates $U_B$'s decoding threshold and broadcasts $U_B$'s effective received SNR and decoding threshold to all GF users,
5) Each GF user calculates its transmit data rate, and
6) Each GF user sets its back-off time, which is a strictly decreasing function of the user's data rate. Then the GF user with the maximal data rate will be admitted to transmitting through distributed contention control protocol \cite{DingZ2019TC}.

To ensure the $U_B$'s QoS, there must have ${\log }_2 \left( {1 + \frac{{{\rho_B}{{\left| {{h_B}} \right|}^2}}}{{1 + \tau \left( {{{\left| {{h_B}} \right|}^2}} \right)}}} \right) \ge {R_B}$,
{where ${\rho _B} = \frac{{{P_B}}}{{{\sigma ^2}}}$,}
${R_B}$ denotes the target data of $U_B$ and $\tau \left( {{{\left| {{h_B}} \right|}^2}} \right) = \max \left\{ {0,{\tau _{B}}} \right\}$ denotes the maximum interference power tolerated when $U_B$'s signal is decoded during the first stage of SIC \cite{DingZ2021TC}
\footnote{
{
	The availability of perfect CSI is crucial in deciding the decoding order and the implementation of hybrid SIC.
The imperfect power gain caused by imperfect CSI could lead to an inappropriate SIC decoding order being selected and SIC decoding failures occurring \cite{DingZ2020CL2}.
}

},
$ {\tau _{B}} = \frac{{{{\left| {{h_B}} \right|}^2}}}{{{\alpha _B}}} - 1$,
${\alpha _B} = \frac{{\varepsilon _B}}{{{\rho_B}}}$,
${\varepsilon _B} = {\theta _B} - 1$,
and
${\theta _B}= {2^{{R_B}}} $.

$S$ first broadcasts $\tau \left( {{{\left| {{h_B}} \right|}^2}} \right)$ before scheduling.
By comparing {their received power of GF's signals on $S$} to $\tau \left( {{{\left| {{h_B}} \right|}^2}} \right)$, all the GF users are divided into two groups (${{\cal S}_{\mathrm{I}}}$ and ${{\cal S}_{\mathrm{II}}}$).
\begin{itemize}	
	
\item For ${U_k} \in {{\cal S}_{\mathrm{I}}}$ $\left( {1 \le k \le \left| {{{\cal S}_{\mathrm{I}}}} \right| \le K} \right)$, they experience ${\rho_F}{\left| {{h_k}} \right|^2} > \tau \left( {{{\left| {{h_B}} \right|}^2}} \right)$ {with ${\rho _F} = \frac{{{P_F}}}{{{\sigma ^2}}}$}, which will lead to ${\log }_2 \left( {1 + \frac{{{\rho_B}{{\left| {{h_B}} \right|}^2}}}{{1 + \tau \left( {{{\left| {{h_B}} \right|}^2}} \right)}}} \right) < {R_B}$. This signifies that ${U_k}$'s signals must be decoded before decoding $U_B$'s signals to guarantee that $U_B$'s QoS experience is the same as when it occupies the channel alone
\footnote{
	Since the signals from GF users were decoded before decoding those from the GB user in this case, additional latency for GB users will occur. Thus, the GF scheme in the NOMA-aided SGF systems are suitable for such applications with more stringent QoS than latency requirements. In other words, the NOMA-aided SGF systems aim to make a channel reserved by a GB user that can be shared by GF users, improving connectivity and spectral efficiency through collaboration between GF transmission and conventional GB schemes.}.	
{
	Then, the achievable rate of $U_B$ and $U_k$ are expressed as ${R_B^{\mathrm{I}}} = {\log _2}\left( {1 + {\rho _B}{{\left| {{h_B}} \right|}^2}} \right)$ and ${R_k^{\mathrm{I}}} = {\log }_2 \left( {1 + \frac{{{\rho_F}{{\left| {{h_k}} \right|}^2}}}{{1 + {\rho_B}{{\left| {{h_B}} \right|}^2}}}} \right)$, respectively.
}

\item For those GF users in ${U_k} \in {{\cal S}_{\mathrm{II}}}$ $\left( {1 \le k \le \left| {{{\cal S}_{\mathrm{II}}}} \right| \le K} \right)$, they experience ${\rho_F}{\left| {{h_k}} \right|^2} < \tau \left( {{{\left| {{h_B}} \right|}^2}} \right)$, which will lead to ${\log }_2 \left( {1 + \frac{{{\rho_B}{{\left| {{h_B}} \right|}^2}}}{{1 + \tau \left( {{{\left| {{h_B}} \right|}^2}} \right)}}} \right) > {R_B}$.
This signifies that the GF user's signal in this group will be decoded at either the first or the second stage of SIC. Accordingly, ${U_k}$ will achieve a data of ${R_k^{\mathrm{I}}} = {\log }_2 \left( {1 + \frac{{{\rho_F}{{\left| {{h_k}} \right|}^2}}}{{1 + {\rho_B}{{\left| {{h_B}} \right|}^2}}}} \right)$ or ${R_k^{{\mathrm{II}}}} = {\log }_2 \left( {1 + {\rho_F}{{\left| {{h_k}} \right|}^2}} \right)$. Due to ${R_k^{\mathrm{II}}} > {R_k^{\mathrm{I}}}$, to achieve the maximum data rate at the GF user, $U_B$'s signal must be decoded during the first stage of SIC \cite{DingZ2021TC, LuH2021TWC}.
{
	Thus, the achievable rate of $U_B$ and $U_k$ are expressed as
	${R_B^{\mathrm{II}}} = {\log _2}\left( {1 + \frac{{{\rho _B}{{\left| {{h_B}} \right|}^2}}}{{1 + {\rho _F}{{\left| {{h_k}} \right|}^2}}}} \right)$ and
	$R_k^{{\mathrm{II}}} = {\log _2}\left( {1 + {\rho _F}{{\left| {{h_k}} \right|}^2}} \right)$, respectively.
}
\end{itemize}
Then, the achievable rate of ${U_k}$ ($1 \le k \le K - 1$) is expressed as
\begin{small}
\begin{equation}
	{R_k} 
	= \left\{ {\begin{array}{*{20}{c}}
			{R_K^{\mathrm{I}},}&{\left| {{{\cal S}_{\mathrm{II}}}} \right| = 0,}\\
			{R_K^{\mathrm{II}},}&{\left| {{{\cal S}_{\mathrm{II}}}} \right| = K,}\\
			{\max \left\{ {R_K^{\mathrm{I}},R_k^{\mathrm{II}}} \right\},}&{\left| {{{\cal S}_{\mathrm{II}}}} \right| = k.}
	\end{array}} \right.
	\label{ratek}
\end{equation}
\end{small}

{{It must be noted that only one GF user is selected to access the channel.
The grouping stated before is logically grouped for analysis of the achievable rate of the selected GF user.
Specifically, the signals from the users in different groups have different decode orders at the base station.}}

\begin{remark}
	It must be noted the SGF scheme only guarantees that admitting the GF user is transparent to the GB user whose QoS experience is the same as when it occupies the channel alone. In other words, the SGF scheme does not always guarantee no outage for the GB user. Further, the outage of the GB user in this case $\left( {{{\left| {{h_B}} \right|}^2} < \alpha _B} \right)$ does not signify outage of the GF user.
\end{remark}
\begin{remark}
	$\tau \left( {{{\left| {{h_B}} \right|}^2}} \right) = \max \left\{ {0,{\tau _{B}}} \right\}$ denotes the maximum interference power tolerated when $U_B$'s signal is decoded during the first stage of SIC. Based on the definition of ${\tau _{B}}$, it can be observed that $\alpha _B$ is the threshold when $U_B$ occupies the channel alone. Specifically, { due to $ {\tau _B} = \frac{{{{\left| {{h_B}} \right|}^2}}}{{{\alpha _B}}} - 1 < 0 \Leftrightarrow {\left| {{h_B}} \right|^2} < {\alpha _B}$, ${\alpha _B}$ signifies the reliability threshold when $U_B$ occupies the channel alone.  ${\left| {{h_B}} \right|^2} < {\alpha _B}$ denotes reliability outage occurs on $U_B$ due to the weakness of the GB link} and {${\tau _B} > 0 \Leftrightarrow {\left| {{h_B}} \right|^2} > {\alpha _B}$} denotes the channels can be shared with $U_F$ under SGF scheme.		
\end{remark}	

In this work, we consider the worst-case security scenario wherein $E$ is equipped with $N$ antennas using maximal ratio combining (MRC) scheme to fully decode the users' information \footnote{
	{In this case, it is assumed that the eavesdropper has a powerful multi-user detection capability (e.g. parallel interference cancellation) so that the received data stream can be distinguished and the interference generated by the superimposed signals can be subtracted \cite{Verdu1998Book}. As stated in \cite{LiuY2017TWC, LeiH2020TCOM}, this case is the worst-case scenario where the decoding capability of the eavesdropper has been overestimated, which makes the analysis and design robust for the practical scenario and is sensible and desirable from a security perspective.}}.
Then, the eavesdropping rate is expressed as ${R_E} = {\log _2}\left( {1 + {\rho_F}{{\left| {{H_E}} \right|}^2} } \right)$, where
$ {{{\left| {{H_{E}}} \right|}^2}} \buildrel \Delta \over = \sum\limits_{i = 1}^N {{{\left| {{h_{{k_i}}}} \right|}^2}}$,
{
${\left| {{h_{{k_i}}}} \right|^2} = \frac{{{{\left| {{g_{{k_i}}}} \right|}^2}}}{{r_{E}^\alpha }}$,
${{\left| {{g_{{k_i}}}} \right|}^2}$ denotes channel coefficient between $k$-th GF user and $i$-th receive antenna at $E$, and $r_{E}$ denotes the distance between the GF users and $E$.
}

\subsection{Statistical Properties of Channel Power Gains}

This subsection provides the statistical law of channel power gains, laying the performance analysis foundation.
The probability density function (PDF)  of $ {{{\left| {{H_{E}}} \right|}^2}} $ is  expressed as
{
${f_{{H_E}}}(x) = \frac{{r_E^{N\alpha }}}{{\Gamma \left( N \right)}}{x^{N - 1}}{e^{ - r_E^{N\alpha }x}}$,
}
where
$\Gamma(z)=\int_{0}^{\infty} e^{-t} t^{z-1} dt$ is the  Gamma function as defined by \cite[(8.310.1)]{Gradshteyn2007Book}.
The CDF of ${\left| {{h_K}} \right|^2} $ is expressed as
{
${F_{{{\left| {{h_K}} \right|}^2}}}\left( x \right) = \sum\limits_{i = 0}^K {{\varphi _i}} {e^{ - ir_F^\alpha x}}$,
}
where
${\varphi _i} = \left( {_{\,\,i}^K} \right){\left( { - 1} \right)^i}$,
and $\left( {_{\,\, i}^K} \right) = \frac{{K!}}{{i!\left( {K - i} \right)!}}$.

The joint PDF of ${\left| {{h_i}} \right|^2}$ and ${\left| {{h_j}} \right|^2}$$\left( {1 \leqslant i < j \leqslant K} \right)$ is expressed as \cite{DingZ2021TC}
\begin{equation}
	{f_{{{\left| {{h_i}} \right|}^2},{{\left| {{h_j}} \right|}^2}}}\left( {x,y} \right)
	= \sum\limits_{n = 0}^{j - i - 1} {\sum\limits_{m = 0}^{i - 1} {{\phi _1}{e^{ - {\phi _2}x - {\phi _3}y}}} },
	\label{pdfh1k}
\end{equation}
where
{
${\phi _1} = \frac{{K!{{\left( { - 1} \right)}^{m + n}}\left( {_{\quad n}^{j - i - 1}} \right)\left( {_{\,m}^{i - 1}} \right)r_F^{2\alpha }}}{{\left( {i - 1} \right)!\left( {K - j} \right)!\left( {j - i - 1} \right)!}}$,
${\phi _2} =r_F^\alpha \left( {m + j - i - n} \right)$,
}
and
{
${\phi _3} = r_F^\alpha \left( {K - j + n + 1} \right)$.
}
Then, the joint CDF of ${\left| {{h_i}} \right|^2}$ and ${\left| {{h_j}} \right|^2}$ $\left( {1 \leqslant i < j \leqslant K} \right)$ is obtained as
	\begin{equation}
		\begin{aligned}
			{F_{{{\left| {{h_i}} \right|}^2},{{\left| {{h_j}} \right|}^2}}}\left( {x,y} \right) 
			&= \sum\limits_{n = 0}^{j - i - 1} {\sum\limits_{m = 0}^{i - 1} {\frac{{{\phi _1}}}{{{\phi _3}}}} } \left( {\frac{{{e^{ - \left( {{\phi _2} + {\phi _3}} \right)x}}}}{{{\phi _2} + {\phi _3}}} + \frac{{{\phi _3}{e^{ - \left( {{\phi _2} + {\phi _3}} \right)y}}}}{{{\phi _2}\left( {{\phi _2} + {\phi _3}} \right)}} - \frac{{{e^{ - {\phi _2}x - {\phi _3}y}}}}{{{\phi _2}}}} \right).
		\end{aligned}
		\label{iandj}
	\end{equation}
When $i = 1, j = K$, we obtain
{
\begin{equation}
	{f_{{{\left| {{h_1}} \right|}^2},{{\left| {{h_K}} \right|}^2}}}\left( {x,y} \right) = \sum\limits_{n = 0}^{K - 2} {{\mu _0}{e^{ - r_F^\alpha \left( {K - n - 1} \right)x}}{e^{ - r_F^\alpha \left( {n + 1} \right)y}}} ,
	\label{pdfh1K}
\end{equation}
}
and
{
\begin{equation}
	\begin{aligned}
		{F_{{{\left| {{h_1}} \right|}^2},{ {\left| {{h_K}} \right|}^2}}}\left( {x,y} \right) 
		& = \sum\limits_{n = 0}^{K - 2} {{\mu _1}{e^{ - Kr_F^\alpha x}} + {\mu _2}{e^{ - Kr_F^\alpha y}} - {\mu _3}{e^{ - \left( {K - n - 1} \right)r_F^\alpha x}}{e^{ - \left( {n + 1} \right)r_F^\alpha y}}},
		\label{cdfh1K}
	\end{aligned}
\end{equation}
}
respectively,
where
{
${\mu _0} = {\rm{ }}\frac{{K!{{\left( { - 1} \right)}^n}\left( {_{\;n}^{K - 2}} \right)r_F^{2\alpha }}}{{\left( {K - 2} \right)!}}$,
${\mu _1} = \frac{{{\mu _0}}}{{r_F^{2\alpha }K\left( {n + 1} \right)}}$,
${\mu _2} = \frac{{{\mu _0}}}{{r_F^{2\alpha }K\left( {K - n - 1} \right)}}$,
${\mu _3} = \frac{{{\mu _0}}}{{r_F^{2\alpha }\left( {n + 1} \right)\left( {K - n - 1} \right)}}$.
}

The joint PDF and CDF of ${{{\left| {{h_k}} \right|}^2}}$, ${{{\left| {{h_{k + 1}}} \right|}^2}}$ $\left( {1 \le k \le K - 2} \right)$, and ${{{\left| {{h_K}} \right|}^2}}$ is given as \cite{DingZ2021TC}
\begin{equation}
	\begin{aligned}
		{f_{{{\left| {{h_k}} \right|}^2},{{\left| {{h_{k + 1}}} \right|}^2},{{\left| {{h_K}} \right|}^2}}}\left( {x,y,z} \right)
		= \sum\limits_{n = 0}^{K - k - 2} {\sum\limits_{m = 0}^{k - 1} {{\varsigma _0}{e^{ - {A_0}x}}{e^{ - {B_0}y}}{e^{ - {C_0}z}}} },
		\label{pdfthree}
	\end{aligned}
\end{equation}
and
\begin{equation}
	\begin{aligned}
		{F_{{{\left| {{h_k}} \right|}^2},{{\left| {{h_{k + 1}}} \right|}^2},{{\left| {{h_K}} \right|}^2}}}\left( {x,y,z,w} \right)
		& = \sum\limits_{n = 0}^{K - k - 2} {\sum\limits_{m = 0}^{k - 1} {\sum\limits_{i = 1}^6 {{\varsigma _i}{e^{ - \left( {{A_i}x + {B_i}y + {C_i}z + {W_i}w} \right)}}} } } ,
		\label{cdfthree}
	\end{aligned}
\end{equation}
respectively,
where
{
${\varsigma _0} = \frac{{K!{{\left( { - 1} \right)}^{m + n}}\left( {_n^{K - k - 2}} \right)\left( {_m^{k - 1}} \right)r_F^{3\alpha }}}{{\left( {K - k - 2} \right)!\left( {k - 1} \right)!}}$,	${A_0} = r_F^\alpha \left( {m + 1} \right)$,
${B_0} = r_F^\alpha \left( {K - k - n - 1} \right)$,
${C_0} = r_F^\alpha \left( {n + 1} \right)$,
}
${W_0} = {B_0} + {C_0}$,
${\varsigma _1} =  - {\varsigma _2} = \frac{{\varsigma _0} }{{{A_0}{B_0}{C_0}}}$,
${\varsigma _3} =  - {\varsigma _4} =  - \frac{{\varsigma _0} }{{{A_0}{B_0}{W_0}}}$,
${\varsigma _5} =  - {\varsigma _6} =  - \frac{{\varsigma _0} }{{{A_0}{C_0}{W_0}}}$,
${A_1} = {A_3} = {A_5} = 0$,
${A_2} = {A_4} = {A_6} = {A_0}$,
${B_1} = {B_3} = {B_5} = {A_0}$,
${B_2} = {B_4} = {B_6} = 0$,
${C_1} = {C_2} = {B_0}$,
${C_3} = {C_4} = 0$,
${C_5} = {C_6} = {W_0}$,
${W_1} = {W_2} = {C_0}$,
${W_3} = {W_4} = {W_0}$,
and
${W_5} = {W_6} = 0$.

For $k = K - 1$, we have ${\left| {{h_{k + 1}}} \right|^2} = {\left| {{h_K}} \right|^2}$, the joint PDF and CDF of ${\left| {{h_{K - 1}}} \right|^2}$ and ${\left| {{h_K}} \right|^2}$ are expressed as
{
\begin{equation}
	{f_{{{\left| {{h_{K - 1}}} \right|}^2},{{\left| {{h_K}} \right|}^2}}}\left( {x,y} \right) = \sum\limits_{n = 0}^{K - 2} {{\mu _0}{e^{ - {C_0}x}}{e^{ - r_F^\alpha y}}},
	\label{pdftwo}
\end{equation}
}
and
\begin{equation}
	\begin{aligned}
		{F_{{{\left| {{h_{K - 1}}} \right|}^2},{{\left| {{h_K}} \right|}^2}}}\left( {x,y,z,w} \right)
		& = \sum\limits_{n = 0}^{k - 2} {\sum\limits_{j = 1}^4 {\frac{{{\mu_0}}}{{{C_0}}}{{\left( { - 1} \right)}^{j + 1}}{e^{ - \left( {{a_j}x + {b_j}y + {c_j}z + {q_j}w} \right)}}} },
		\label{cdftwo}
	\end{aligned}
\end{equation}
respectively,
where
${a_1} = {a_4} = 0,{a_2} = {a_3} = {C_0},{b_1} = {b_4} = {C_0},{b_2} = {b_3} = 0,{c_1} = {c_2} = 0$,
${{c_3} = {c_4} = r_F^\alpha}$,
{${q_1} = {q_2} = r_F^\alpha $},
and
${q_3} = {q_4} = 0$.
	
\section{Secrecy Outage Probability Analysis with A Single Grant-Free User}
\label{sec:SOP}

In this section, the secrecy performance of the SGF systems with a single GF user is investigated to pay the road to the performance analysis of SGF systems with multiple GF users.
When $K = 1$, there is no need to consider scheduling.
{It must be noted that this scenario can also be viewed as the multiple-GF-user SGF systems using a random user scheduling (RUS) scheme.}
The achievable rate of ${U_F}$ in (\ref{ratek}) is rewritten as
\begin{equation}
	{R_F} = \left\{ {\begin{array}{*{20}{c}}
			{R_F^{\mathrm{I}},}&{ {{\rho_F}{\left| {{h_F}} \right|^2} > \tau \left( {{{\left| {{h_B}} \right|}^2}} \right)},}\\
			{R_F^{\mathrm{II}},}&{ {{\rho_F}{\left| {{h_F}} \right|^2} < \tau \left( {{{\left| {{h_B}} \right|}^2}} \right)},}\\
	\end{array}} \right.
	\label{rateF}
\end{equation}
where
$R_F^{\mathrm{I}} = {\log _2}\left( {1 + \frac{{{\rho_F}{{\left| {{h_F}} \right|}^2}}}{{1 + {\rho_B}{{\left| {{h_B}} \right|}^2}}}} \right)$
and
$R_F^{\mathrm{II}} = {\log _2}\left( {1 + {\rho_F}{{\left| {{h_F}} \right|}^2}} \right)$,
which denote the achievable rate at $U_F$ in scenarios when ${U_F}$'s signal is decoded at the first and second stages of the SIC, respectively.
It must be noted that when there is an outage on $U_B$, the ${U_F}$' signals must be decoded at the first stage of the SIC.

{The user $U_j$'s achievable secrecy rate is expressed as $R_{s, j}^i = {\left[ {R_j^i - {R_E}} \right]^ + }$\cite{Bloch2008TIT},
where
$j \in \left\{ {F,B} \right\}$,
$i \in \left\{ {{\mathrm{I}},{\mathrm{II}}} \right\}$ and ${\left[ x \right]^ + } = \max \left\{ {x,0} \right\}$.
SOP denotes the probability that the maximum achievable secrecy rate is less than a target secrecy rate \cite{Bloch2008TIT}.
Based on (\ref{rateF}), the SOP for $U_F$ is given as
\begin{equation}
		\begin{aligned}	
			{P_{out, F}} = \underbrace {\Pr \left\{ {R_{s, F}^{\mathrm{I}} < {R_{th}},{\rho_F}{{\left| {{h_F}} \right|}^2} > \tau \left( {{{\left| {{h_B}} \right|}^2}} \right)} \right\}}_{P_{out}^{\mathrm{I}}}
			+ \underbrace {\Pr \left\{ {R_{s, F}^{\mathrm{II}} < {R_{th}},{\rho_F}{{\left| {{h_F}} \right|}^2} < \tau \left( {{{\left| {{h_B}} \right|}^2}} \right)} \right\}}_{P_{out}^{\mathrm{II}}},
		\label{SOPGF}
		\end{aligned}	
\end{equation}
where}
${{R_{th}}}$ represents the secrecy threshold rate,
${P_{out}^{\mathrm{I}}}$ denotes ${U_F}$'s signal is decoded at the first stage,
and
${P_{out}^{\mathrm{II}}}$ denotes ${U_F}$'s signal is decoded at the second stage.

{Similarly, the SOP for $U_B$ is expressed as
\begin{equation}
		{P_{out,B}} = \Pr \left\{ {R_{s,B}^{\mathrm{I}} < {R_{th}},{\rho _F}{{\left| {{h_F}} \right|}^2} > \tau \left( {{{\left| {{h_B}} \right|}^2}} \right)} \right\} + \Pr \left\{ {R_{s,B}^{{\mathrm{II}}} < {R_{th}},{\rho _F}{{\left| {{h_F}} \right|}^2} < \tau \left( {{{\left| {{h_B}} \right|}^2}} \right)} \right\}.
		\label{SOPGB}
\end{equation}}

{It can be observed that the analysis of the secrecy outage probability of the GB user is similar to that of the GF user, expressed in Eq. (\ref{SOPGF}).
	Due to space limitations, the analysis of the $U_B$'s secrecy outage probability is regrettably omitted here.
	In this work, the SOP of the NOMA-aided SGF system is equivalent to the SOP of the GF user, unless stated otherwise. }

\begin{remark}
{
	It must be noted that $\rho_B$ affects the SNR/SINR of $U_B$ and the maximum interference that $U_B$ can tolerate when $U_B$'s signal is decoded during the first stage of SIC simultaneously.
	In the lower-$\rho_B$ region, the signals from $U_F$ must be decoded in the first stage of SIC.
	With the increase of $\rho_B$, the interference to $U_F$ increases and the secrecy performance worsens.
	In the larger-$\rho_B$ region, the signals from $U_F$ will be decoded in the second stage of SIC. There is no interference from $U_B$ to $U_F$. Then, the SOP decreases to a constant.
	Thus, there is a worst $\rho_B$ for the security of $U_F$.
}
\end{remark}

\begin{remark}
{
	In contrast, $\rho_F$ affects the SNR/SINR of $U_F$ and $E$ simultaneously.
	In the lower-$\rho_F$ region, the signals from $U_B$ will be decoded in the first stage of SIC.
	For a small $\rho_F$, there is $\Pr \left\{ {{\rho_F}{{\left| {{h_F}} \right|}^2} < \tau \left( {{{\left| {{h_B}} \right|}^2}} \right)} \right\} > \Pr \left\{ {{\rho_F}{{\left| {{h_F}} \right|}^2} > \tau \left( {{{\left| {{h_B}} \right|}^2}} \right)} \right\}$. Thus, ${P_{out}^{\mathrm{II}}}$ is the main part of ${P_{out}}$ in the lower-$\rho_F$ region while ${P_{out}^{\mathrm{I}}}$ is the main part of ${P_{out}}$ in the larger-$\rho_F$ region.
	Based on the results in \cite{LeiH2017CL}, increasing $\rho_F$ will enhance the security of $U_F$ in the lower-$\rho_F$ region.
	In the larger-$\rho_F$ region, the signals from $U_F$ will be decoded in the first stage of SIC.
	Although both the SINR of $U_F$ and SNR of $E$ improve with increasing $\rho_F$, the SINR of $U_F$ improves slower than the SNR of $E$, so the security of the SGF system deteriorates. Thus, there is an optimal $\rho_F$ to minimize the SOP of $U_F$.
}
\end{remark}

\begin{remark}
{
Furthermore, the effect from $r_B$ on $P_{out}^{\mathrm{I}}\left( {P_{out}^{{\mathrm{II}}}} \right)$ is the opposite of the effect from $\rho_B$,
while the effect of $\rho_B$ and $r_B$ on the secrecy performance of the SGF systems are similar.
$r_F$ only affects the SINR/SNR of $U_F$. Larger $r_F$ denotes stronger path loss on $U_F$ thereby higher SOP.
$r_E$ only affects the SNR of $E$ where larger $r_E$ denotes stronger path loss on $E$ and hence lower SOP.
}
\end{remark}

The following theorem provides an exact expression for the SOP achieved applicable to the considered SGF scheme.
\begin{theorem}
The SOP of ${U_F}$ is expressed as
\begin{equation}
	\begin{aligned}	
		{P_{out}} = \left\{ {\begin{array}{*{20}{c}}
				{P_{out}^{\mathrm{I},1} + P_{out}^{\mathrm{I},21} + P_{out}^{\mathrm{II}},}&{{\varepsilon _B}{\varepsilon _{th}} < 1,}\\
				{P_{out}^{\mathrm{I},1} + P_{out}^{\mathrm{I},22} + P_{out}^{\mathrm{II}},} &{{\varepsilon _B}{\varepsilon _{th}} > 1,}
		\end{array}} \right.
		\label{Theorem1}		
	\end{aligned}
\end{equation}
where
{
$P_{out}^{\mathrm{I},1} =1 - {e^{ - r_B^\alpha {\alpha _B}}} - \frac{{r_B^\alpha r_E^{N\alpha }{e^{ - r_F^\alpha {\alpha _{th}}}}{\omega _1}\left( {{\lambda _1},{\lambda _2},{\lambda _3}} \right)}}{{\Gamma \left( N \right)}}$,
$P_{out}^{{\mathrm{I}},{\mathrm{21}}} =\frac{{{e^{ - r_B^\alpha {\alpha _B}}}r_B^\alpha \Gamma \left( {N,r_E^\alpha {\alpha _1}} \right)}}{{{\varepsilon _2}\Gamma \left( N \right)}} + {e^{\frac{{r_F^\alpha }}{{{P_F}}}}}r_B^\alpha r_E^{N\alpha }\frac{{{\omega _3}\left( {0,{\varepsilon _2},r_E^\alpha } \right)}}{{\Gamma \left( N \right)}}\\ - {e^{ - r_F^\alpha {\alpha _{th}}}}r_B^\alpha r_E^{N\alpha }\frac{{{\omega _2}\left( {{\lambda _1},{\lambda _2},{\lambda _3}} \right) + {\omega _3}\left( {{\lambda _1},{\lambda _2},{\lambda _3}} \right)}}{{\Gamma \left( N \right)}}$,
$P_{out}^{{\mathrm{I}},{\mathrm{22}}} = r_B^\alpha \frac{{{e^{ - r_B^\alpha {\alpha _B}}}}}{{{\varepsilon _2}}} - r_B^\alpha r_E^{N\alpha }\frac{{{e^{ - r_F^\alpha {\alpha _{th}}}}{\omega _4}\left( {{\lambda _1},{\lambda _2},{\lambda _3}} \right)}}{{\Gamma \left( N \right)}}$,
$P_{out}^{{\mathrm{II}}} = \frac{{r_F^\alpha {e^{ - r_B^\alpha {\alpha _B}}}}}{{r_B^\alpha {P_{\rm{F}}}{\alpha _B} + r_F^\alpha }}\\ - \frac{{r_E^{N\alpha }r_F^\alpha {e^{ - \left( {r_F^\alpha {\alpha _{th}} + r_B^\alpha {\varepsilon _1}} \right)}}}}{{\left( {r_F^\alpha {\rho _F}{\alpha _B} + r_B^\alpha } \right){{\left( {r_B^\alpha {\rho _F}{\varepsilon _1} + {\lambda _3}} \right)}^N}}}$,
}
${\alpha _{th}} = \frac{{\varepsilon _{th}}}{{{\rho_F}}}$,
${\varepsilon _{th}} = {\theta _{th}} - 1$,
${\theta _{th}}= {2^{{R_{th}}}} $,
{
${\lambda _1} = r_F^\alpha {\rho _B}{\theta _{th}}$,
${\lambda _2} = r_F^\alpha {\rho _B}{\alpha _{th}} + r_B^\alpha $,
${\lambda _3} = r_F^\alpha {\theta _{th}} + r_E^\alpha $,
${\alpha _1} = \frac{{1 - {\varepsilon _B}{\varepsilon _{th}}}}{{{\rho_F}{\theta _{th}}{\varepsilon _B}}}$,
${\varepsilon _1} = {\alpha _B}{\theta _{th}}$,
${\varepsilon _2} = \frac{{r_F^\alpha }}{{{P_F}{\alpha _B}}} + r_B^\alpha $,
}
$\Gamma \left( { \cdot , \cdot } \right)$ is the upper incomplete Gamma function, as defined by \cite[(8.350.2)]{Gradshteyn2007Book},
${\omega _1}\left( {a,b,c} \right) = \frac{{{b^{N - 1}}\Gamma \left( N \right)}}{{{a^N}}}{e^{\frac{{bc}}{a}}}$ $\times \left( {\Gamma \left( {1 - N,\frac{{bc}}{a}} \right) - \Gamma \left( {1 - N,b{\alpha _B} + \frac{{bc}}{a}} \right)} \right)$,
${\omega _2}\left( {a,b,c} \right) = \frac{{{b^{N - 1}}\Gamma \left( N \right)}}{{{a^N}}}{e^{\frac{{bc}}{a}}}  \Gamma \left( {1 - N,b{\alpha _B} + \frac{{bc}}{a}} \right) - {e^{ - b{\alpha _B}}}\Delta $,
${\Delta} = \frac{{\pi {\alpha _1}}}{{2R}}\sum\limits_{r = 1}^R {\frac{{\sqrt {1 - \ell _r^2} }}{{a{\hbar _r} + b}}\hbar _r^{N - 1}{e^{ - \left( {a{\alpha _B} + c} \right){\hbar _r}}}}$,
${\omega _3}\left( {a,b,c} \right) = \frac{{{b^{N - 1}}\Gamma \left( N \right)}}{{{a^N}}}{e^{\frac{{bc}}{a}}}\Gamma \left( {1 - N,b{\alpha _B} + \frac{{bc}}{a}} \right) - {\omega _2}\left( {a,b,c} \right) - {e^{\frac{b}{{{P_B}}} - a{\alpha _3}}}\frac{{\pi {\alpha _1}}}{{2L}}\sum\limits_{l = 1}^L {\frac{{\sqrt {1 - \vartheta _l^2} }}{{a{v_l} + b}}v_l^{N - 1} \times }
{e^{\left( {\frac{a}{{{P_B}}} - c} \right){v_l} - \frac{{{\alpha _3}\left( {a{\alpha _1} + b} \right)}}{{{\alpha _1} - {v_l}}}}}$,
${\omega _4}\left( {a,b,c} \right) = \frac{{{b^{N - 1}}\Gamma \left( N \right)}}{{{a^N}}}{e^{\frac{{bc}}{a}}}\Gamma \left( {1 - N,b{\alpha _B} + \frac{{bc}}{a}} \right)$,
$R$ and $L$ is the summation item, which reflects accuracy vs. complexity,
${\ell _r} = \cos \left( {\frac{{2r - 1}}{{2R}}\pi } \right)$, ${\hbar _r} = \frac{{{\alpha _1}}}{2}\left( {{\ell _r} + 1} \right)$,
${\vartheta _l} = \cos \left( {\frac{{2l - 1}}{{2L}}\pi } \right)$,
and
${v_l} = \frac{{{\alpha _1}}}{2}\left( {{\vartheta _l} + 1} \right)$.
\end{theorem}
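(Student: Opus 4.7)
The plan is to start from the decomposition $P_{out,F}=P_{out}^{\mathrm{I}}+P_{out}^{\mathrm{II}}$ in Eq.~(\ref{SOPGF}) and evaluate the two terms separately using the joint density of $|h_B|^2$, $|h_F|^2$, and $|H_E|^2$. Here $|h_B|^2$ and $|h_F|^2$ are independent exponentials (with rates absorbing the path-loss factors $r_B^\alpha$ and $r_F^\alpha$) and $|H_E|^2$ follows the Gamma density stated before the theorem, so every probability in sight is an iterated integral. The core technical work is setting up the correct region of integration dictated by the SIC-order constraint $\rho_F|h_F|^2\gtrless\tau(|h_B|^2)$ combined with the secrecy-rate event $R_{s,F}^i<R_{th}$, which after taking the logarithm becomes an algebraic inequality between $|h_F|^2$, $|h_B|^2$, and $|H_E|^2$.

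For $P_{out}^{\mathrm{I}}$ I would split on the sign of $\tau_B=|h_B|^2/\alpha_B-1$. On $\{|h_B|^2<\alpha_B\}$ one has $\tau(|h_B|^2)=0$, so $\rho_F|h_F|^2>\tau$ is vacuous and the event reduces to $\rho_F|h_F|^2<(1+\rho_B|h_B|^2)\bigl[\theta_{th}(1+\rho_F|H_E|^2)-1\bigr]$. Integrating $|h_F|^2$ first (exponential CDF), then $|H_E|^2$ against its Gamma density, and finally $|h_B|^2$ on $[0,\alpha_B]$ produces $P_{out}^{\mathrm{I},1}$, in which the $|H_E|^2$-integral contributes the auxiliary $\omega_1(\lambda_1,\lambda_2,\lambda_3)$. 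On $\{|h_B|^2\ge\alpha_B\}$ we have $\tau_B>0$ and the event becomes $\tau_B<\rho_F|h_F|^2<(1+\rho_B|h_B|^2)\bigl[\theta_{th}(1+\rho_F|H_E|^2)-1\bigr]$. Non-emptiness of this interval is an inequality linear in $|H_E|^2$ whose solvability threshold, after using $\rho_B\alpha_B=\varepsilon_B$ and $1+\varepsilon_B=\theta_B$, simplifies to the cut-off $\alpha_1=(1-\varepsilon_B\varepsilon_{th})/(\rho_F\theta_{th}\varepsilon_B)$. The sign of $1-\varepsilon_B\varepsilon_{th}$ therefore dictates the two regimes, giving $P_{out}^{\mathrm{I},21}$ when $\varepsilon_B\varepsilon_{th}<1$ (finite $|H_E|^2$-domain capped by $\alpha_1$) and $P_{out}^{\mathrm{I},22}$ when $\varepsilon_B\varepsilon_{th}>1$ (unconstrained domain).

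For $P_{out}^{\mathrm{II}}$ the constraint $\rho_F|h_F|^2<\tau(|h_B|^2)$ already forces $|h_B|^2>\alpha_B$ and the secrecy event simplifies to $\rho_F|h_F|^2<\theta_{th}(1+\rho_F|H_E|^2)-1$ because $R_F^{\mathrm{II}}$ carries no $|h_B|^2$-dependent interference. The triple integral then collapses via $\int_0^\infty x^{N-1}e^{-cx}\,dx=\Gamma(N)/c^N$ applied to the $|H_E|^2$-marginal, yielding the stated closed-form expression for $P_{out}^{\mathrm{II}}$.

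The principal obstacle will be the one-dimensional residual integrals appearing in $P_{out}^{\mathrm{I},2}$, whose $|H_E|^2$-integrands have the composite form $x^{N-1}e^{-ax-b/(\alpha_1-x)}$ together with boundary terms in $\alpha_B$. I would extract the tractable pieces by the substitution $u=b\alpha_B+bc/a+ax$ that converts them into upper-incomplete-Gamma evaluations $\Gamma(1-N,\cdot)$ appearing in $\omega_1,\omega_2,\omega_3,\omega_4$, and handle the remaining non-elementary piece on the finite interval $[0,\alpha_1]$ by the affine change of variable $x=(\alpha_1/2)(\ell+1)$ followed by Gauss--Chebyshev quadrature, which produces the sums $\Delta$ and the quadrature-node formulas with $\ell_r,\vartheta_l,\hbar_r,v_l$. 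Finally, careful bookkeeping is required at the boundary $\varepsilon_B\varepsilon_{th}=1$ so that $\alpha_1\ge 0$ is enforced consistently with the regime selection and so that the cross-terms shared between $P_{out}^{\mathrm{I},1}$ and the two variants of $P_{out}^{\mathrm{I},2}$ combine without double-counting the contribution from $\{|h_B|^2=\alpha_B\}$.
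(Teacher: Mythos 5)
Your proposal follows essentially the same route as the paper's Appendix A: the same split of $P_{out}^{\mathrm{I}}$ on the sign of $\tau_B$, the same identification of the solvability threshold $\alpha_1$ (hence the two regimes $\varepsilon_B\varepsilon_{th}\lessgtr 1$), and the same reduction of the residual integrals to upper incomplete Gamma functions plus Gauss--Chebyshev quadrature on $[0,\alpha_1]$. The only point you gloss over is that $P_{out}^{\mathrm{II}}$ also requires splitting on which of $\tau_B/\rho_F$ and $\theta_{th}|H_E|^2+\alpha_{th}$ is the binding cap on $|h_F|^2$ (resolved in the paper by noting $(\rho_F|H_E|^2+1)\varepsilon_1>\alpha_B$ always since $\theta_{th}\ge 1$), but this is a routine step that does not change the argument.
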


\begin{proof}
	See Appendix \ref{appendicesA}.
\end{proof}

\begin{remark} 
	Based on (\ref{poutI01}), one can observe that $\Pr \left\{ {{\rho_F}{{\left| {{h_F}} \right|}^2} > 0} \right\} = 1$, which is independent of $\rho_F$. With the help of the result in \cite{LeiH2017CL}, secrecy capacity improves with increasing transmit SNR then gradually tends to a constant. So, $P_{out}^{\mathrm{I},1}$ decreases and gradually tends to a constant for a given $\alpha _B$.
	Furthermore, $\Pr \left\{ {{{\left| {{h_F}} \right|}^2} > \frac{{{\tau _B}}}{{{\rho_F}}}} \right\}$ increases gradually tending to 1 with increasing $\rho_F$. Thus, for a given $\alpha _B$, $P_{out}^{\mathrm{I},2}$ increases with increasing $\rho_F$ until gradually tending to a constant and independent of $\rho_F$.	
\end{remark}

\begin{remark}
	Based on (\ref{ppoutI02}), it must be noted that the relationship between ${{\omega _0}\left( {{{\left| {{h_B}} \right|}^2},{{\left| {{H_E}} \right|}^2}} \right)}$ and ${\frac{{{\tau _B}}}{{{\rho_F}}}}$ act as the constraint for the GF link. More specifically, the former is constraint on security while the latter is constraint on decoding order. The relationship between constraint on security and on decoding order directly affects the SOP of $U_F$.
\end{remark}

%
\begin{remark}
	The analysis in (\ref{relationship1}) demonstrates that SOP of $U_F$ depends on the relationship between ${\varepsilon _B}{\varepsilon _{th}} $ and 1, which determines the relationship between the constraint on decoding order and the constraint on security. When ${\varepsilon _B}{\varepsilon _{th}} > 1$, the constraint on decoding order is always less than that on security. 			
	{$\epsilon_B $$\epsilon_{th} < 1$ means that $R_{th}$ needs to be small for a given $R_B$, which is a generalized condition in practice since SGF is invoked to encourage spectrum sharing between a GB user and a GF user with a low secrecy threshold data rate. However, for $\epsilon_B $$\epsilon_{th} > 1$, it also offers secrecy outage performance achieved by the SGF scheme will be worse.}
\end{remark}

The analytical expression provided in (\ref{Theorem1}) is complicated because many factors affect the secrecy performance of the GF user, specifically, the decoding order, the target data rate of $U_B$, the target secrecy rate of $U_F$, and the quality of the eavesdropping channel.
We derive asymptotic expressions of the SOP in the high transmit SNR regime to obtain more insights.	

\begin{corollary}
When ${\rho_B} \to \infty $, the asymptotic SOP of ${U_F}$ is approximated as
\begin{equation}
	{
	P_{out}^{{\rho_B} \to \infty } \approx P_{out}^{{\rm{II,}}{\rho _B} \to \infty } = 1 - {e^{ - r_F^\alpha {\alpha _{th}}}}{\left( {1 + {{\left( {\frac{{{r_F}}}{{{r_E}}}} \right)}^\alpha }{\theta _{th}}} \right)^{ - N}}.
	\label{poutA1PB}
}
\end{equation}
\begin{proof}
	See Appendix \ref{appendices1}.
\end{proof}
\end{corollary}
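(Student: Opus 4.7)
My plan is to take the limit $\rho_B \to \infty$ in the exact expression (\ref{Theorem1}) and argue that the dominant contribution comes solely from $P_{out}^{\mathrm{II}}$, which collapses to a simple closed form. The underlying intuition, already highlighted in the remarks, is that as $\rho_B$ grows without bound the tolerable interference $\tau(|h_B|^2)$ blows up, so $U_F$'s signal is decoded at the second SIC stage with probability approaching one; in that regime $U_F$ suffers no interference from $U_B$ and the SOP reduces to a standard single-link wiretap computation against the $N$-antenna Gamma eavesdropper.

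More precisely, I would first observe that $\alpha_B = \varepsilon_B/\rho_B \to 0$, so the ``GB outage'' event $\{|h_B|^2 < \alpha_B\}$ has probability tending to $0$. Conditional on its complement, $\tau_B = |h_B|^2/\alpha_B - 1 \to \infty$ almost surely because $|h_B|^2$ is bounded away from zero with probability approaching one. Consequently $\Pr\{\rho_F|h_F|^2 > \tau(|h_B|^2)\} \to 0$, which I would use to kill each of $P_{out}^{\mathrm{I},1}$, $P_{out}^{\mathrm{I},21}$, and $P_{out}^{\mathrm{I},22}$ in turn; the cleanest route is direct substitution $\alpha_B \to 0$ into the closed forms from Theorem 1, tracking how each $\omega_i$ factor and each $\Gamma(\cdot,\cdot)$ factor behaves in the limit and confirming that the $e^{-r_B^\alpha \alpha_B}$ prefactors combine with the $\omega_i$ differences to leave no finite residue.

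For the surviving $P_{out}^{\mathrm{II}}$ term, the same argument shows that the constraint $\rho_F|h_F|^2 < \tau(|h_B|^2)$ is automatic in the limit, so
\begin{equation*}
P_{out}^{\mathrm{II},\rho_B\to\infty} = \Pr\{R_{s,F}^{\mathrm{II}} < R_{th}\} = \Pr\{|h_F|^2 < \theta_{th}|H_E|^2 + \alpha_{th}\},
\end{equation*}
obtained by rearranging $\log_2(1+\rho_F|h_F|^2) - \log_2(1+\rho_F|H_E|^2) < R_{th}$. I would then write this as $\int_0^{\infty} F_{|h_F|^2}(\theta_{th} y + \alpha_{th}) f_{|H_E|^2}(y)\,dy$, plug in the exponential CDF of $|h_F|^2$ with rate $r_F^\alpha$ and the Gamma PDF of $|H_E|^2$ with shape $N$ and rate $r_E^\alpha$ from the system model, and evaluate the resulting integral via the standard identity $\int_0^\infty y^{N-1} e^{-sy}\,dy = \Gamma(N)/s^N$ with $s = r_F^\alpha \theta_{th} + r_E^\alpha$. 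Factoring out $r_E^{N\alpha}$ from the quotient collapses it to $(1 + (r_F/r_E)^\alpha \theta_{th})^{-N}$, and the desired expression (\ref{poutA1PB}) follows.

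The main obstacle I anticipate is the bookkeeping for the three $P_{out}^{\mathrm{I},\cdot}$ terms: their closed forms involve $\omega_1,\ldots,\omega_4$ and incomplete-Gamma factors whose leading behavior as $\alpha_B \to 0$ must be tracked carefully to confirm they vanish rather than contribute finite corrections. Probabilistically this should be transparent by dominated convergence on the underlying event $\{\rho_F|h_F|^2 > \tau(|h_B|^2)\}$, but the algebraic verification inside the Theorem 1 expression is the one place where care is required before asserting that only the stated Gamma-integral survives.
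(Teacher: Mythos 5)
Your proposal is correct and follows essentially the same route as the paper's Appendix B: argue that $\alpha_B \to 0$ forces $\tau_B \to \infty$ so that $P_{out}^{\mathrm{I}} \approx 0$, then reduce $P_{out}^{\mathrm{II}}$ to $\Pr\{|h_F|^2 < \theta_{th}|H_E|^2 + \alpha_{th}\}$ and evaluate the exponential-against-Gamma integral. The paper simply dispatches the Group-I terms with the probabilistic argument you mention at the end, rather than the term-by-term algebraic verification you flag as the main bookkeeping burden.
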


\begin{remark}
	The increasing $\rho_B$ leads to larger $\tau \left( {{{\left| {{h_B}} \right|}^2}} \right)$, which means it is easy to guarantee the QoS of $U_B$. Then, the probability of decoding the $U_F$'s signals in the second stage of SIC increases. The final result is ${P_{out}} \approx \Pr \left\{ {R_s^{\mathrm{II}} < {R_{th}}} \right\}$ which simply depends on $\rho_F$, $R_{th}$, {$r_F$, $r_E$, and $N$.}
\end{remark}

\begin{corollary}
When $ {\rho_F} \to \infty $, the asymptotic SOP of ${U_F}$ is approximated as
\begin{equation}
	{
	P_{out}^{{\rho _F} \to \infty } \approx P_{out}^{{\rm{I}},{\rho _F} \to \infty } = 1 - {\left( {\frac{{{r_B}}}{{{r_F}}}} \right)^{N\alpha }}{\left( {\frac{{r_E^\alpha }}{{{\rho _B}{\theta _{th}}}}} \right)^N}\Gamma \left( {1 - N,\frac{{r_B^\alpha }}{{{\rho _B}{\theta _{th}}}}\left( {{\theta _{th}} + {{\left( {\frac{{{r_E}}}{{{r_F}}}} \right)}^\alpha }} \right)} \right).
	\label{poutA2PF}
}
\end{equation}
\begin{proof}
	See Appendix \ref{corollary2}.
\end{proof}

\end{corollary}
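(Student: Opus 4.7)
The plan is to observe that as $\rho_F\to\infty$ the first-stage decoding event dominates while $P_{out}^{\mathrm{II}}$ vanishes, reducing the SOP to the probability that the first-stage secrecy rate falls below $R_{th}$, which I then evaluate by successive integrations over $|h_F|^2$, $|H_E|^2$, and $|h_B|^2$.

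First I would argue that as $\rho_F\to\infty$ the decoding-order condition $\rho_F|h_F|^2>\tau(|h_B|^2)$ holds almost surely, since $\tau(|h_B|^2)/\rho_F\to 0$ for each realization and $|h_F|^2>0$ with probability one. Consequently $P_{out}^{\mathrm{II}}\to 0$, and the decoding-order indicator inside $P_{out}^{\mathrm{I}}$ may be replaced by $1$, giving $P_{out}\approx \Pr\{R_{s,F}^{\mathrm{I}}<R_{th}\}$. The regime condition involving $\varepsilon_B\varepsilon_{th}$ does not depend on $\rho_F$, so both cases of \eqref{Theorem1} yield the same limit.

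Second, using the high-SNR approximations $R_F^{\mathrm{I}}\approx\log_2(\rho_F|h_F|^2/(1+\rho_B|h_B|^2))$ and $R_E\approx\log_2(\rho_F|H_E|^2)$, the $\rho_F$ factors cancel inside the secrecy expression, yielding $R_{s,F}^{\mathrm{I}}\approx\log_2(|h_F|^2/[|H_E|^2(1+\rho_B|h_B|^2)])$, so that
\begin{equation*}
P_{out}^{\rho_F\to\infty}\approx \Pr\{|h_F|^2<\theta_{th}|H_E|^2(1+\rho_B|h_B|^2)\}.
\end{equation*}
I then evaluate this expectation in stages. Integrating $|h_F|^2$ against its exponential CDF with rate $r_F^\alpha$ produces $1-\mathbb{E}[\exp(-r_F^\alpha\theta_{th}|H_E|^2(1+\rho_B|h_B|^2))]$. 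Integrating over the Gamma$(N,r_E^\alpha)$ density of $|H_E|^2$ contributes the rational factor $r_E^{N\alpha}/[r_F^\alpha\theta_{th}(1+\rho_B|h_B|^2)+r_E^\alpha]^N$. Finally, averaging over the exponential density of $|h_B|^2$ and using the substitution $t=r_B^\alpha u/(r_F^\alpha\theta_{th}\rho_B)$ converts the remaining $x$-integral into the upper incomplete Gamma function $\Gamma(1-N,(r_B^\alpha/(\rho_B\theta_{th}))(\theta_{th}+(r_E/r_F)^\alpha))$, while the prefactors collect into $(r_B/r_F)^{N\alpha}(r_E^\alpha/(\rho_B\theta_{th}))^N$, reproducing \eqref{poutA2PF}.

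The main obstacle is the bookkeeping in the final change of variables and the verification that the atypical region $|h_B|^2<\alpha_B$, where $\tau(|h_B|^2)=0$ rather than $\tau_B$, does not perturb the leading-order asymptotic. An alternative route takes the limit directly inside the exact formula in Theorem~1; this demands showing $\lambda_2\to r_B^\alpha$ and $e^{-r_F^\alpha\alpha_{th}}\to 1$, and then verifying that the second incomplete Gamma term appearing in $\omega_1$ (the one with argument $b\alpha_B+bc/a$) is absorbed by the corresponding contribution of $P_{out}^{\mathrm{I},21}$ or $P_{out}^{\mathrm{I},22}$, leaving only the single $\Gamma(1-N,\cdot)$ term of \eqref{poutA2PF}.
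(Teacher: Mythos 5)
Your proposal is correct and follows essentially the same route as the paper's Appendix C: argue $P_{out}^{\mathrm{II}}\to 0$ and drop the decoding-order constraint so that $P_{out}\approx\Pr\{R_{s,F}^{\mathrm{I}}<R_{th}\}$, observe that the threshold $\omega_0$ degenerates to $\theta_{th}|H_E|^2(1+\rho_B|h_B|^2)$ as $\alpha_{th}\to 0$, and then carry out the triple integration (the paper invokes \cite[(3.383.10)]{Gradshteyn2007Book} where you perform the change of variables by hand, but the computation is identical). Your closing remarks about the alternative limit taken inside Theorem~1 are an extra, unneeded route; the main argument matches the paper.
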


\begin{remark}
	The increasing $\rho_F$ leads to $\Pr \left\{ {{\rho_F}{{\left| {{h_F}} \right|}^2} < \tau \left( {{{\left| {{h_B}} \right|}^2}} \right)} \right\} \to 0$, which leads to $P_{{\mathrm{out}}}^{{\mathrm{II}}} \to 0$. Then, the probability of decoding the $U_F$'s signals in the first stage of SIC increases. The final result is ${P_{out}} \approx \Pr \left\{ {R_s^{\mathrm{I}} < {R_{th}}} \right\}$, which depends on $\rho_B$, $R_{th}$, {$r_B$, $r_E$, and $N$.}
\end{remark}

\begin{corollary}
When ${\rho_B} = {\rho_F} \to \infty $, the asymptotic SOP of ${U_F}$ is approximated as
\begin{equation}
	{
   \begin{aligned}
	P_{out}^{ \infty } &= P_{out}^{{\mathrm{I}}, \infty} + P_{out}^{{\mathrm{II}}, \infty} = 1 - {\left( {1 + {{\left( {\frac{{{r_B}}}{{{r_F}}}} \right)}^\alpha }{\varepsilon _B}} \right)^{ - 1}}{\left( {1 + {\theta _{th}}{{\left( {\frac{{{r_F}}}{{{r_E}}}} \right)}^\alpha } + {\varepsilon _B}{\theta _{th}}{{\left( {\frac{{{r_B}}}{{{r_E}}}} \right)}^\alpha }} \right)^{ - N}}.
	\label{poutA3PBF}
	\end{aligned}
}
\end{equation}

\begin{proof}
	See Appendix \ref{corollary3}.
\end{proof}

\end{corollary}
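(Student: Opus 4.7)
The plan is to bypass the exact expression in Theorem \ref{Theorem1} (which splits into two branches according to the sign of $\varepsilon_B\varepsilon_{th}-1$) and instead compute the asymptotic SOP directly from the definition (\ref{SOPGF}) by pushing the joint limit $\rho_B=\rho_F\to\infty$ through the underlying probability. The key simplification is that both thresholds $\alpha_B=\varepsilon_B/\rho_B$ and $\alpha_{th}=\varepsilon_{th}/\rho_F$ vanish, so the decoding-order condition $\rho_F|h_F|^2\gtrless\tau(|h_B|^2)$ collapses to $|h_F|^2\gtrless|h_B|^2/\varepsilon_B$ (since $|h_B|^2>\alpha_B$ almost surely in the limit), and the secrecy-outage inequalities lose their additive unit terms.

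First I would handle $P_{out}^{\mathrm{I}}$. After dividing through by $\rho_F=\rho_B$ inside the ratio $(1+\rho_F|h_F|^2/(1+\rho_B|h_B|^2))/(1+\rho_F|H_E|^2)$, the secrecy-outage condition $R_{s,F}^{\mathrm{I}}<R_{th}$ reduces to $|h_F|^2<\theta_{th}\rho_F|h_B|^2|H_E|^2$, whose right-hand side diverges almost surely. Hence, conditional on the Case I event $\{|h_F|^2>|h_B|^2/\varepsilon_B\}$, secrecy outage holds with probability tending to one, giving $P_{out}^{\mathrm{I},\infty}=\Pr\{|h_F|^2>|h_B|^2/\varepsilon_B\}=1-(1+\varepsilon_B(r_B/r_F)^\alpha)^{-1}$ by a single Laplace integral of the exponential density of $|h_B|^2$ against $e^{-r_F^\alpha x/\varepsilon_B}$.

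The main step is $P_{out}^{\mathrm{II},\infty}$. In the limit, the Case II indicator becomes $\{|h_F|^2<|h_B|^2/\varepsilon_B\}$ and the secrecy-outage condition becomes $|h_F|^2<\theta_{th}|H_E|^2$. Rather than evaluating this two-inequality event directly, I would compute the simpler complementary event ``Case II and secure'' as an iterated integral: condition on $|H_E|^2=z$, integrate $r_F^\alpha e^{-r_F^\alpha y}$ over $y>\theta_{th}z$ jointly with $r_B^\alpha e^{-r_B^\alpha x}$ over $x>\varepsilon_B y$; the two inner integrations produce $(r_F^\alpha/(r_F^\alpha+\varepsilon_B r_B^\alpha))\,e^{-(r_F^\alpha+\varepsilon_B r_B^\alpha)\theta_{th}z}$. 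Evaluating the outer integral against the Erlang density of $|H_E|^2$ is then a standard Laplace transform and yields $(r_E^\alpha/(r_E^\alpha+(r_F^\alpha+\varepsilon_B r_B^\alpha)\theta_{th}))^N$, which is exactly the factor $(1+\theta_{th}(r_F/r_E)^\alpha+\varepsilon_B\theta_{th}(r_B/r_E)^\alpha)^{-N}$ appearing in (\ref{poutA3PBF}). Summing, the contributions telescope into $P_{out}^{\infty}=\Pr\{\text{Case I}\}+\Pr\{\text{Case II}\}-\Pr\{\text{Case II and secure}\}=1-\Pr\{\text{Case II and secure}\}$, which is the claimed identity.

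The main obstacle I anticipate is justifying the passage to the limit inside the probability in Case I, where the threshold $\theta_{th}\rho_F|h_B|^2|H_E|^2$ itself depends on $\rho_F$; this is handled by dominated convergence combined with the exponential tail bound $\Pr\{|h_F|^2\ge c\}\le e^{-r_F^\alpha c}$, but some care is required to uniformly control small values of $|h_B|^2$ and $|H_E|^2$ (splitting the event on $\{|h_B|^2|H_E|^2>\epsilon\}$ and $\{|h_B|^2|H_E|^2\le\epsilon\}$ and letting $\epsilon\to 0$ after $\rho_F\to\infty$). As consistency checks, sending $\varepsilon_B\to 0$ in (\ref{poutA3PBF}) should recover (\ref{poutA1PB}) since the factor $(1+\varepsilon_B(r_B/r_F)^\alpha)^{-1}\to 1$ and the $\varepsilon_B\theta_{th}(r_B/r_E)^\alpha$ term in $B$ vanishes; and the appropriate large-$\varepsilon_B$ scaling should reproduce the Case I dominance seen in (\ref{poutA2PF}), both of which would corroborate the derivation.
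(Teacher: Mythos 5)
Your proposal is correct and follows essentially the same route as the paper's Appendix D: both take the joint limit inside the probability so that the decoding-order condition collapses to $|h_F|^2 \gtrless |h_B|^2/\varepsilon_B$, observe that the Case I secrecy threshold $\omega_0$ diverges so that $P_{out}^{\mathrm{I},\infty}=\Pr\{|h_F|^2>|h_B|^2/\varepsilon_B\}=1-(1+\varepsilon_B(r_B/r_F)^\alpha)^{-1}$, and reduce the Case II term to elementary exponential/Erlang integrals. The only cosmetic difference is that you evaluate the Case II contribution through its complement $\Pr\{\text{Case II and secure}\}$ in a single nested integral and then use $P_{out}^{\infty}=1-\Pr\{\text{Case II and secure}\}$, whereas the paper computes $\Pr\{|h_F|^2<\min(\theta_{th}|H_E|^2,|h_B|^2/\varepsilon_B)\}$ directly by splitting on which argument of the minimum is active; the two computations are the same integrals rearranged and yield the identical closed form.
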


\begin{remark}
	{In this scenarios with ${\rho_B} = {\rho_F} \to \infty$, it must be noted that there is $\Pr \left\{ {{\rho_F}{{\left| {{h_F}} \right|}^2} < \tau \left( {{{\left| {{h_B}} \right|}^2}} \right)} \right\} = \Pr \left\{ {{{\left| {{h_F}} \right|}^2} < \frac{{{{\left| {{h_B}} \right|}^2}}}{{{\varepsilon _B}}}} \right\}$.
	The decoding order depends on the relationship between $\frac{{{{\left| {{g_F}} \right|}^2}}}{{r_F^\alpha }}$ and $\frac{{{{\left| {{h_B}} \right|}^2}}}{{{\varepsilon _B}}} = \frac{{{{\left| {{g_B}} \right|}^2}}}{{r_B^\alpha \left( {{2^{{R_B}}} - 1} \right)}}$.
	Then, we have $P_{out}^{\mathrm{I}} = \Pr \left\{ {R_s^{\mathrm{I}} < {R_{th}},{{\left| {{h_F}} \right|}^2} > \frac{{{{\left| {{h_B}} \right|}^2}}}{{{\varepsilon _B}}}} \right\}$ and $P_{out}^{{\mathrm{II}}} = \Pr \left\{ {R_s^{{\mathrm{II}}} < {R_{th}},{{\left| {{h_F}} \right|}^2} < \frac{{{{\left| {{h_B}} \right|}^2}}}{{{\varepsilon _B}}}} \right\}$, which are constants independent of ${\rho_B}$ and ${\rho_F} $ depends on $R_B$, $R_{th}$, $r_B$, $r_F$, and $r_E$.}
\end{remark}

\section{Secrecy Outage Probability Analysis with Multiple Grant-Free Users}
\label{sec:SystemModel2}

{In this section, the secrecy performance of the multiple-GF-user SGF systems with BUS scheme is investigated.}

\subsection{Secrecy Outage Probability Analysis}
When $K > 1$, both user scheduling and decoding order issues should be considered simultaneously.
It should be noted that ${\left| {{{\cal S}_{\mathrm{II}}}} \right| = K}$
denotes that the signals from GF users should be decoded on the secondary stage of SIC to maximize the achievable rate.
Then $U_K$ is selected to transmit signals.
The same for  ${\left| {{{\cal S}_{\mathrm{II}}}} \right| = 0}$.
Based on (\ref{ratek}), the SOP of $U_k$ is given by
\begin{small}
\begin{equation}
	\begin{aligned}
		{P_{out}} &= \underbrace {\Pr \left\{ {R_K^{\mathrm{I}} - {R_E} < {R_{th}},\left| {{S_{{\mathrm{II}}}}} \right| = 0} \right\}}_{ \buildrel \Delta \over = {P_{out,1}}} + \underbrace {\Pr \left\{ {R_K^{{\mathrm{II}}} - {R_E} < {R_{th}},\left| {{S_{{\mathrm{II}}}}} \right| = K} \right\}}_{ \buildrel \Delta \over = {P_{out,2}}}\\
		&+ \underbrace {\sum\limits_{k = 1}^{K - 1} {\Pr \left\{ {\max \left\{ {R_K^{\mathrm{I}},R_k^{{\mathrm{II}}}} \right\} - {R_E} < {R_{th}},\left| {{S_{{\mathrm{II}}}}} \right| = k} \right\}} }_{ \buildrel \Delta \over = {P_{out,3}}},
	\end{aligned}
\label{SOP03}
\end{equation}
\end{small}
where
${{P_{out,1}}}$ denotes the SOP for $U_k$ when groups ${{\cal S}_{\mathrm{II}}}$ are empty,
${{P_{out,2}}}$ signifies the SOP for $U_k$ when groups ${{\cal S}_{\mathrm{I}}}$ are empty,
and
${{P_{out,3}}}$ denotes the SOP for $U_k$ when there are $k$ GF users in groups ${{\cal S}_{\mathrm{II}}}$.
In the first two terms, $U_K$ is always selected to transmit signals.
The following theorem provides the exact expression for the SOP of the considered SGF scheme with multiple GF users.
	
\begin{theorem}
	The SOP of ${U_F}$ is expressed as
	\begin{equation}
		\begin{aligned}	
			{P_{out}} = \left\{ {\begin{array}{*{20}{c}}
					 {P_{out,1}^1} + {P_{out,1}^{21}}+ {P_{out,2}}+{P_{out,3}},&{{\varepsilon _B}{\varepsilon _{th}} < 1,}\\
					 {P_{out,1}^1} + {P_{out,1}^{22}}+ {P_{out,2}}+{P_{out,3}},&{{\varepsilon _B}{\varepsilon _{th}} > 1,}
			\end{array}} \right.
			\label{SOP02}		
		\end{aligned}
	\end{equation}
where
{
${P_{out,1}^1}  = 1 - {e^{ - r_B^\alpha {\alpha _B}}} - \frac{{r_B^\alpha r_E^{N\alpha }{e^{ - r_F^\alpha {\alpha _{th}}}}{\omega _1}\left( {{\lambda _1},{\lambda _2},{\lambda _3}} \right)}}{{\Gamma \left( N \right)}}$,
$P_{out,1}^{21} = \frac{{r_B^\alpha r_E^{N\alpha }}}{{\Gamma \left( N \right)}}\sum\limits_{n = 0}^{K - 2} {\sum\limits_{i = 2}^{i = 3} {\left( {{\mu _1}{e^{\frac{{Kr_F^\alpha }}{{{\rho _F}}}}}{\omega _i}\left( {0,{\alpha _4},r_E^\alpha } \right)} \right.} } \\
	\left. { + {\mu _2}{e^{ - Kr_F^\alpha {\alpha _{th}}}}{\omega _i}\left( {{\eta _1},{\eta _2},{\eta _3}} \right) - {\mu _3}{e^{\frac{{Kr_F^\alpha  - {C_0}}}{{{\rho _F}}} - {C_0}{\alpha _{th}}}}{\omega _i}\left( {{\eta _4},{\eta _5},{\eta _6}} \right)} \right)$,
$P_{out,1}^{22} = \frac{{r_B^\alpha r_E^{N\alpha }}}{{\Gamma \left( N \right)}}\sum\limits_{n = 0}^{K - 2} {\left( {{\mu _1}{e^{\frac{{Kr_F^\alpha }}{{{\rho _F}}}}} \times } \right.} \\
	\left. {{\omega _4}\left( {0,{\alpha _4},r_E^\alpha } \right) + {\mu _2}{e^{ - Kr_F^\alpha {\alpha _{th}}}}{\omega _4}\left( {{\eta _1},{\eta _2},{\eta _3}} \right) - {\mu _3}{e^{\frac{{Kr_F^\alpha  - {C_0}}}{{{\rho _F}}} - {C_0}{\alpha _{th}}}}{\omega _4}\left( {{\eta _4},{\eta _5},{\eta _6}} \right)} \right)$,
${P_{out,2}} = \\
	 \sum\limits_{i = 0}^K {\left( {\frac{{{\varphi _i}}}{{ir_F^\alpha  + r_B^\alpha {\rho _F}{\alpha _B}}}\left( {\frac{{ir_F^\alpha r_E^{N\alpha }{e^{ - \left( {ir_F^\alpha {\alpha _{th}} + r_B^\alpha {\varepsilon _1}} \right)}}}}{{{{\left( {ir_F^\alpha {\theta _{th}} + r_B^\alpha {\rho _F}{\varepsilon _1} + r_E^\alpha } \right)}^N}}} + {\rho _F}{\alpha _B}r_B^\alpha {e^{ - r_B^\alpha {\alpha _B}}}} \right)} \right)}$,
${P_{out,3}} =
	r_B^\kappa r_E^{N\kappa }\sum\limits_{n = 0}^{K - k - 2} {\sum\limits_{m = 0}^{k - 1} {\left( {\sum\limits_{i = 1}^4 {\frac{{{\varsigma _i}}}{{\Gamma \left( N \right)}}  } } \right.} } \\
	\times\left. {\left( {{e^{ - {\xi _1}}}{\Delta _1} + {e^{ - {\xi _4}}}{\Delta _3}} \right) + \sum\limits_{i = 5}^6 {{\varsigma _i}\left( {{e^{ - {\xi _1}}}{\Delta _2} + {e^{ - {\xi _4}}}{\Delta _4}} \right)} } \right)
	+ r_B^\kappa r_E^{N\kappa }\sum\limits_{n = 0}^{K - 2} {\frac{{{\mu _0}}}{{{C_0}}}\left( {\sum\limits_{j = 1}^2 {\frac{{{{\left( { - 1} \right)}^{j + 1}}}}{{\Gamma \left( N \right)}}} \left( {{e^{ - {\zeta _1}}}{\Delta _5} + {e^{ - {\zeta _4}}}{\Delta _7}} \right)} \right.} \\
	\left. { + \sum\limits_{j = 3}^4 {{{\left( { - 1} \right)}^{j + 1}}\left( {{e^{ - {\zeta _1}}}{\Delta _6} + {e^{ - {\zeta _4}}}{\Delta _8}} \right)} } \right)$,
${\alpha _4} = \frac{{Kr_F^\alpha }}{{{\rho _F}{\alpha _B}}} + r_B^\alpha$,
${\eta _1} = Kr_F^\alpha {\rho _B}{\theta _{th}}$,
${\eta _2} = Kr_F^\alpha {\rho _B}{\alpha _{th}} + r_B^\alpha $,
${\eta _3} = Kr_F^\alpha {\theta _{th}} + r_E^\alpha$,
}
$ {\eta _4} = {C_0}{\rho _B}{\theta _{th}}$,
${\eta _5} = {C_0}{\rho _B}{\alpha _{th}} + \frac{{Kr_F^\alpha  - {C_0}}}{{{\rho _F}{\alpha _B}}} + r_B^\alpha $,
$ {\eta _6} = {C_0}{\theta _{th}} + r_E^\alpha $, 	
${\Delta _1}  = \frac{{\xi _3^{N - 1}\Gamma \left( {1 - N,{\xi _3}{\alpha _B} + \frac{{{\xi _2}{\xi _3}}}{{{u_1}}}} \right)}}{{u_1^N}}{e^{\frac{{{\xi _2}{\xi _3}}}{{{u_1}}}}} - \frac{{{e^{ - {\xi _3}{\varepsilon _1}}}{\omega _5}\left( {{u_1},{\xi _3},{v_1},{l_1}} \right)}}{{\Gamma \left( N \right)}}$,
${\Delta _2} = \frac{{{e^{ - {\xi _3}{\alpha _B}}}}}{{\xi _2^N{\xi _3}}} - \frac{{{e^{ - {\xi _3}{\varepsilon _1}}}}}{{{\xi _3}{{\left( {{\rho_F}{\xi _3}{\varepsilon _1} + {\xi _2}} \right)}^N}}}$,
${\xi _1} = {W_i}{\alpha _{th}} - \frac{{{B_i} + {C_i}}}{{\rho_F}}$,
{
${\xi _2} = {W_i}{\theta _{th}} + r_E^\alpha $,
${\xi _3} = {W_i}{\rho_B}{\alpha _{th}} + \frac{{{B_i} + {C_i}}}{{P_{F}{\alpha _B}}} + r_B^\alpha $,
}
${u_1} = {W_i}{\rho_B}{\theta _{th}}$,
${v_1} = {u_1}{\rho_F}{\varepsilon _1}$,
${l_1} = {u_1}{\varepsilon _1} + {\rho_F}{\xi _3}{\varepsilon _1} + {\xi _2}$,
${\xi _4} = \left( {{B_i} + {W_i}} \right){\alpha _{th}} - \frac{{{C_i}}}{{\rho_F}}$,
${\omega _5} \left( {a,b,c,f} \right)
= \frac{{{f^{N - 1}}}}{{b}}H_{1,0:1,1:0,1}^{1,0:1,1:1,0}\left[ {_{\quad  - }^{\left( {0;1,2} \right)}\left| {_{\left( {0,1} \right)}^{\left( {0,1} \right)}\left| {_{\left( {0,1} \right)}^ - \left| {\frac{a}{{bf}},\frac{c}{{{f^2}}}} \right.} \right.} \right.} \right]$,	
${\Delta _3} = \frac{{{e^{ - {\xi _6}{\varepsilon _1}}}{\omega _6}\left( {{u_1},{\xi _6},{v_2},{l_2}} \right)}}{{\Gamma \left( N \right)}}$,
${\Delta _4} = \frac{{{e^{ - {\xi _6}{\varepsilon _1}}}}}{{{\xi _6}{{\left( {{\rho_F}{\xi _6}{\varepsilon _1} + {\xi _5}} \right)}^N}}}$,
${v_2} = {u_1}{\rho_F}{\varepsilon _1}$,
${l_2} = {u_1}{\varepsilon _1} + {\xi _6}{\rho_F}{\varepsilon _1} + {\xi _5}$,
{
${\xi _5} = \left( {{B_i} + {W_i}} \right){\theta _{th}} + r_E^\alpha $,
${\xi _6} = {W_i}{\alpha _{th}}{\rho_B} + \frac{{{C_i}}}{{P_{F}{\alpha _B}}} + r_B^\alpha $,
}
${\zeta _1} = {q_j}{\alpha _{th}} - \frac{{{b_j} + {c_j}}}{{\rho_F}}$,
{
${\zeta _2} = {q_j}{\theta _{th}} + r_E^\alpha$,
${\zeta _3} = {q_j}{\alpha _{th}}{\rho_B} + \frac{{{b_j} + {c_j}}}{{P_{F}{\alpha _B}}} + r_B^\alpha$,
}
${\Delta _5} = \frac{{\zeta _3^{N - 1}}}{{u_2^N}}{e^{\frac{{{\zeta _2}{\zeta _3}}}{{{u_2}}}}}\Gamma \left( {1 - N,{\zeta _3}{\alpha _B} + \frac{{{\zeta _2}{\zeta _3}}}{{{u_2}}}} \right)
- \frac{{{e^{ - {\xi _3}{\varepsilon _1}}}{\omega _5}\left( {{u_2},{\zeta _3},{v_3},{l_3}} \right)}}{{\Gamma \left( N \right)}}$,
${u_2} = {q_j}{\rho_B}{\theta _{th}}$,
${v_3} = {u_2}{\rho_F}{\varepsilon _1}$,
${l_3} = {u_2}{\varepsilon _1} + {\zeta _3}{\rho_F}{\varepsilon _1} + {\zeta _2}$,
${\Delta _6} = \frac{{{e^{ - {\zeta _3}{\alpha _B}}}}}{{\zeta _2^N{\zeta _3}}} - \frac{{{e^{ - {\zeta _3}{\varepsilon _1}}}}}{{{\zeta _3}{{\left( {{P_F}{\zeta _3}{\varepsilon _1} + {\zeta _2}} \right)}^N}}}$,
${\zeta _4} = \left( {{b_j} + {q_j}} \right){\alpha _{th}} - \frac{{{c_i}}}{{\rho_F}}$,
{
${\zeta _5} = \left( {{b_j} + {q_j}} \right){\theta _{th}} + r_E^\alpha$,
${\zeta _6} = {q_i}{\alpha _{th}}{\rho_B} + \frac{{{c_j}}}{{P_{F}{\alpha _B}}} + r_B^\alpha$,
}
${\Delta_7} = \frac{{{e^{ - {\zeta _6}{\varepsilon _1}}}}{\omega _5}\left( {{u_2},{\zeta _6},{v_4},{l_4}} \right)}{{\Gamma \left( N \right)}}$,
${v_4} = {u_2}{\rho_F}{\varepsilon _1}$,
${l_4} = {u_2}{\varepsilon _1} +{\zeta _6} {\rho_F}{\varepsilon _1} + {\zeta _5}$,
and
${\Delta _8} =\frac{{{e^{ - {\zeta _6}{\varepsilon _1}}}}}{{{\zeta _6}{{\left( {{\rho_F}{\zeta _6}{\varepsilon _1} + {\zeta _5}} \right)}^N}}}$.

\end{theorem}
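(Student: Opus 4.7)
The plan is to decompose the SOP exactly as in (\ref{SOP03}) and handle the three contributions $P_{out,1}$, $P_{out,2}$, and $P_{out,3}$ separately, each by conditioning on $|h_B|^2$ and integrating against the appropriate joint distribution of order statistics derived in Section \ref{sec:SystemModel1}. Exactly as in Theorem 1, the inner event $R_K^{\mathrm{I}}-R_E<R_{th}$ combined with the decoding-order constraint $\rho_F |h_k|^2 \gtrless \tau(|h_B|^2)$ produces a dichotomy governed by the sign of $1-\varepsilon_B \varepsilon_{th}$, which is why the final formula splits into the two branches shown in (\ref{SOP02}).

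For $P_{out,1}$, the event $|\mathcal{S}_{\mathrm{II}}|=0$ is equivalent to $\rho_F |h_1|^2 > \tau(|h_B|^2)$ by monotonicity of the ordering, so only the pair $(|h_1|^2,|h_K|^2)$ interacts with the SIC boundary while $U_K$'s signal is decoded at the first stage. I would substitute the joint density (\ref{pdfh1K}) and CDF (\ref{cdfh1K}), restrict the $|h_1|^2$ integration to the set where $x>\tau(|h_B|^2)/\rho_F$, and then reuse the algebraic building blocks $\omega_1,\omega_2,\omega_3,\omega_4$ and the Gauss--Chebyshev quadrature already assembled in the proof of Theorem 1; the extra exponential factor in $|h_1|^2$ simply renormalises the parameters into $\alpha_4,\eta_1,\ldots,\eta_6$, and the same $\varepsilon_B\varepsilon_{th}\lessgtr 1$ case split distinguishes $P_{out,1}^{21}$ from $P_{out,1}^{22}$.

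The term $P_{out,2}$ is the easiest. Because every GF user satisfies $\rho_F|h_k|^2 < \tau(|h_B|^2)$, only the marginal $F_{|h_K|^2}(x)=\sum_{i=0}^K \varphi_i e^{-i r_F^\alpha x}$ and the Gamma density of $|H_E|^2$ enter. Conditioning on $|h_B|^2$ factorises the probability into two elementary integrals and yields the closed-form expression immediately, with no special functions beyond the binomial series already displayed in the theorem.

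The main obstacle is $P_{out,3}$. For $1\le k \le K-2$ the event $|\mathcal{S}_{\mathrm{II}}|=k$ imposes the sandwich $\rho_F|h_k|^2 < \tau(|h_B|^2) < \rho_F|h_{k+1}|^2$, and the achievable rate equals $\max\{R_K^{\mathrm{I}},R_k^{\mathrm{II}}\}$, so the integrand simultaneously depends on the triple $(|h_k|^2,|h_{k+1}|^2,|h_K|^2)$ and requires the trivariate density (\ref{pdfthree})--(\ref{cdfthree}); the boundary case $k=K-1$ collapses two of these variables and is handled instead through (\ref{pdftwo})--(\ref{cdftwo}). The max operator splits the outage event into the intersection $\{R_K^{\mathrm{I}}<R_{th}+R_E\}\cap\{R_k^{\mathrm{II}}<R_{th}+R_E\}$, so I would integrate first over $|H_E|^2$ using the Gamma density, then peel off the three order-statistic integrals by exploiting the separable exponential structure of (\ref{pdfthree}). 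Several of the surviving inner integrals do not reduce to incomplete Gamma functions alone, and I expect to invoke the bivariate Fox $H$-function identities packaged as $\omega_5,\omega_6$ to collapse the remaining two-parameter integrals, mirroring the $\Delta_1,\ldots,\Delta_8$ building blocks stated in the theorem. The principal difficulty is precisely this bookkeeping: combining the six exponents indexed by $i=1,\ldots,6$ in (\ref{cdfthree}) with the $\max$ operator and the $\varepsilon_B\varepsilon_{th}\lessgtr 1$ case split, and then recognising each surviving integral as the correct $H$-function instance — a considerably more delicate exercise than the single-GF-user case of Theorem 1.
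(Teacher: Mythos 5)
Your proposal is correct and follows essentially the same route as the paper's Appendix E: the same decomposition into $P_{out,1}$, $P_{out,2}$, $P_{out,3}$, the same use of the bivariate law of $\left( {{{\left| {{h_1}} \right|}^2},{{\left| {{h_K}} \right|}^2}} \right)$ with the Theorem-1 machinery and the ${\varepsilon _B}{\varepsilon _{th}} \lessgtr 1$ dichotomy for $P_{out,1}$, the marginal of ${\left| {{h_K}} \right|^2}$ for $P_{out,2}$, and the trivariate/bivariate order statistics plus the $\max$-as-intersection reduction and the Fox $H$-function blocks ${\omega _5}$, ${\Delta _1},\ldots,{\Delta _8}$ for $P_{out,3}$. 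The only cosmetic differences are the order of the inner integrations (the paper integrates over ${\left| {{h_B}} \right|^2}$ before ${\left| {{H_E}} \right|^2}$) and the fact that the ${\varepsilon _B}{\varepsilon _{th}}$ case split only materializes in $P_{out,1}^{2}$, not in $P_{out,3}$, where the relevant comparison is resolved by $\Pr \left\{ {\left( {{\rho _F}{{\left| {{H_E}} \right|}^2} + 1} \right){\varepsilon _1} > {\alpha _B}} \right\} = 1$.
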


\begin{proof}
	See Appendix \ref{appendicesE}.
\end{proof}

\begin{remark}
	{
		Based on (\ref{pout4k2}), it can be observed that the number of the users in Groups I and II depends on the relationship between ${\left| {{h_k}} \right|^2}$ and $\frac{{{\tau _B}}}{{{\rho _F}}} = \frac{{{\rho _B}}}{{{\rho _F}}}\frac{{{{\left| {{h_B}} \right|}^2}}}{{{2^{{R_B}}} - 1}} - \frac{1}{{{\rho _F}}}$ related to $\rho_B$ and $\rho_F$. }
\end{remark}

\normalsize
Relative to SGF systems with a single GF user, the expression of SOP presented in \textbf{Theorem 2} is exceptionally complicated, and the main reason is that in addition to the factors highlighted in \textbf{Theorem 1}, the number of users in each group has a significant effect on the secrecy performance.

\subsection{Asymptotic Secrecy Outage Probability Analysis}

To obtain more insights, we derive asymptotic expressions of the SOP in the high transmit SNR regime.	
	
\textbf{Corollary 4} When  $ {\rho_B}={\rho_F}  \to \infty $, the SOP of $U_k$ is approximated at high SNR as
{
\begin{equation}
	\begin{aligned}
		{P_{out}^{ \infty }} &\approx 	P_{out,1}^{2,\infty } + P_{out,2}^\infty  + P_{out,3}^\infty,
       \label{sopmasyBF}
	\end{aligned}
\end{equation}	
where
$P_{out,1}^{2,\infty } \approx \sum\limits_{n = 0}^{K - 2} {\frac{{{\varepsilon _B}{\mu _1}}}{{K{{\left( {\frac{{{r_F}}}{{{r_B}}}} \right)}^\alpha } + {\varepsilon _B}}}}$,
$P_{out,2}^\infty  \approx \sum\limits_{i = 0}^K {\frac{{{\varphi _i}{\varepsilon _B}}}{{i{{\left( {\frac{{{r_F}}}{{{r_B}}}} \right)}^\alpha } + {\varepsilon _B}}}}  + \sum\limits_{i = 0}^K {\frac{{i{\varphi _i}{{\left( {i{\chi _1} + {\chi _2}} \right)}^{ - N}}}}{{i + {\varepsilon _B}{{\left( {\frac{{{r_B}}}{{{r_F}}}} \right)}^\alpha }}}}$,	
${\chi _1} = {\theta _{th}}{\left( {\frac{{{r_F}}}{{{r_E}}}} \right)^\alpha }$,
${\chi _2} = {\varepsilon _B}{\theta _{th}}{\left( {\frac{{{r_B}}}{{{r_E}}}} \right)^\alpha } + 1$,
$P_{out,3}^\infty  = \sum\limits_{k = 1}^{K - 2} {P_{out,3}^{k,\infty }}  + P_{out,3}^{K - 1,\infty } = \sum\limits_{k = 1}^{K - 2} {\left( {I_3^\infty  + I_4^\infty } \right)}  + P_{out,3}^{K - 1,\infty }$,
$I_3^\infty \approx \sum\limits_{n = 0}^{K - k - 2} {\sum\limits_{m = 0}^{k - 1} {\sum\limits_{i = 5}^6 {\frac{{{\varsigma _i}{\varepsilon _B}\left( {1 - {\chi _3}} \right)}}{{\left( {K + {\varpi _i}} \right){{\left( {\frac{{{r_F}}}{{{r_B}}}} \right)}^\alpha } + {\varepsilon _B}}}} } } $,
$I_4^\infty \approx \sum\limits_{n = 0}^{K - k - 2} {\sum\limits_{m = 0}^{k - 1} {\sum\limits_{i = 5}^6 {\frac{{{\varsigma _i}{\varepsilon _B}{\chi _3}}}{{\left( {K - k} \right){{\left( {\frac{{{r_F}}}{{{r_B}}}} \right)}^\alpha } + {\varepsilon _B}}}} } } $,
${\chi _3} = {\left( {\left( {K + {\varpi _i}} \right){\chi _1} + {\chi _2}} \right)^{ - N}}$,
$\varpi  = \left[ {0,0,n + 2,1,m + 1 - k, - k} \right]$,
$P_{out,3}^{K - 1,\infty } \approx \sum\limits_{n = 0}^{K - 2} {{\mu _4}\sum\limits_{j = 3}^4 {{{\left( { - 1} \right)}^{j + 1}}{\varepsilon _B}\left( {\frac{{1 - {\chi _4}}}{{{\varpi _j}r_B^{ - \alpha } + r_F^{ - \alpha }{\varepsilon _B}}}} \right.} {\mkern 1mu}  + \left. {\frac{{{\chi _4}}}{{r_B^{ - \alpha } + r_F^{ - \alpha }{\varepsilon _B}}}} \right)} $,
${\mu _4} = \frac{{K!{{\left( { - 1} \right)}^n}\left( {_{\;\;n}^{K - 2}} \right)}}{{\left( {K - 2} \right)!\left( {n + 1} \right)}}$,
and
${\chi _4} = {\left( {{\varpi _j}{\chi _1} + {\chi _2}} \right)^{ - N}}$.
}
\begin{proof}
	See Appendix \ref{appendicesF}.
\end{proof}	

\begin{remark}
	{
		For the SGF systems with multiple GF users, when ${\rho_B} = {\rho_F} \to \infty$, it must be noted that there is $\Pr \left\{ {{\rho_F}{{\left| {{h_k}} \right|}^2} < \tau \left( {{{\left| {{h_B}} \right|}^2}} \right)} \right\} = \Pr \left\{ {{{\left| {{h_k}} \right|}^2} < \frac{{{{\left| {{h_B}} \right|}^2}}}{{{\varepsilon _B}}}} \right\}$. 
		The number of the users in Groups I and II depends on the relationship between $\frac{{{{\left| {{g_k}} \right|}^2}}}{{r_F^\alpha }}$ and $\frac{{{{\left| {{h_B}} \right|}^2}}}{{{\varepsilon _B}}} = \frac{{{{\left| {{g_B}} \right|}^2}}}{{r_B^\alpha \left( {{2^{{R_B}}} - 1} \right)}}$ irrelated to $\rho_B$ and $\rho_F$. 
	}
\end{remark}	

\begin{remark}
	{
		Based on \textbf{\textit{Corollary 4}}, one can observe that when ${\rho_B} = {\rho_F} \to \infty$, the SOP of the SGF systems with multiple GF users is a constant, which depends on $R_B$, $R_{th}$, $r_B$, $r_F$, $r_E$, and $K$. Further, $P_{out,3}$ is the main part of the SOP.
	}
\end{remark}	

\begin{remark}
{
	Based on \textbf{Corollaries} 3 and 4, one can find that varying transmit power at $U_B$ and $U_F$ can only improve the secrecy performance of the SGF systems within a certain range.
	In contrast, improving the system's secrecy performance is more effective by varying the distance. Specifically, reducing the distance between the GF users and the base station as much as possible or making the GF users far from the eavesdroppers.
	All the parameters must be carefully chosen to maximize the secrecy performance of the considered SGF systems, such as the target rate of $U_B$, the secrecy threshold rate of $U_F$, and the distance between the transmitters and receivers.
}
\end{remark}

\section{Numerical Results and Discussions}
\label{sec:RESULTS}

This section presents Monte-Carlo simulations and numerical results to prove the secrecy performance analysis on the NOMA-aided SGF systems through varying parameters, such as transmit SNR, target data rate, and the number of antennas,
{
	etc.
	The main parameters are set as
	$R_{th}$ = 0.1, $R_B$ = 0.9, $N = 2$, $\alpha = 2.2$,  $r_B = r_F = r_E = 10 $ m,
	unless stated otherwise.
	In all the figures, ``Sim'', ``Ana'', and `` Asy'' denote the simulation, numerical results,  and asymptotic analysis respectively.
}
The results in all the figures demonstrate that the analytical results perfectly match the simulation results, verifying our analysis's accuracy.

\subsection{{SOP of the NOMA-aided SGF system with a single-GF-user}}

\begin{figure}[t]
	\centering
	\subfigure[SOP for varying $R_{th}$ and $R_B$.  ]{
		\label{fig03a}
		\includegraphics[width = 2.5in]{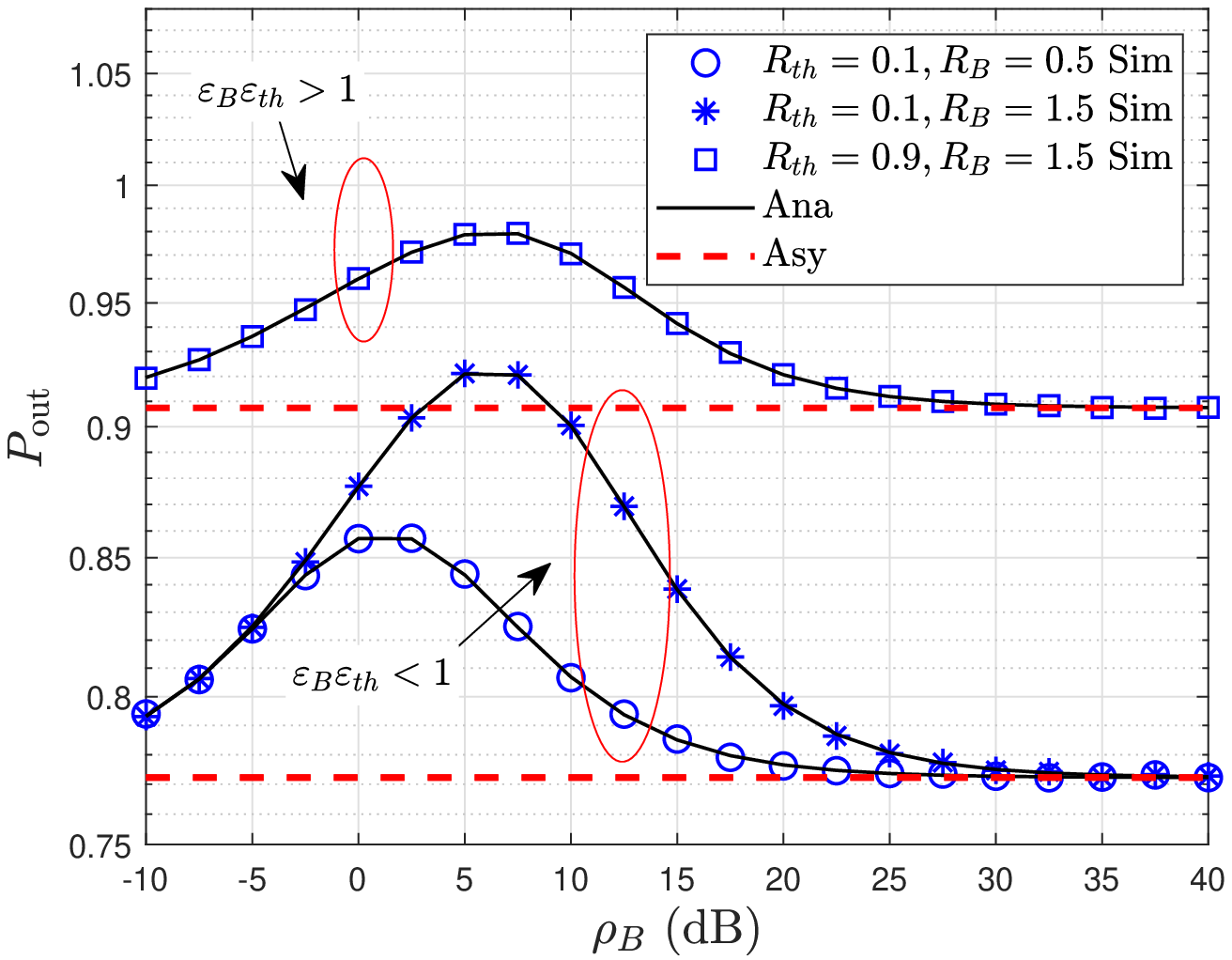}}
	\subfigure[SOP for varying $\rho_F$.]{
		\label{fig03b} 
		\includegraphics[width = 2.5in]{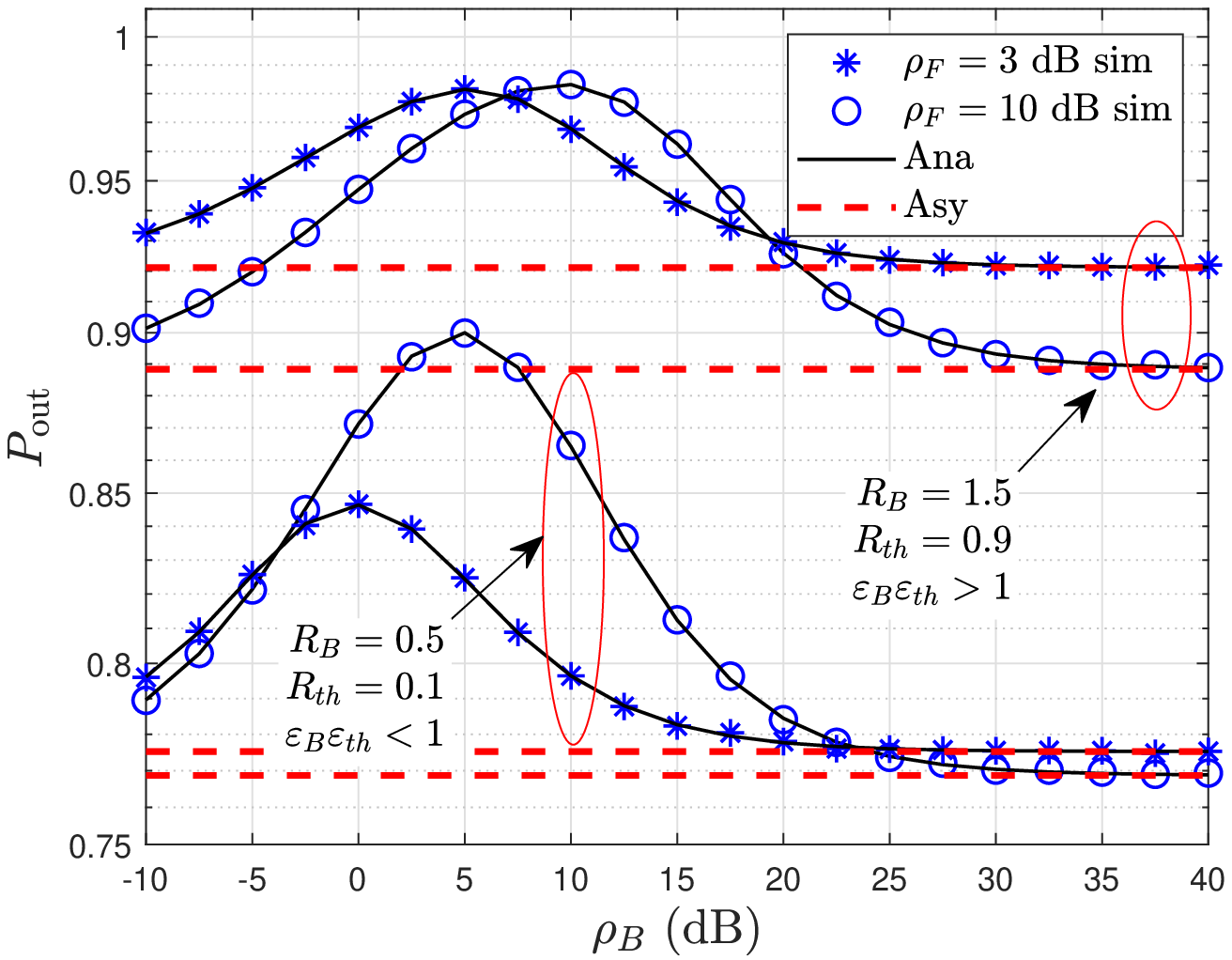}}
	\caption{SOP of the single-GF-user NOMA-aided SGF system with respect to ${\epsilon_B}{\epsilon_{th}}$ under varying $\rho_B$.}
	\label{fig03}
\end{figure}

Fig. \ref{fig03} demonstrates the SOP of the single-GF-user NOMA-aided SGF system with varying $\rho_B$. One can easily observe that the SOP increases initially and subsequently decreases with increasing $\rho_B$.
This is because $\alpha_B$ decreases as $\rho_B$ increases, then the probability that the signals from $U_F$ are decoded first decreases for a given $\rho_F$.
In the lower-$\rho_B$ region, the interference power tolerated for the $U_B$ is limited, so the signals from $U_F$ are always decoded first to ensure the QoS of $U_B$. The achievable rate for $U_F$ ($R_F^{\mathrm{I}}$) decreases with increasing of $\rho_B$ while the eavesdropping rate is independent of $\rho_B$; thus, the secrecy performance deteriorates.
As the $\rho_B$ increases, $\tau _B$ increases, whereas the probability of decoding signals from $U_F$ during the first stage of SIC decreases. In the larger-$\rho_B$ region, SOP tends to be a constant, independent of $\rho_B$ but depends on $\rho_F$ and $R_{th}$. Moreover, the effect from $R_{th}$ is relatively larger than that from $R_B$ since $R_B$ only affects $\tau_B$, i.e., the probability of decoding $x_F$ first, while $R_{th}$ not only affects the probability of decoding $x_F$ first but also affects the achievable rate for $U_F$ ($R_F^{\mathrm{I}}$).

\begin{figure}[t]
	\centering
	\subfigure[SOP for varying $R_{th}$ and $R_B$.  ]{
		\label{fig04a}
		\includegraphics[width = 2.5in]{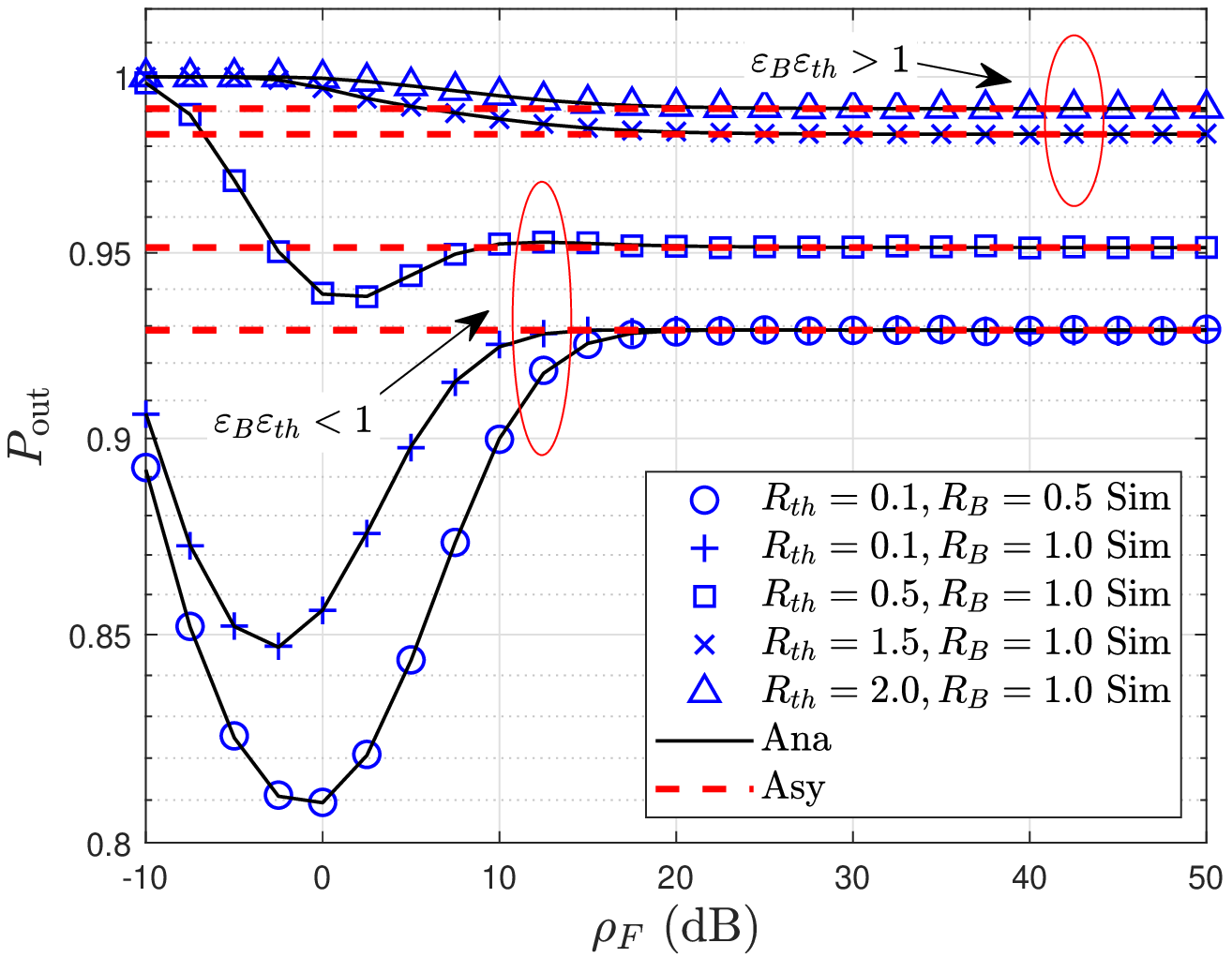}}
	\subfigure[SOP for varying $\rho_B$.]{
		\label{fig04b} 
		\includegraphics[width = 2.5in]{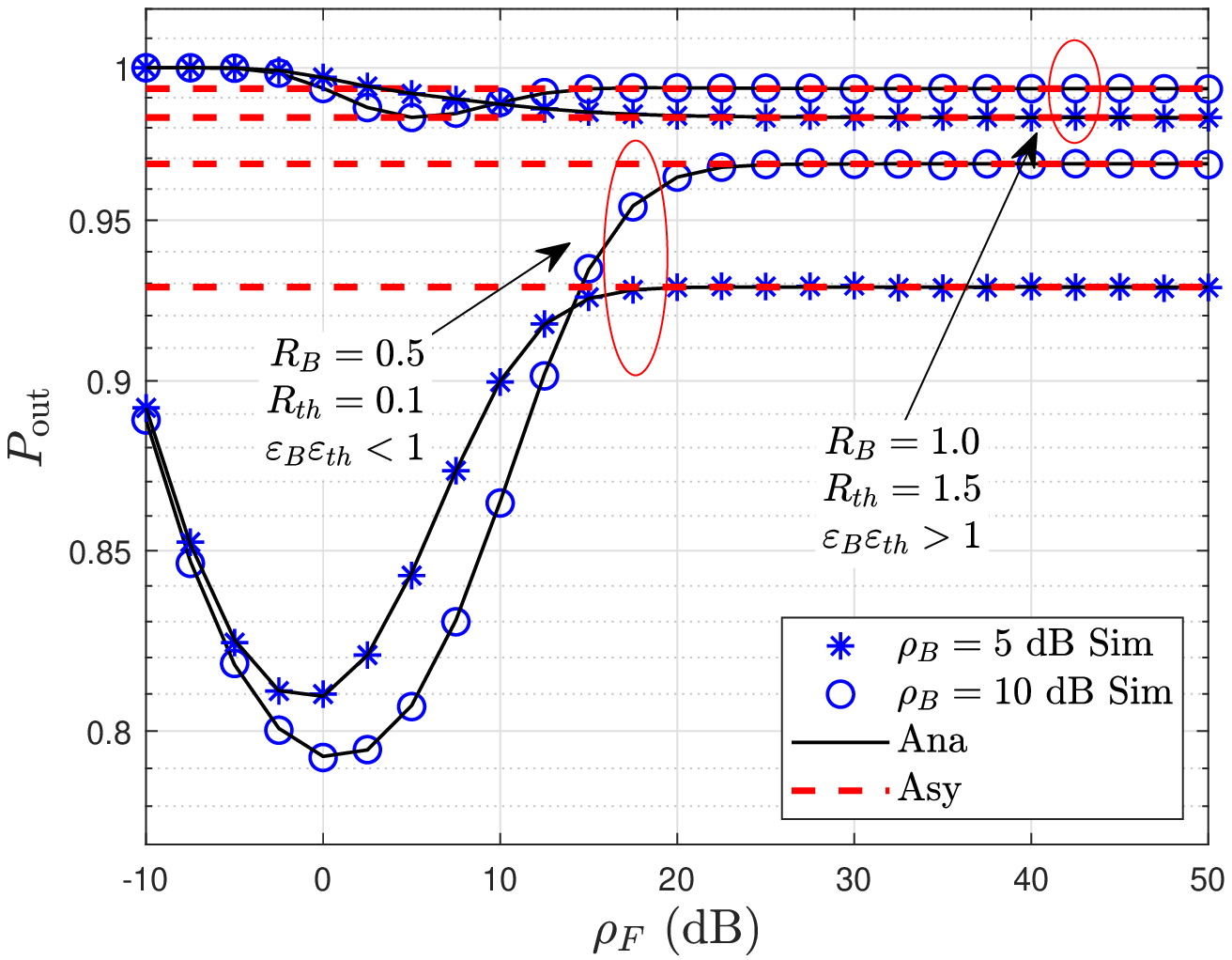}}
	\caption{SOP of the single-GF-user NOMA-aided SGF system with respect to ${\epsilon_B}{\epsilon_{th}}$ under varying $\rho_F$ .}
	\label{fig04}
\end{figure}
Figs. \ref{fig04} describes SOP of $U_F$ with varying $\rho_F$.
One can observe that SOP in the larger-$\rho_F$ region tends to be a constant. This is because the probability of decoding signals from $U_F$ during the first stage of SIC increases with increasing $\rho_F$, i.e. $\Pr \left\{ {{\rho_F}{{\left| {{h_F}} \right|}^2} > \tau \left( {{{\left| {{h_B}} \right|}^2}} \right)} \right\} \to 1$. Thus, we have $P_{out}^{\mathrm{II}} \to 0$ and $P_{out} = P_{out}^{\mathrm{I}} = \Pr \left\{ {R_s^{\mathrm{I}} < {R_{th}}} \right\}$, which depends on $R_{th}$ and $\rho_B$. Further, the SOP trends in the lower-$\rho_F$ region vary with different ${\varepsilon _B}{\varepsilon _{th}}$.
Specifically, when ${\varepsilon _B}{\varepsilon _{th}} > 1$, SOP decreases with increasing $\rho_F$. For the case with ${\varepsilon _B}{\varepsilon _{th}} < 1$ in lower-$\rho_F$ region, SOP firstly decreases, then increases to a constant.
An important factor is the probability of decoding during the first or second stage, which depends on $\rho_F$, $\rho_B$, and $\alpha_B$.
As $\rho_F$ increases or/and $\alpha_B$ increases, the probability of first decoding increases, and SOP increases.
As $\rho_B$ and $\tau_B$ increase, the probability of first decoding decreases, then SOP decreases, as shown in Fig. \ref{fig04}.

\begin{figure}[t]
	\centering
	\includegraphics[width = 2.5 in]{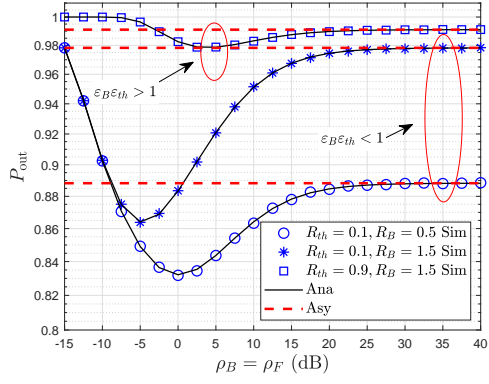}
	\caption{SOP of the single-GF-user NOMA-aided SGF system with respect to ${\epsilon_B}{\epsilon_{th}}$ under increasing $\rho_B$ = $\rho_F$.}
	\label{fig05}
\end{figure}
Fig. \ref{fig05} demonstrates SOP versus varying $\rho_B = \rho_F$ simultaneously.
One can observe that SOP of $U_F$ is enhanced and then becomes worse until it tends to a constant depending on ${R_B}$ and ${R_{th}}$ with increasing $\rho_B = \rho_F$.
This is because $\rho_F$ affects both the signal-to-interference-noise ratio (SINR) at $U_B$ and SNR at $E$ while $\rho_B$ only influences the SINR at $U_B$. Thus, $\rho_F$ has a stronger effect on SOP relative to $\rho_B$ when $\rho_B = \rho_F$ vary simultaneously.
Furthermore, when $\rho_B = \rho_F$ vary in a smaller range simultaneously, SOP depends mainly on $R_{th}$.
There is an optimal transmit SNR depending on ${R_B}$ and ${R_{th}}$ to obtain the minimum SOP in these scenarios.

\subsection{{SOP of the NOMA-aided SGF system with multiple GF users}}

\begin{figure}[t]
	\centering
	\subfigure[SOP for varying $K$ . ]{
		\label{fig06a}
		\includegraphics[width = 2.5in]{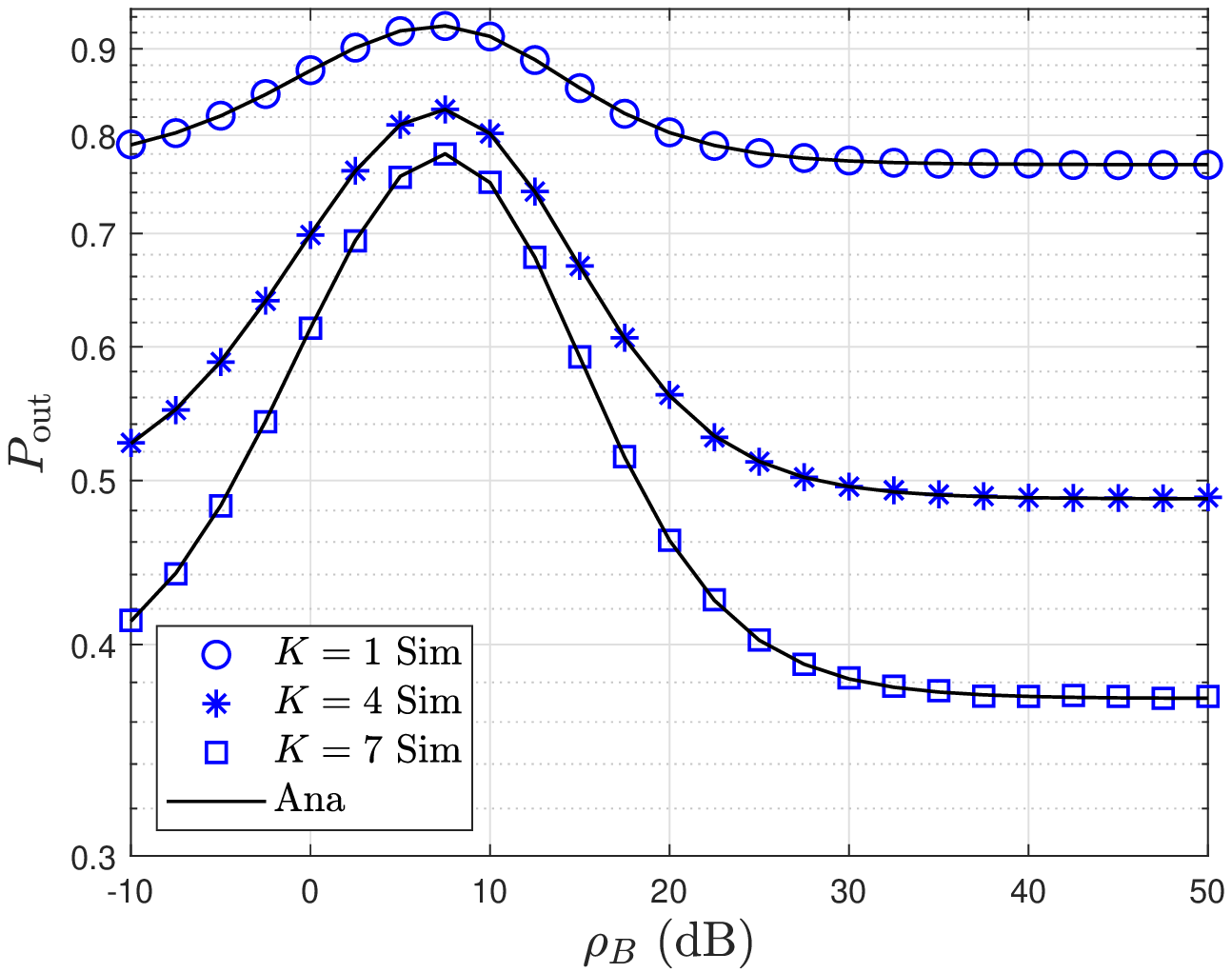}}
	\subfigure[ SOP for varying $R_{th}$ and $R_B$. ]{
		\label{fig06b} 
		\includegraphics[width = 2.5in]{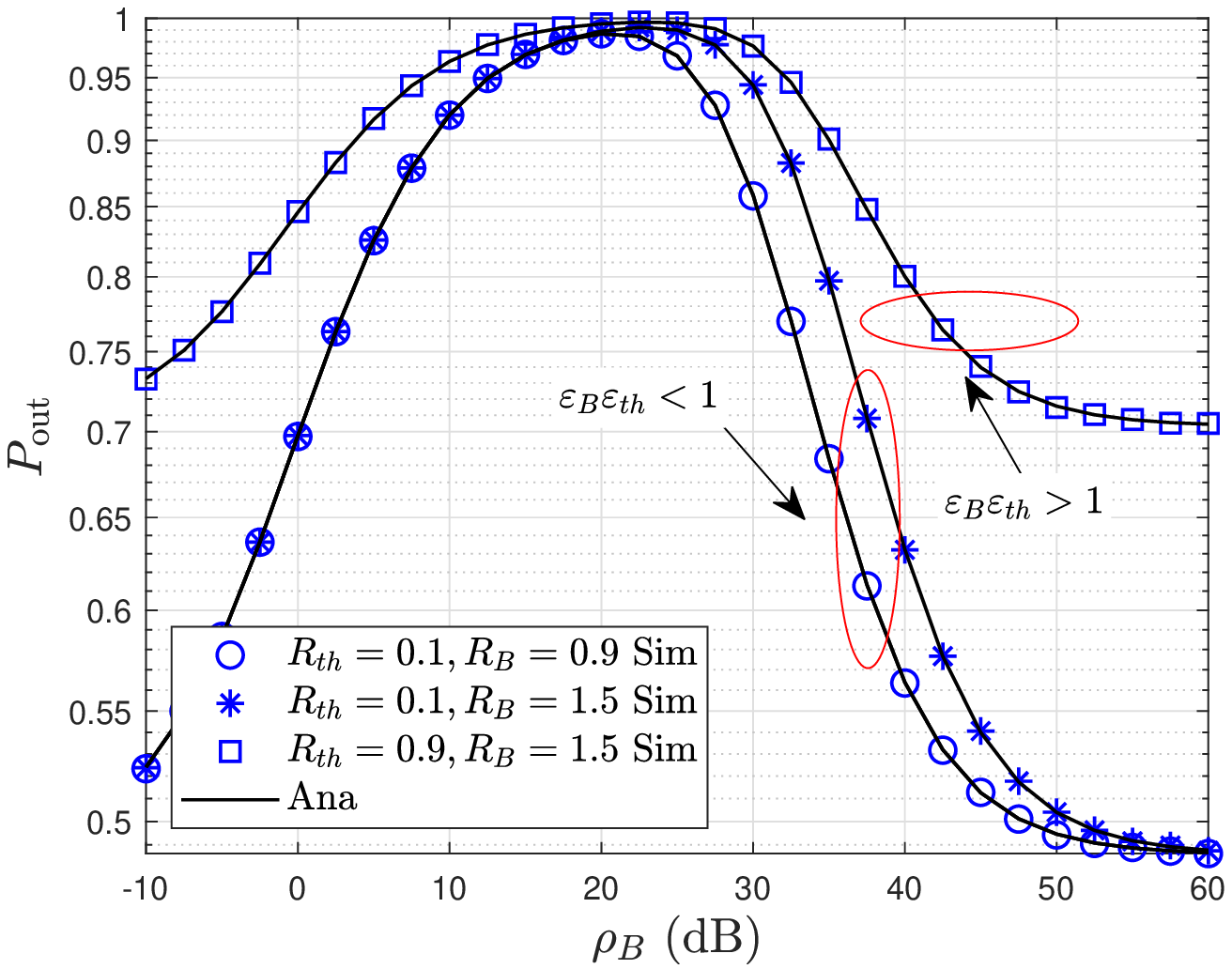}}
	\caption{SOP of the multiple-GF-user NOMA-aided SGF system experiencing $\rho_F$ = 10 dB .}
	\label{fig06}
\end{figure}
\begin{figure}[t]
	\centering
	\subfigure[SOP for varying $K$. ]{
		\label{fig07a}
		\includegraphics[width = 2.5in]{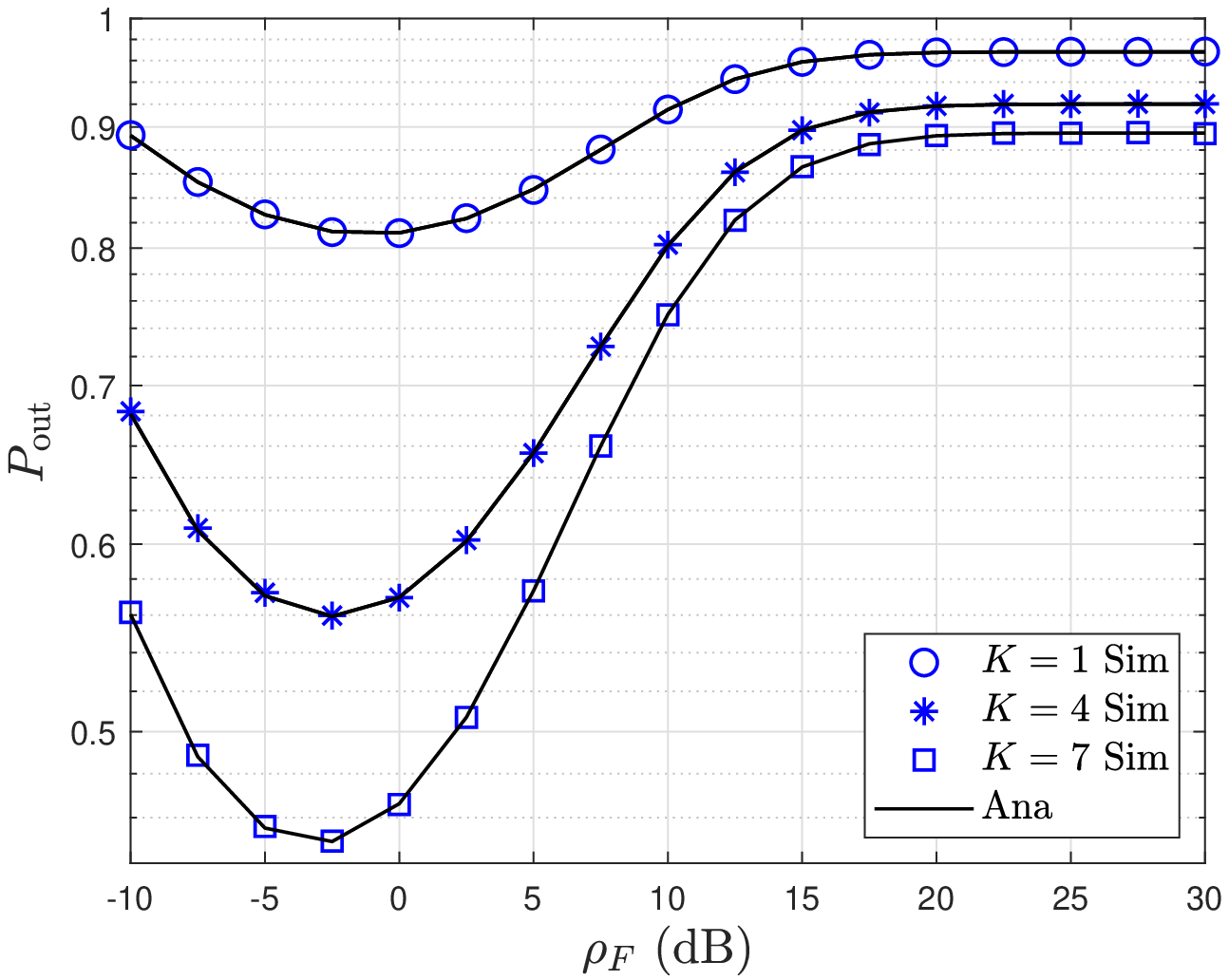}}
	\subfigure[SOP for varying $R_{th}$ and $R_B$.]{
		\label{fig07b} 
		\includegraphics[width = 2.5in]{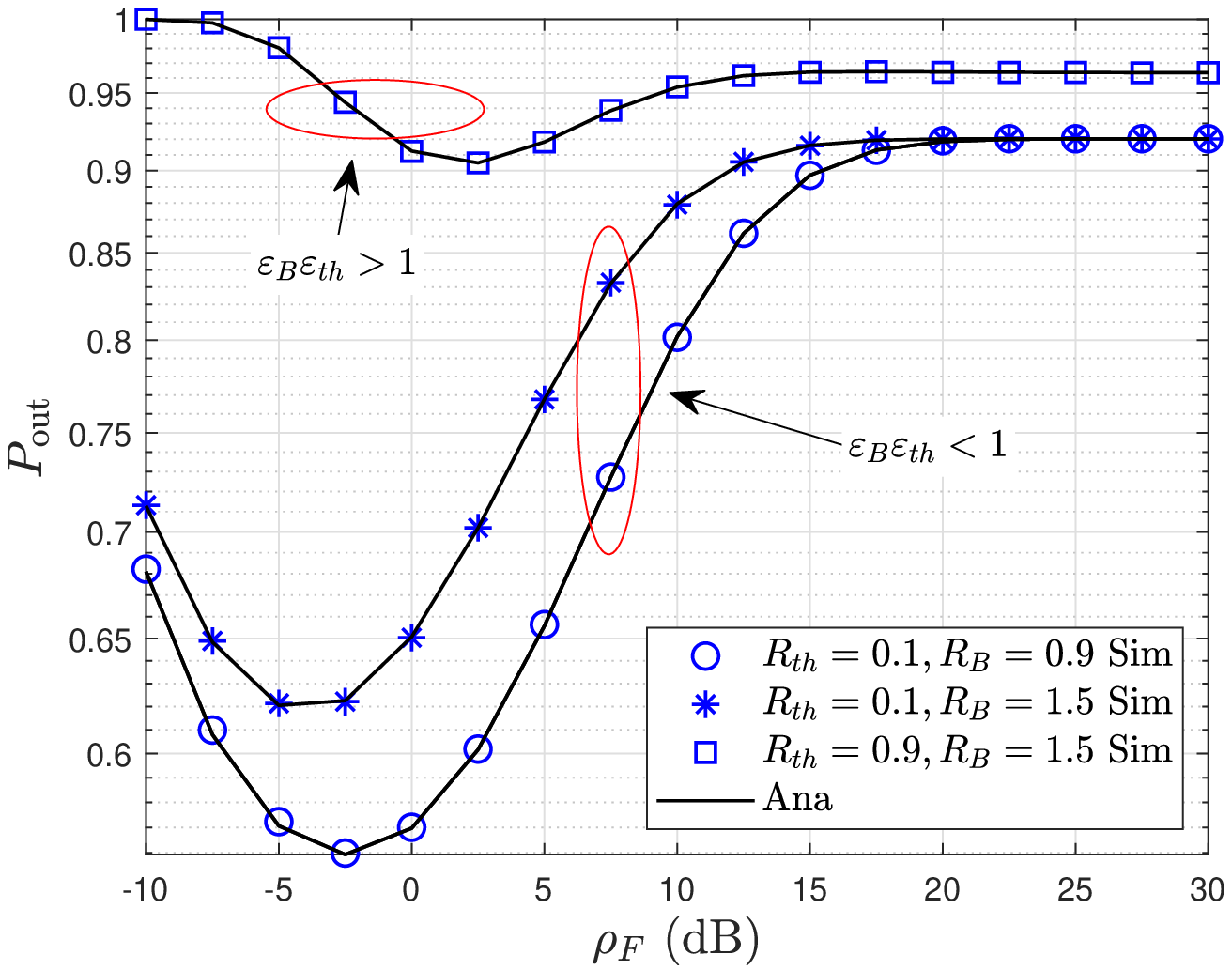}}
	\caption{SOP of the multiple-GF-user NOMA-aided SGF system experiencing $\rho_B$ = 10 dB.}
	\label{fig07}
\end{figure}
Figs. \ref{fig06} and \ref{fig07} demonstrate the impact of various $K$, ${R_B}$, and ${R_{th}}$ on SOP of $U_F$.
As can be observed from the figure, with the increase of $\rho_B$, SOP first increases and then decreases to a constant depending on $K$ and ${R_{th}}$.
Moreover, with an increase in $K$, the SOP improves since the better GF user is selected to access the channel, enhancing the secrecy performance.
Based on Figs. \ref{fig06} and \ref{fig07}, one can observe that the effect of the transmit SNR, $\rho_B$, and $\rho_F$, on the SOP with multiple GF users is similar to that in Figs. \ref{fig03} - \ref{fig04} with a single GF user.

\begin{figure*}[t]
	\centering
	\subfigure[SOP for varying $R_B$ and $R_{th}$.]{
		\label{fig007a}
		\includegraphics[width = 0.3 \textwidth]{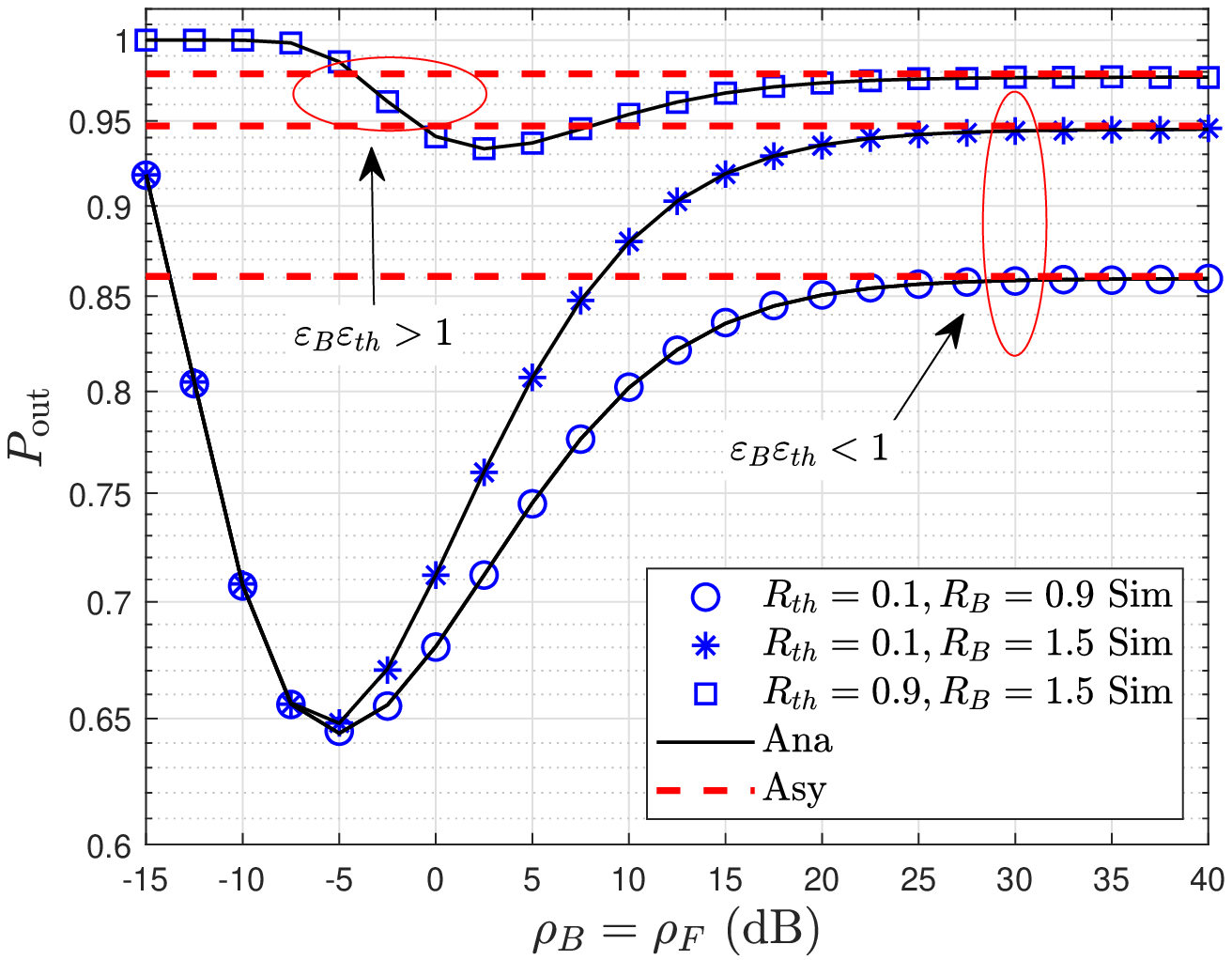}}
	\subfigure[SOP for varying $K$.]{
		\label{fig007b}
		\includegraphics[width = 0.3 \textwidth]{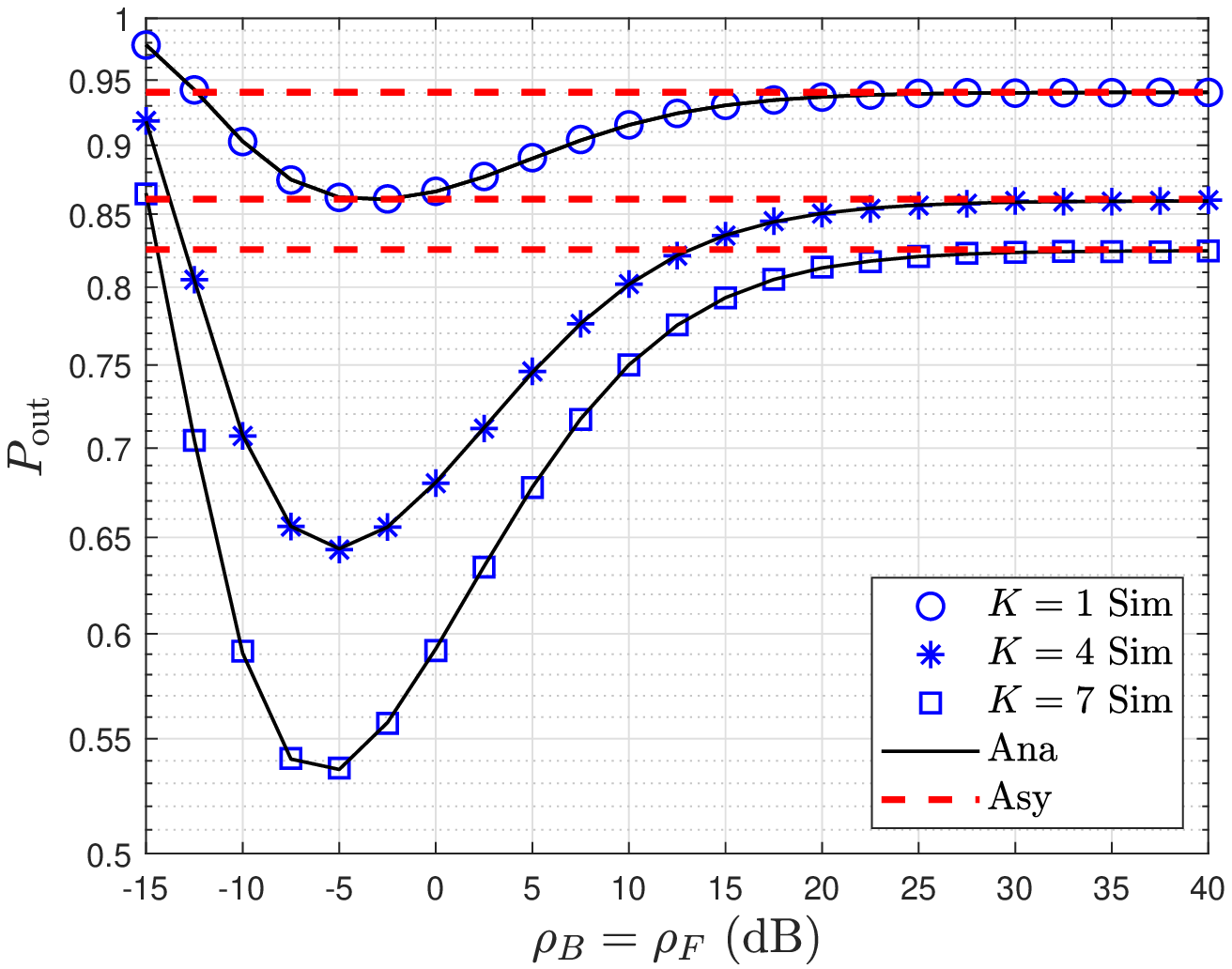}}	
	\subfigure[SOP for varying $N$.]{
		\label{fig007c}
		\includegraphics[width = 0.3 \textwidth]{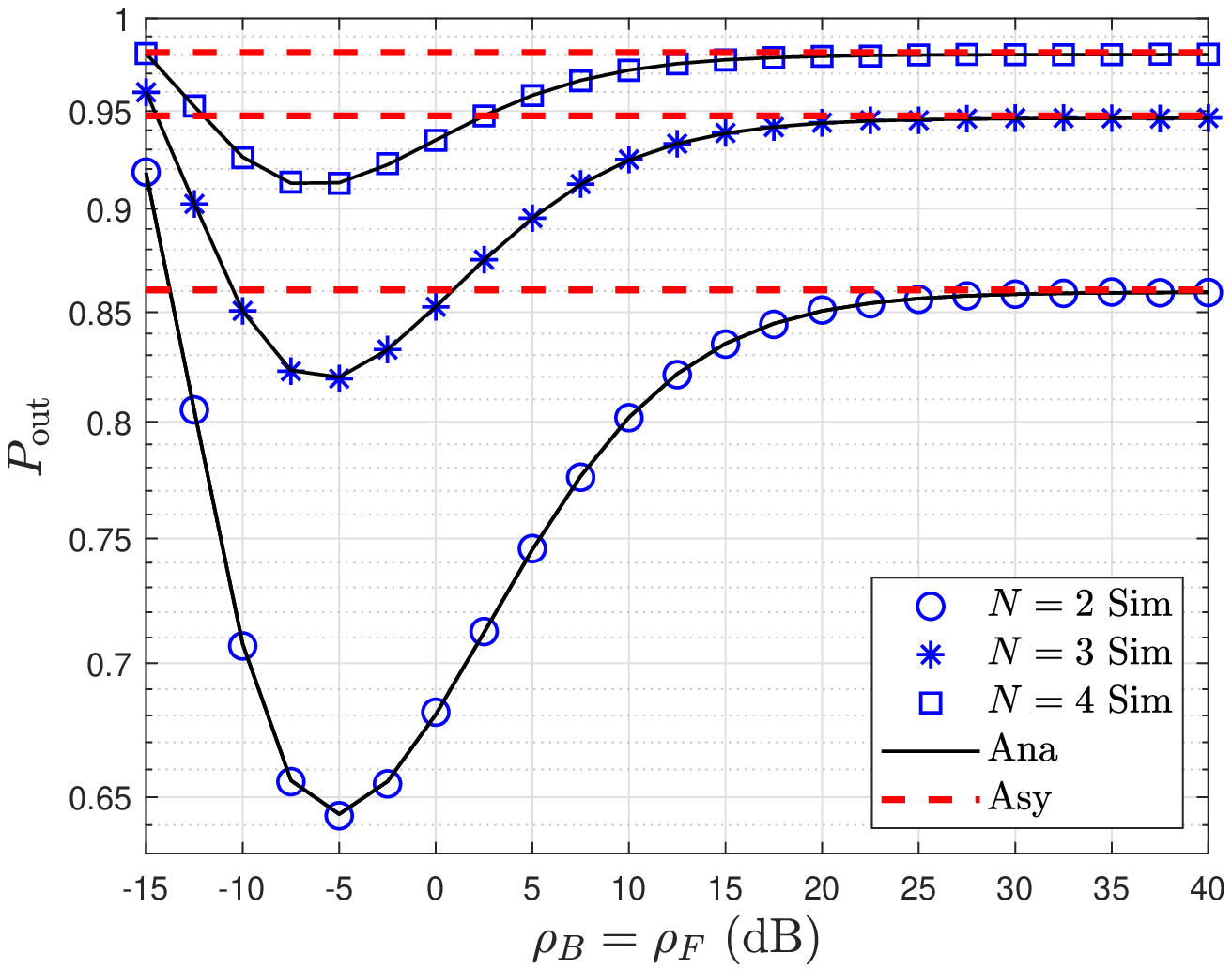}}
	\caption{SOP of the multiple-GF-user NOMA-aided SGF system versus varying $\rho_B = \rho_F$.}
	\label{fig007}
\end{figure*}

Fig. \ref{fig007} plots the effects of varying $K$, $R_B$, $R_{th}$, and $N$ on SOP versus varying $\rho_B = \rho_F$
One can observe that the curves of SOP in these scenarios are similar to those demonstrated in Fig. \ref{fig05}.
Moreover, from Fig. \ref{fig007c}, one can observe that SOP of $U_F$  becomes worse until it tends to be a constant depending on $N$.
This can be explained by the fact that  weakening diversity at $E$ implies a better security performance of the considered SGF system.

{
Comparing Figs. (\ref{fig03}) and (\ref{fig06}), (\ref{fig04}) and (\ref{fig07}), one interesting conclusion can be drawn that the transmit power of the GF and GB users has an opposite impact on the GF user' secrecy performance.
From the point of view of security of GF users, there exists an optimal $P_F$ and a worst $P_B$.}

\begin{figure*}[t]
	\centering
	\subfigure[SOP for varying $r_B$.]{
		\label{fig10a}
		\includegraphics[width = 0.318 \textwidth]{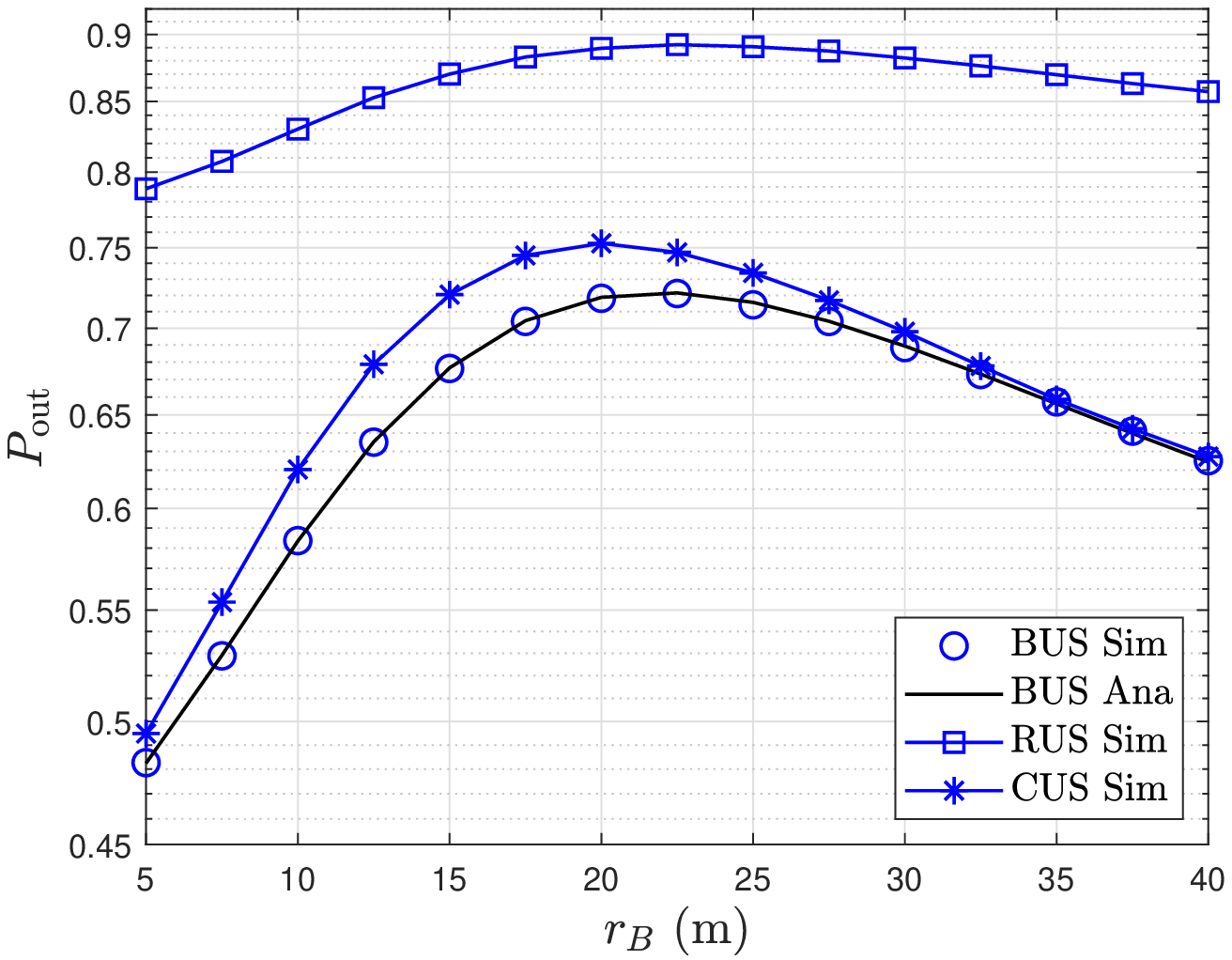}}
	\subfigure[SOP for varying $r_F$.]{
		\label{fig10b}
		\includegraphics[width = 0.318 \textwidth]{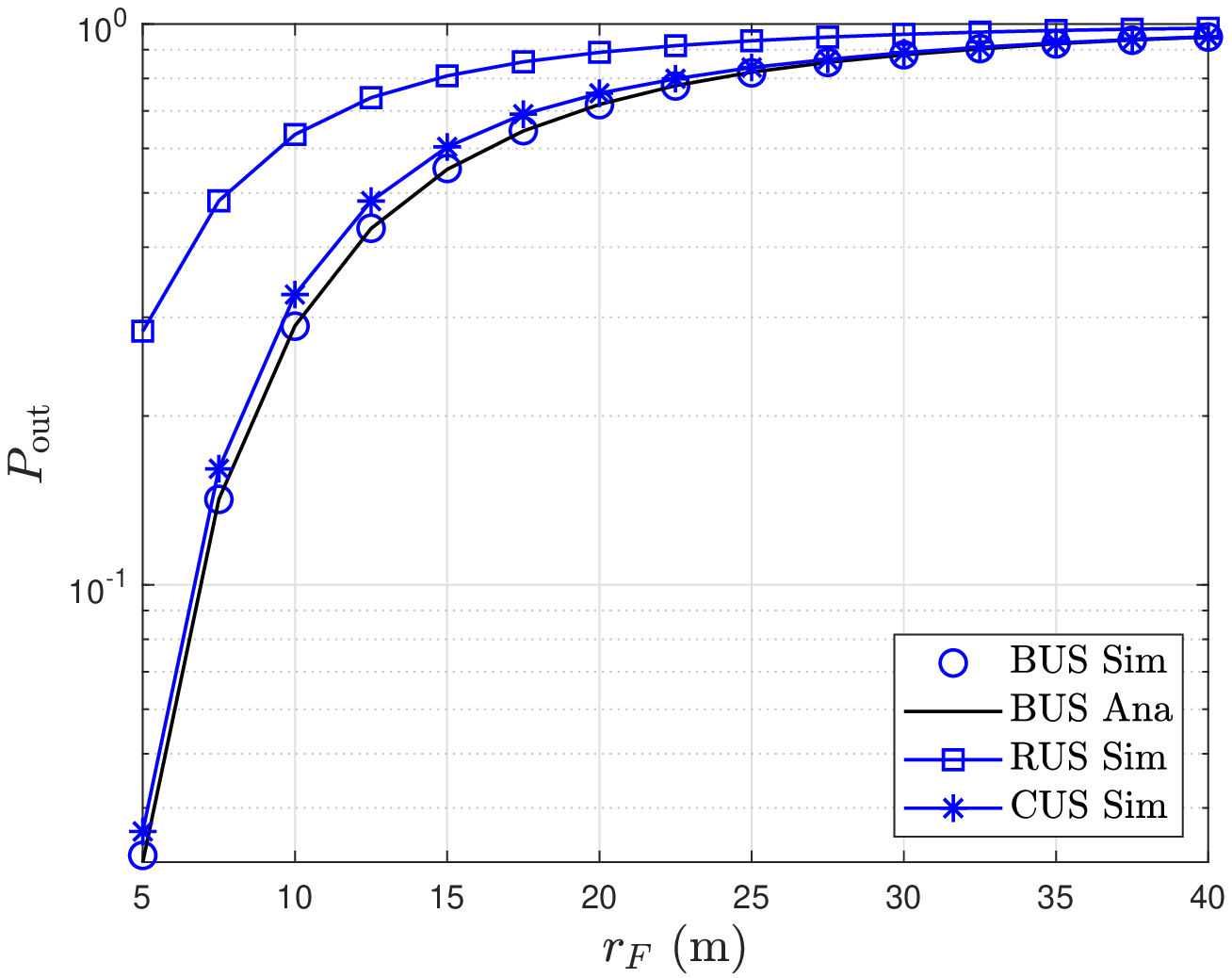}}
	\subfigure[SOP for varying $r_E$.]{
		\label{fig10c}
		\includegraphics[width = 0.318 \textwidth]{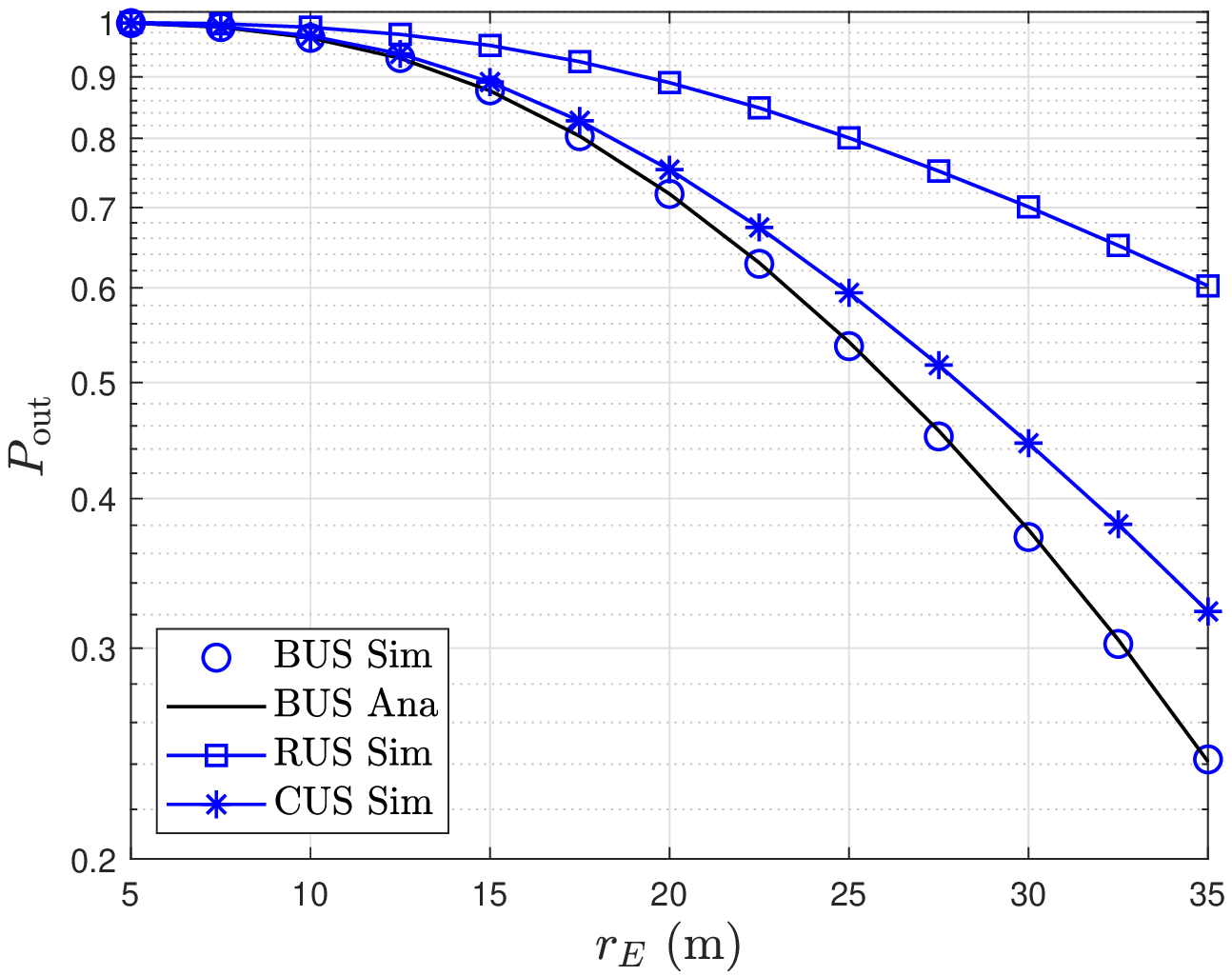}}
	\caption{SOP of the multiple-GF-user NOMA-aided SGF system for different user scheduling schemes with $\rho_B = \rho_F = 5$ dB.}
	\label{fig10}
\end{figure*}

{
Fig. \ref{fig10} demonstrates the NOMA-aided SGF system for different user scheduling schemes with varying $r_B$, $r_F$, and $r_E$.
From Fig. \ref{fig10a}, one can observe that the SOP increases initially and subsequently decreases with increasing $r_B$.
The achievable rate for $U_F$ decreases with increasing $r_B$ thereby the secrecy performance deteriorates.
As the $r_B$ increases, $\tau_B$ increases, whereas the probability of decoding signals from $U_F$ during the second stage of SIC increases. Thus, security of $U_F$ with all the schemes is enhanced.
Figs. \ref{fig10b} and \ref{fig10c} demonstrate that $r_F$ and $r_E$ have an opposite impact on the GF user's secrecy performance, which is easy to follow.
Furthermore, the BUS scheme obtains the best security while the RUS scheme obtains the worst secrecy performance.
This is because the GF user with maximum data rate is scheduled to transmit signals in the BUS scheme while a GF user is selected randomly in the RUS scheme.
Moreover, it can be observed that the difference between the secrecy performance with the BUS and CUS schemes is minor in the lower/larger-$r_B$ region (Fig. \ref{fig10a}) and lower/larger-$r_F$ (Fig. \ref{fig10b}).
The reason is as follows.
The CUS scheme is proposed to solve the fairness between GF users due to the difference in path loss in each group.
In the scenarios with lower/larger-$r_B$ region (Fig. \ref{fig10a}) or lower/larger-$r_F$ (Fig. \ref{fig10b}), the GF users belong to the same group with high probability.
Assuming the same distance between the GF user and the base station, the user with the maximum power gain leads to the maximum rate.
Thus, the secrecy performance with BUS and CUS schemes is equal.
}

\section{Conclusion}
\label{sec:Conclusion}
In this paper, we investigated the secrecy outage performance of the NOMA-aided SGF systems.
With the premise that GF users are entirely transparent for GB users, we first analyzed the NOMA-aided SGF system with a single GF user.
Subsequently, the secrecy performance of NOMA-aided SGF systems with multiple GF users was investigated.
The effects of all the parameters, such as the target data rate of GB users, the secrecy threshold rate of GF users, and transmit powers on GB and GF users, were discussed.
Monte-Carlo simulation results were presented to validate the correctness of the derived analytical expressions.

{
SIC and CSI are assumed to be perfect in this work, which is a typical assumption in many works, like \cite{AbbasR2019TCOM}-\cite{YangZ2020WCL}.
The performance results assuming perfect SIC can be seen as an upper bound of the case with imperfect SIC and worst-case SIC, respectively.
An exciting direction for future research is investigating the performance of NOMA-aided SGF systems with imperfect SIC and CSI.
In this work, it assumed that all users transmit at fixed power. However, the results in \cite{SunY2021TVT} and \cite{LuH2021TWC} showed that the system performance could be enhanced by carefully adjusting the transmit power of the GF and GB users.
As we analyzed previously, there exists an optimal $P_F$ and a worst $P_B$ for the security of GF users.
Thus, analyzing the secrecy performance of the NOMA-based SGF systems wherein both the transmit powers of the GB and GF users are dynamically adjusted in a coordinated manner will be exciting subsequent work.
To facilitate performance analysis, it is assumed that all the GF users are located in a small cluster, such that the distances between GF users and the base station are the same.
Another interesting problem is analyzing the performance of NOMA-aided SGF systems with multiple randomly distributed GB users, GF users, and eavesdroppers via stochastic geometry.
Furthermore, machine-type GF users in mMTC applications often have small data packets. Fairness is another issue that is as important as security.
Analyzing the secrecy performance of NOMA-based SGF systems for short-packet transmission with the different user scheduling schemes also is an exciting problem.
}
	
\begin{appendices}	
\section{Proof of Theorem 1  }
\label{appendicesA}
\setcounter{equation}{0}
\renewcommand\theequation{A.\arabic{equation}}	

\subsection{Derivation of ${P_{out}^{\mathrm{I}}}$}	
	
Based on the definition of $\tau \left( {{{\left| {{h_B}} \right|}^2}} \right)$, $P_{out}^{\mathrm{I}}$ is expressed as
	\begin{equation}
	\begin{aligned}
		P_{out}^{\mathrm{I}} &= \Pr \left\{ {R_s^{\mathrm{I}} < {R_{th}},{\rho _F}{{\left| {{h_F}} \right|}^2} > \tau \left( {{{\left| {{h_B}} \right|}^2}} \right),\tau \left( {{{\left| {{h_B}} \right|}^2}} \right) < 0} \right\}\\
		&+\Pr \left\{ {R_s^{\mathrm{I}} < {R_{th}},{\rho _F}{{\left| {{h_F}} \right|}^2} > \tau \left( {{{\left| {{h_B}} \right|}^2}} \right),\tau \left( {{{\left| {{h_B}} \right|}^2}} \right) > 0} \right\} \\
		&= \Pr \left\{ {R_s^{\mathrm{I}} < {R_{th}},{\rho_F}{{\left| {{h_F}} \right|}^2} > 0,{{\left| {{h_B}} \right|}^2} < {\alpha _B}} \right\} \\
		&+ \Pr \left\{ {R_s^{\mathrm{I}} < {R_{th}},{\rho_F}{{\left| {{h_F}} \right|}^2} > {\tau _B},{{\left| {{h_B}} \right|}^2} > {\alpha _B}} \right\}\\
		&= \underbrace {\Pr \left\{ {R_s^{\mathrm{I}} < {R_{th}},{{\left| {{h_B}} \right|}^2} < {\alpha _B}} \right\}}_{P_{out}^{\mathrm{I},1}}
		+ \underbrace {\Pr \left\{ {R_s^{\mathrm{I}} < {R_{th}},{{\left| {{h_F}} \right|}^2} > \frac{{{\tau _B}}}{{{\rho_F}}},{{\left| {{h_B}} \right|}^2} > {\alpha _B}} \right\}}_{P_{out}^{\mathrm{I},2}}.
		\label{poutI01}
	\end{aligned}
\end{equation}

In (\ref{poutI01}), ${P_{out}^{\mathrm{I},1}}$ denotes the SOP of $U_F$ when $U_B$ is in outage while accessing the channel alone.
In these scenarios, $S$ can not successfully decode the signals from $U_B$ while decoding signals from $U_F$ is a unique choice. ${P_{out}^{\mathrm{I},2}}$ signifies SOP of $U_F$ when $U_B$ is not in outage while accessing the channel alone. In this scenario, although $S$ can successfully decode the signals from $U_B$, the QoS of $U_B$ cannot be guaranteed because of the interference caused by $U_F$. Therefore, the signals from $U_F$ must be decoded at the first stage of SIC.

Substituting (\ref{rateF}) into (\ref{poutI01}) and after some algebraic manipulations, we obtain
	\begin{equation}
	\begin{aligned}
		P_{out}^{\mathrm{I},1} &= \Pr \left\{ {{{\log }_2}\left( {1 + \frac{{{\rho_F}{{\left| {{h_F}} \right|}^2}}}{{1 + {\rho_B}{{\left| {{h_B}} \right|}^2}}}} \right) - {{\log }_2}\left( {1 + {\rho_F}{{\left| {{H_E}} \right|}^2}} \right) < {R_{th}},{{\left| {{h_B}} \right|}^2} < {\alpha _B}} \right\} \\
		&= \Pr \left\{ {{{\left| {{h_F}} \right|}^2} < \underbrace {{\rho_B}{\theta _{th}}{{\left| {{h_B}} \right|}^2}{{\left| {{H_E}} \right|}^2} + {\theta _{th}}{{\left| {{H_E}} \right|}^2} + {\rho_B}{\alpha _{th}}{{\left| {{h_B}} \right|}^2} + {\alpha _{th}}}_{ \triangleq {\omega _0}\left( {{{\left| {{h_B}} \right|}^2},{{\left| {{H_E}} \right|}^2}} \right)},{{\left| {{h_B}} \right|}^2} < {\alpha _B}} \right\} \\
		&= \int_0^\infty  {\int_0^{{\alpha _B}} { {{F_{{{\left| {{h_F}} \right|}^2}}}\left( {{\omega _0}\left( {x,y} \right)} \right)} {f_{{{\left| {{h_B}} \right|}^2}}}\left( x \right)dx{f_{{{\left| {{H_E}} \right|}^2}}}\left( y \right)dy} }  \\
		&{= \frac{{r_B^\alpha r_E^{N\alpha }}}{{\Gamma \left( N \right)}}\int_0^\infty  {{y^{N - 1}}{e^{ - r_E^\alpha y}}\int_0^{{\alpha _B}} {{e^{ - r_B^\alpha x}}dxdy} }}\\
		& {- \frac{{r_B^\alpha r_E^{N\alpha }{e^{ - r_F^\alpha {\alpha _{th}}}}}}{{\Gamma \left( N \right)}}\int_0^\infty  {\int_0^{{\alpha _B}} {{y^{N - 1}}{e^{ - {\lambda _1}xy - {\lambda _2}x - {\lambda _3}y}}dxdy} } }\\
		&{= 1 - {e^{ - r_B^\alpha {\alpha _B}}} - \frac{{r_B^\alpha r_E^{N\alpha }{e^{ - r_F^\alpha {\alpha _{th}}}}{\omega _1}\left( {{\lambda _1},{\lambda _2},{\lambda _3}} \right)}}{{\Gamma \left( N \right)}},}
		\label{poutI0101}
	\end{aligned}
\end{equation}
where
${\omega _1}\left( {a,b,c} \right)   \mathop  = \limits^{\left( a \right)} \frac{{{b^{N - 1}}\Gamma \left( N \right)}}{{{a^N}}}{e^{\frac{{bc}}{a}}}\left( {\Gamma \left( {1 - N,\frac{{bc}}{a}} \right) - \Gamma \left( {1 - N,b{\alpha _B} + \frac{{bc}}{a}} \right)} \right)$,
${\alpha _{th}} = \frac{{\varepsilon _{th}}}{{{\rho_F}}}$,
${\varepsilon _{th}} = {\theta _{th}} - 1$,
${\theta _{th}}= {2^{{R_{th}}}} $,
{
${\lambda _1} = r_F^\alpha {\rho _B}{\theta _{th}}$,
${\lambda _2} = r_F^\alpha {\rho _B}{\alpha _{th}} + r_B^\alpha $,
${\lambda _3} = r_F^\alpha {\theta _{th}} + r_E^\alpha $,
}
and
$(a)$ is obtained via utilizing \cite[(3.383.10)]{Gradshteyn2007Book}.

Similarly, we obtain
\begin{equation}
	\begin{aligned}
		P_{out}^{\mathrm{I},2} &= \Pr \left\{ {{{\left| {{h_F}} \right|}^2} < {\omega _0}\left( {{{\left| {{h_B}} \right|}^2},{{\left| {{H_E}} \right|}^2}} \right),{{\left| {{h_F}} \right|}^2} > \frac{{{\tau _B}}}{{{\rho_F}}},{{\left| {{h_B}} \right|}^2} > {\alpha _B}} \right\}.
		\label{ppoutI02}
	\end{aligned}
\end{equation}

The relationship between ${{\omega _0}\left( {{{\left| {{h_B}} \right|}^2},{{\left| {{H_E}} \right|}^2}} \right)}$ and ${\frac{{{\tau _B}}}{{{\rho_F}}}}$ is expressed as
\begin{equation}
	\begin{aligned}
			\Pr \left\{ {\frac{{{\tau _B}}}{{{\rho_F}}} < {\omega _0} \left( {{{\left| {{h_B}} \right|}^2},{{\left| {{H_E}} \right|}^2}} \right)} \right\}
		& = \Pr \left\{ {1 - {\varepsilon _B}{\varepsilon _{th}} - {\varepsilon _B}{\rho _F}{\theta _{th}}{\left| {{H_E}} \right|^2} < 0} \right\}\\
		&+\Pr \left\{ {1 - {\varepsilon _B}{\varepsilon _{th}} - {\varepsilon _B}{\rho _F}{\theta _{th}}{\left| {{H_E}} \right|^2} > 0,{{\left| {{{h_B}}} \right|}^2} < {\alpha _2}} \right\}\\		
		& = \Pr \left\{ {{{\left| {{H_E}} \right|}^2} > {\alpha _1}} \right\} + \Pr \left\{ {{{{\left| {{H_E}} \right|}^2} } < {\alpha _1},{{\left| {{{h_B}}} \right|}^2} < {\alpha _2}} \right\} ,
		\label{relationship1}
	\end{aligned}
\end{equation}
where
${\theta _B}= {2^{{R_B}}} $,
${\alpha _1} = \frac{{1 - {\varepsilon _B}{\varepsilon _{th}}}}{{{\rho_F}{\theta _{th}}{\varepsilon _B}}}$,
${\alpha _2} = \frac{{{\rho_F}{\varepsilon _1}{{\left| {{H_E}} \right|}^2} + {\varepsilon _1}}}{{ - {\rho_F}{\theta _{th}}{\varepsilon _B}{{\left| {{H_E}} \right|}^2} + 1 - {\varepsilon _B}{\varepsilon _{th}}}} = \frac{{{\alpha _3}}}{{{\alpha _1} - {{\left| {{H_E}} \right|}^2}}} - \frac{1}{{{\rho_B}}}$,
${\varepsilon _1} = {\alpha _B}{\theta _{th}}$,
and
${\alpha _3} = \frac{{ {\theta _B}}}{{{\rho_F}{\theta _{th}}{\rho_B}{\varepsilon _B}}}$.
Eq. (\ref{relationship1}) is easy to follow, while the first item denotes the scenario that the eavesdropper link is too strong and the second term denotes that the eavesdropper link is relatively weak and there is a constraint on the GB link from the eavesdropper link.
Moreover, the relationship $ {\alpha _1}$ and 0 has important effect on the relationship between ${{\omega _0}\left( {{{\left| {{h_B}} \right|}^2},{{\left| {{H_E}} \right|}^2}} \right)}$ and ${\frac{{{\tau _B}}}{{{\rho_F}}}}$.

(i) When ${\varepsilon _B}{\varepsilon _{th}} < 1$, we have ${\alpha _1} > 0 $. Then, based on (\ref{ppoutI02}), ${P_{out}^{\mathrm{I},2}}$ is obtained as
\begin{equation}
	\begin{aligned}
		{P_{out}^{\mathrm{I},21}} &= \Pr \left\{ {{{\left| {{h_F}} \right|}^2} < {\omega _0}\left( {{{\left| {{h_B}} \right|}^2},{{\left| {{H_E}} \right|}^2}} \right),{{\left| {{h_F}} \right|}^2} > \frac{{{\tau _B}}}{{{\rho_F}}},{{\left| {{h_B}} \right|}^2} > {\alpha _B}} \right\}\\
		& = \Pr \left\{ {\frac{{{\tau _B}}}{{{\rho_F}}} < {{\left| {{h_F}} \right|}^2} < {\omega _0}\left( {{{\left| {{h_B}} \right|}^2},{{\left| {{H_E}} \right|}^2}} \right),{{\left| {{h_B}} \right|}^2} > {\alpha _B},{{\left| {{H_E}} \right|}^2} > {\alpha _1}} \right\}\\
		&+\Pr \left\{ {\frac{{{\tau _B}}}{{{\rho_F}}} < {{\left| {{h_F}} \right|}^2} < {\omega _0}\left( {{{\left| {{h_B}} \right|}^2},{{\left| {{H_E}} \right|}^2}} \right),{\alpha _B} < {{\left| {{h_B}} \right|}^2} < {\alpha _2},{{\left| {{H_E}} \right|}^2} < {\alpha _1}} \right\}\\
		&{= \frac{{{e^{ - r_B^\alpha {\alpha _B}}}r_B^\alpha \Gamma \left( {N,r_E^\alpha {\alpha _1}} \right)}}{{{\varepsilon _2}\Gamma \left( N \right)}} + {e^{\frac{{r_F^\alpha }}{{{P_F}}}}}r_B^\alpha r_E^{N\alpha }\frac{{{\omega _3}\left( {0,{\varepsilon _2},r_E^\alpha } \right)}}{{\Gamma \left( N \right)}}}\\
		& {- {e^{ - r_F^\alpha {\alpha _{th}}}}r_B^\alpha r_E^{N\alpha }\frac{{{\omega _2}\left( {{\lambda _1},{\lambda _2},{\lambda _3}} \right) + {\omega _3}\left( {{\lambda _1},{\lambda _2},{\lambda _3}} \right)}}{{\Gamma \left( N \right)}},}
		\label{poutI21}
	\end{aligned}
\end{equation}
where
\begin{equation}
	\begin{aligned}
		{\omega _2}\left( {a,b,c} \right) &= \int_{{\alpha _1}}^\infty  {\int_{{\alpha _B}}^\infty  {{y^{N - 1}}{e^{ - axy - bx - cy}}dxdy} }  \\
		&\mathop  = \limits^{\left( b \right)} \frac{{{b^{N - 1}}\Gamma \left( N \right)}}{{{a^N}}}{e^{\frac{{bc}}{a}}}\Gamma \left( {1 - N,b{\alpha _B} + \frac{{bc}}{a}} \right)  - {e^{ - b{\alpha _B}}}\Delta ,
		\label{H232}
	\end{aligned}
\end{equation}
\begin{equation}
	\begin{aligned}
     	{\Delta} &= \int_0^{{\alpha _1}} {\frac{{{y^{N - 1}}{e^{ - \left( {a{\alpha _B} + c} \right)y}}}}{{ay + b}}dy} \mathop {{\mathrm{ }} = }\limits^{\left( c \right)} \frac{{\pi {\alpha _1}}}{{2R}}\sum\limits_{r = 1}^R {\frac{{\sqrt {1 - \ell _r^2} }}{{a{\hbar _r} + b}}\hbar _r^{N - 1}{e^{ - \left( {a{\alpha _B} + c} \right){\hbar _r}}}},
		\label{deta1}
\end{aligned}
\end{equation}
and
\begin{equation}
	\begin{aligned}
		{\omega _3}\left( {a,b,c} \right) 
		&= \int_0^{{\alpha _1}} {\frac{{{y^{N - 1}}{e^{ - \left( {a{\alpha _B} + c} \right)y}}}}{{ay + b}}dy}  - {e^{\frac{b}{{{P_B}}} - a{\alpha _3}}}\int_0^{{\alpha _1}} {\frac{{{y^{N - 1}}{e^{\left( {\frac{a}{{{P_B}}} - c} \right)y - \frac{{{\alpha _3}\left( {a{\alpha _1} + b} \right)}}{{y - {\alpha _1}}}}}}}{{ay + b}}dy}\\
		&\mathop  = \limits^{\left( d \right)} \frac{{{b^{N - 1}}\Gamma \left( N \right)}}{{{a^N}}}{e^{\frac{{bc}}{a}}}\Gamma \left( {1 - N,b{\alpha _B} + \frac{{bc}}{a}} \right) - {\omega _2}\left( {a,b,c} \right) \\
		&- {e^{\frac{b}{{{P_B}}} - a{\alpha _3}}}\frac{{\pi {\alpha _1}}}{{2L}}\sum\limits_{l = 1}^L {\frac{{\sqrt {1 - \vartheta _l^2} }}{{a{v_l} + b}}v_l^{N - 1}{e^{\left( {\frac{a}{{{P_B}}} - c} \right){v_l} - \frac{{{\alpha _3}\left( {a{\alpha _1} + b} \right)}}{{{\alpha _1} - {v_l}}}}}},
		\label{H232}
	\end{aligned}
\end{equation}
{
${\varepsilon _2} = \frac{{r_F^\alpha }}{{{P_F}{\alpha _B}}} + r_B^\alpha $,
}
$(b)$ holds by applying \cite[(3.383.10)]{Gradshteyn2007Book},
and
$(c)$ and $(d)$ holds by
\cite[(3.383.10)]{Gradshteyn2007Book} and
applying Gaussian-Chebyshev quadrature \cite[(25.4.30)]{Abramowitz1972Book},
$R$ and $L$ is the summation item, which reflects accuracy vs. complexity,
${\ell _r} = \cos \left( {\frac{{2r - 1}}{{2R}}\pi } \right)$, ${\hbar _r} = \frac{{{\alpha _1}}}{2}\left( {{\ell _r} + 1} \right)$,
${\vartheta _l} = \cos \left( {\frac{{2l - 1}}{{2L}}\pi } \right)$,
and
${v_l} = \frac{{{\alpha _1}}}{2}\left( {{\vartheta _l} + 1} \right)$.

(ii)  When ${\varepsilon _B}{\varepsilon _{th}} > 1$, it has ${\alpha _1} < 0 $, then, $\Pr \left\{ {\frac{{{\tau _B}}}{{{\rho_F}}} < {\omega _0} \left( {{{\left| {{h_B}} \right|}^2},{{\left| {{H_E}} \right|}^2}} \right)} \right\}  = \Pr \left\{ {{{\left| {{H_E}} \right|}^2} > {\alpha _1}} \right\} = 1$. Thus, $P _{out}^{\mathrm{I},2}$ is expressed as
\begin{equation}
	\begin{aligned}
		P _{out}^{\mathrm{I},22} &= \Pr \left\{ {\frac{{{\tau _B}}}{{{\rho_F}}} < {{\left| {{h_F}} \right|}^2} < {\omega _0}\left( {{{\left| {{h_B}} \right|}^2},{{\left| {{H_E}} \right|}^2}} \right),{{\left| {{h_B}} \right|}^2} > {\alpha _B}} \right\}\\
		&= \int_0^\infty  {\int_{{\alpha _B}}^\infty  {\left( {{F_{{{\left| {{h_F}} \right|}^2}}}\left( {{\omega _0}\left( {x,y} \right)} \right) - {F_{{{\left| {{h_F}} \right|}^2}}}\left( {\frac{{{\tau _B}}}{{{\rho_F}}}} \right)} \right){f_{{{\left| {{h_B}} \right|}^2}}}\left( x \right)dx{f_{{{\left| {{H_E}} \right|}^2}}}\left( y \right)dy} } \\
		& {= \frac{{r_B^\alpha r_E^{N\alpha }{e^{\frac{{r_F^\alpha }}{{{P_{\rm{F}}}}}}}}}{{\Gamma \left( N \right)}}\int_0^\infty  {\int_{{\alpha _B}}^\infty  {{y^{N - 1}}{e^{ - {\varepsilon _2}x - r_E^\alpha y}}dxdy} }} \\
		& {- \frac{{r_B^\alpha r_E^{N\alpha }{e^{ - r_F^\alpha {\alpha _{th}}}}}}{{\Gamma \left( N \right)}}\int_0^\infty  {\int_{{\alpha _B}}^\infty  {{y^{N - 1}}{e^{ - {\lambda _1}xy - {\lambda _2}x - {\lambda _3}y}}dxdy} } }\\
		&{= \frac{{r_B^\alpha {e^{ - r_B^\alpha {\alpha _B}}}}}{{{\varepsilon _2}}} - \frac{{r_B^\alpha r_E^{N\alpha }{e^{ - r_F^\alpha {\alpha _{th}}}}{\omega _4}\left( {{\lambda _1},{\lambda _2},{\lambda _3}} \right)}}{{\Gamma \left( N \right)}}},
		\label{poutI22}
	\end{aligned}
\end{equation}	
where
\begin{equation}
	\begin{aligned}
		{\omega _4}\left( {a,b,c} \right) &= \int_0^\infty  {\int_{{\alpha _B}}^\infty  {{y^{N - 1}}{e^{ - axy - bx - cy}}} dxdy} \\
		& \mathop  = \limits^{\left( e \right)}  \frac{{{b^{N - 1}}\Gamma \left( N \right)}}{{{a^N}}}{e^{\frac{{bc}}{a}}}\Gamma \left( {1 - N,b{\alpha _B} + \frac{{bc}}{a}} \right),
		\label{H232}
	\end{aligned}
\end{equation}
step $(e)$ is obtained by applying  \cite[(3.383.10)]{Gradshteyn2007Book}.

\subsection{Derivation of $P_{out}^{\mathrm{II}}$}

Similar to (\ref{poutI01}), $P_{out}^{\mathrm{II}}$ is expressed as
\begin{equation}
	\begin{aligned}
		P_{out}^{\mathrm{II}} &= \Pr \left\{ {R_s^{\mathrm{II}} < {R_{th}},{\rho_F}{{\left| {h_F} \right|}^2} < 0,{{\left| {{h_B}} \right|}^2} < {\alpha _B}} \right\}  + \Pr \left\{ {R_s^{\mathrm{II}} < {R_{th}},{\rho_F}{{\left| {{h_F}} \right|}^2} < {\tau _B},{{\left| {{h_B}} \right|}^2} > {\alpha _B}} \right\}\\
		&= \Pr \left\{ {R_s^{\mathrm{II}} < {R_{th}},{{{\left| {{h_F}} \right|}^2} < \frac{{{\tau _B}}}{{{\rho_F}}}},{{\left| {{h_B}} \right|}^2} > {\alpha _B}} \right\}\\
		&= \Pr \left\{ {{{\left| {h_F} \right|}^2} < {\theta _{th}}{{\left| {{H_E}} \right|}^2} + {\alpha _{th}},{{\left| {h_F} \right|}^2} < \frac{{{\tau _B}}}{{{\rho_F}}},{{\left| {{h_B}} \right|}^2} > {\alpha _B}} \right\}.
		\label{poutII1}
	\end{aligned}
\end{equation}
In this case, the relationship between constraint on security (${{\theta _{th}}{{\left| {{H_E}} \right|}^2} + {\alpha _{th}}}$) and constraint on decoding order is considered as follow.
\begin{equation}
	\Pr \left\{ {\frac{{{\tau _B}}}{{{\rho_F}}} < {\theta _{th}}{{\left| {{H_E}} \right|}^2} + {\alpha _{th}}} \right\} = \Pr \left\{ {{{\left| {{h_B}} \right|}^2} < \left( {{\rho_F}{{\left| {{H_E}} \right|}^2} + 1} \right){\varepsilon _1}} \right\}.
\end{equation}
Due to ${\theta _{th}} = {2^{{R_{th}}}} \geqslant 1$, then $\Pr \left\{ {\left( {{\rho _F}{{\left| {{H_E}} \right|}^2} + 1} \right){\varepsilon _1} > {\alpha _B}} \right\} = 1$. Thus, $P_{out}^{\mathrm{II}}$ is obtained as
\begin{equation}
	\begin{aligned}
		P_{out}^{\mathrm{II}} &= \Pr \left\{ {{{\left| {h_F} \right|}^2} < {\theta _{th}}{{\left| {{H_E}} \right|}^2} + {\alpha _{th}},{{\left| {{h_B}} \right|}^2} > \left( {{\rho_F}{{\left| {{H_E}} \right|}^2} + 1} \right){\varepsilon _1}} \right\} \\
		&+ \Pr \left\{ {{{\left| {h_F} \right|}^2} < \frac{{{\tau _B}}}{{{\rho_F}}},{\alpha _B} < {{\left| {{h_B}} \right|}^2} < \left( {{\rho_F}{{\left| {{H_E}} \right|}^2} + 1} \right){\varepsilon _1}} \right\} \\
		&= \int_0^\infty  {\int_{\left( {{\rho_F}y + 1} \right){\varepsilon _1}}^\infty  {{F_{{{\left| {{h_F}} \right|}^2}}}\left( {{\theta _{th}}{y} + {\alpha _{th}}} \right)} } {f_{{{\left| {{h_B}} \right|}^2}}}\left( x \right)dx{f_{{{\left| {{H_E}} \right|}^2}}}\left( y \right)dy\\
		&+ \int_0^\infty  {\int_{{\alpha _B}}^{\left( {{\rho_F}y + 1} \right){\varepsilon _1}} {{F_{{{\left| {{h_F}} \right|}^2}}}\left( {\frac{{{\tau _B}}}{{{\rho_F}}}} \right)} } {f_{{{\left| {{h_B}} \right|}^2}}}\left( x \right)dx{f_{{{\left| {{H_E}} \right|}^2}}}\left( y \right)dy\\
		& {= \frac{{r_E^{N\alpha }r_B^\alpha }}{{\Gamma \left( N \right)}}\int_0^\infty  {\int_{\left( {{\rho _F}y + 1} \right){\varepsilon _1}}^\infty  {\left( {1 - {e^{ - r_F^\alpha \left( {{\theta _{th}}y + {\alpha _{th}}} \right)}}} \right){e^{ - r_B^\alpha x}}{y^{N - 1}}{e^{ - r_E^\alpha y}}} } dxdy}\\
		&{+ \frac{{r_E^{N\alpha }r_B^\alpha }}{{\Gamma \left( N \right)}}\int_0^\infty  {\int_{{\alpha _B}}^{\left( {{\rho _F}y + 1} \right){\varepsilon _1}} {\left( {1 - {e^{ - r_F^\alpha \left( {\frac{x}{{{P_{\rm{F}}}{\alpha _B}}} - \frac{1}{{{P_{\rm{F}}}}}} \right)}}} \right)} } {e^{ - r_B^\alpha x}}{y^{N - 1}}{e^{ - r_E^\alpha y}}dxdy}\\
		& {= \frac{{r_F^\alpha {e^{ - r_B^\alpha {\alpha _B}}}}}{{r_B^\alpha {P_{\rm{F}}}{\alpha _B} + r_F^\alpha }} - \frac{{r_E^{N\alpha }r_F^\alpha {e^{ - \left( {r_F^\alpha {\alpha _{th}} + r_B^\alpha {\varepsilon _1}} \right)}}}}{{\left( {r_F^\alpha {\rho _F}{\alpha _B} + r_B^\alpha } \right){{\left( {r_B^\alpha {\rho _F}{\varepsilon _1} + {\lambda _3}} \right)}^N}}}.}
		\label{EpoutII}
	\end{aligned}
\end{equation}
Substituting (\ref{poutI0101}), (\ref{poutI21}), (\ref{poutI22}), (\ref{EpoutII}) into (\ref{SOPGF}), we have (\ref{Theorem1}).

\section{Proof of Corollary 1}
\label{appendices1}
\setcounter{equation}{0}
\renewcommand\theequation{B.\arabic{equation}}

When ${\rho_B} \to \infty $, we have ${\alpha _B} = \frac{{\varepsilon _B}}{{{\rho_B}}} \to 0 $. Based on (\ref{SOPGF}), we obtain ${\tau _B} \to \infty $, then $P_{out}^{\mathrm{I}} \approx 0$ and
$P_{out}^{\mathrm{II}}$ is approximated as
\begin{equation}
\begin{aligned}
P_{out}^{{\mathrm{II}},{\rho_B} \to \infty} &\approx \Pr \left\{ {R_s^{{\mathrm{II}}} < {R_{th}}} \right\} \\
&= \Pr \left\{ {{{\left| {{h_F}} \right|}^2} < {\theta _{th}}{{\left| {{H_E}} \right|}^2} + {\alpha _{th}}} \right\}\\
&= \int_0^\infty  {{F_{{{\left| {{h_F}} \right|}^2}}}\left( {{\theta _{th}}x + {\alpha _{th}}} \right){f_{{{\left| {{H_E}} \right|}^2}}}\left( x \right)} dx\\
& = {1 - {e^{ - r_F^\alpha {\alpha _{th}}}}{\left( {1 + {\theta _{th}}{{\left( {\frac{{{r_F}}}{{{r_E}}}} \right)}^\alpha }} \right)^{ - N}}.}
\label{asyBPoutII}
\end{aligned}
\end{equation}


\section{Proof of Corollary 2}
\label{corollary2}
\setcounter{equation}{0}
\renewcommand\theequation{C.\arabic{equation}}

When ${\rho_F} \to \infty $, based on (\ref{SOPGF}), we can easily observe $P_{out}^{{\mathrm{II}}} \approx 0$ and
$P_{out}^{\mathrm{I}}$ is expressed as
\begin{equation}
	\begin{aligned}
		P_{out}^{\mathrm{I},{\rho_F} \to \infty} & = \Pr \left\{ {R_s^{\mathrm{I}} < {R_{th}}} \right\}\\
		& \approx \Pr \left\{ {{{\left| {{h_F}} \right|}^2} < \left( {{\rho _B}{\theta _{th}}{{\left| {{h_B}} \right|}^2} + {\theta _{th}}} \right){{\left| {{H_E}} \right|}^2}} \right\}\\
		&= \int_0^\infty  {\int_0^\infty  {{F_{{{\left| {{h_F}} \right|}^2}}}\left( {\left( {{\rho _B}{\theta _{th}}x + {\theta _{th}}} \right)y } \right)} } {f_{{{\left| {{h_B}} \right|}^2}}}\left( x \right){f_{{{\left| {{H_E}} \right|}^2}}}\left( y \right)dxdy\\
		& ={1 - {\left( {\frac{{{r_B}}}{{{r_F}}}} \right)^{N\alpha }}{\left( {\frac{{r_E^\alpha }}{{{\rho _B}{\theta _{th}}}}} \right)^N}\Gamma \left( {1 - N,\frac{{r_B^\alpha }}{{{\rho _B}{\theta _{th}}}}\left( {{\theta _{th}} + {{\left( {\frac{{{r_E}}}{{{r_F}}}} \right)}^\alpha }} \right)} \right).}
		\label{H232}
	\end{aligned}
\end{equation}
%
%

\section{Proof of Corollary 3}
\label{corollary3}
\setcounter{equation}{0}
\renewcommand\theequation{D.\arabic{equation}}

When ${\rho_B} = {\rho_F} \to \infty $, we have $\frac{{{\tau _B}}}{{{\rho_F}}} \to \frac{{{{\left| {{h_B}} \right|}^2}}}{{{\varepsilon _B}}}$. Based on (\ref{SOPGF}), $P_{out}^{\mathrm{I}} $ is approximated as
%
\begin{equation}\small
\begin{aligned}
P_{out}^{{\rm{I}}, \infty } &\approx \Pr \left\{ {\frac{{{{\left| {{h_B}} \right|}^2}}}{{{\varepsilon _B}}} < {{\left| {{h_F}} \right|}^2} < {\omega _0}\left( {{{\left| {{h_B}} \right|}^2},{{\left| {{H_E}} \right|}^2}} \right)} \right\}\\
&= \int_0^\infty  {\int_0^\infty  {\left( {{F_{{{\left| {{h_F}} \right|}^2}}}\left( {{\omega _0}\left( {x,y} \right)} \right) - {F_{{{\left| {{h_F}} \right|}^2}}}\left( {\frac{x}{{{\varepsilon _B}}}} \right)} \right){f_{{{\left| {{h_B}} \right|}^2}}}\left( x \right)dx{f_{{{\left| {{H_E}} \right|}^2}}}\left( y \right)dy} } \\
&{\mathop  \approx \limits^{\left( f \right)} 1 - \frac{1}{{1 + {\varepsilon _B}{{\left( {\frac{{{r_B}}}{{{r_F}}}} \right)}^\alpha }}}.}
\label{pou11Iinf}
\end{aligned}
\end{equation}
where $(f)$ holds by applying \cite[(3.383.10)]{Gradshteyn2007Book} and $\Gamma \left( {a,x} \right)\mathop  \to \limits^{x \to \infty } 0$.
Based on (\ref{EpoutII}), $P_{out}^{\mathrm{II}} $ is approximated as
\begin{equation}
\begin{aligned}
P_{out}^{{\mathrm{II}}, \infty} &\approx \Pr \left\{ {R_s^{{\mathrm{II}}} < {R_{th}},{{\left| {{h_F}} \right|}^2} < \frac{{{{\left| {{h_B}} \right|}^2}}}{{{\varepsilon _B}}}} \right\}\\
&= \Pr \left\{ {{{\left| {{h_F}} \right|}^2} < {\theta _{th}}{{\left| {{H_E}} \right|}^2},{{\left| {{h_B}} \right|}^2} > {\varepsilon _B}{\theta _{th}}{{\left| {{H_E}} \right|}^2}} \right\} \\
&+ \Pr \left\{ {{{\left| {{h_F}} \right|}^2} < \frac{{{{\left| {{h_B}} \right|}^2}}}{{{\varepsilon _B}}},{{\left| {{h_B}} \right|}^2} < {\varepsilon _B}{\theta _{th}}{{\left| {{H_E}} \right|}^2}} \right\}\\
&{= {\left( {1 + {{\left( {\frac{{{r_B}}}{{{r_F}}}} \right)}^\alpha }{\varepsilon _B}} \right)^{ - 1}}\left( {1 - {{\left( {{{\left( {\frac{{{r_F}}}{{{r_E}}}} \right)}^\alpha }{\theta _{th}} + {{\left( {\frac{{{r_B}}}{{{r_E}}}} \right)}^\alpha }{\varepsilon _B}{\theta _{th}} + 1} \right)}^{ - N}}} \right).}
\label{pou11IIinf}
\end{aligned}
\end{equation}
		
\section{Proof of Theorem 2}
\label{appendicesE}
\setcounter{equation}{0}
\renewcommand\theequation{E.\arabic{equation}}

1) \textit{ Derivation of ${P_{out,1}}$}

Based on (\ref{SOP03}) and $\Pr \left\{ {\tau \left( {{{\left| {{h_B}} \right|}^2}} \right) < 0} \right\} = \Pr \left\{ {{{\left| {{h_B}} \right|}^2} < {\alpha _B}} \right\}$, ${P_{out,1}}$ is rewritten as
\begin{equation}
	\begin{aligned}
	{P_{out,1}} 
	&= \underbrace {\Pr \left\{ {R_K^{\mathrm{I}} - {R_E} < {R_{th}},\left| {{S_{{\mathrm{II}}}}} \right| = 0,{{\left| {{h_B}} \right|}^2} < {\alpha _B}} \right\}}_{ \buildrel \Delta \over = P_{out,1}^1}\\
	& + \underbrace {\Pr \left\{ {R_K^{\mathrm{I}} - {R_E} < {R_{th}},\left| {{S_{{\mathrm{II}}}}} \right| = 0,{{\left| {{h_B}} \right|}^2} > {\alpha _B}} \right\}}_{ \buildrel \Delta \over = P_{out,1}^2},
	\label{pout1}
\end{aligned}
\end{equation}
where
${{\left| {{h_B}} \right|}^2} < {\alpha _B}$ denotes $U_B$ is reliability outage. Utilizing \cite[(3.383.10)]{Gradshteyn2007Book}, we obtain
\begin{equation}
	\begin{aligned}
		{P_{out,1}^1} &= \Pr \left\{ {{R_K^{\mathrm{I}} - {R_E} < {R_{th}}},{{\left| {{h_B}} \right|}^2} < {\alpha _B}} \right\}\\
		& = \Pr \left\{ {{{\log }_2}\left( {1 + \frac{{{\rho_F}{{\left| {{h_K}} \right|}^2}}}{{1 + {\rho_B}{{\left| {{h_B}} \right|}^2}}}} \right) - {{{\log }_2}\left( {1 + {\rho_F}{{\left| {{H_E}} \right|}^2} } \right)} < {R_{th}}, {{\left| {{h_B}} \right|}^2} < {\alpha _B}} \right\}\\
		&  = \Pr \left\{ {{{\left| {{h_K}} \right|}^2} < {\omega _0} \left( {{{\left| {{h_B}} \right|}^2},{{\left| {{H_E}} \right|}^2}} \right),{{\left| {{h_B}} \right|}^2} < {\alpha _B}} \right\}\\
		& = \int_0^\infty  {\int_0^{{\alpha _B}} {{F_{{{\left| {{h_K}} \right|}^2}}}\left( {{\omega _0} \left( {x,y} \right)} \right){f_{{{\left| {{h_B}} \right|}^2}}}\left( x \right)dx{f_{{{\left| {{H_E}} \right|}^2}}}\left( y \right)dy} }\\
		& {=1 - {e^{ - r_B^\alpha {\alpha _B}}} + \sum\limits_{i = 0}^K {\frac{{{\varphi _i}r_B^\alpha r_E^{N\alpha }}}{{\Gamma \left( N \right)}}{e^{ - ir_F^\alpha {\alpha _{th}}}}{\omega _1}\left( {i{\lambda _1},{\varepsilon _3},{\varepsilon _4}} \right)} ,}
	\label{pout11}
	\end{aligned}
\end{equation}
where
{
${{\varepsilon _3}} = ir_F^\alpha {\alpha _{th}}{\rho _B} + r_B^\alpha $
and
${\varepsilon _4} = ir_F^\alpha {\theta _{th}} + r_E^\alpha $.
}
Similarly, ${P_{out,1}^2}$ is expressed as
\begin{equation}
	\begin{aligned}
		{P_{out,1}^2} & =\Pr \left\{ {{{\log }_2}\left( {1 + \frac{{{\rho_F}{{\left| {{h_K}} \right|}^2}}}{{1 + {\rho_B}{{\left| {{h_B}} \right|}^2}}}} \right) - {{{\log }_2}\left( {1 + {\rho_F}{{\left| {{H_E}} \right|}^2} } \right)} } < {R_{th}},{{{\left| {{h_1}} \right|}^2} > \frac{{{\tau _B}}}{{{\rho_F}}},{{\left| {{h_B}} \right|}^2} > {\alpha _B}} \right\}\\
		& = \Pr \left\{ {{{\left| {{h_K}} \right|}^2} < {\omega _0} \left( {{{\left| {{h_B}} \right|}^2},{{\left| {{H_E}} \right|}^2}} \right),{{\left| {{h_1}} \right|}^2} > {\frac{{{\tau _B}}}{{{\rho_F}}}},{{\left| {{h_B}} \right|}^2} > {\alpha _B}} \right\}.
		\label{pout31}
	\end{aligned}
\end{equation}

Considering ${\left| {{h_1}} \right|^2} \le  \cdots  \le {\left| {{h_K}} \right|^2}$ and the relationship between ${{\omega _0}\left( {{{\left| {{h_B}} \right|}^2},{{\left| {{H_E}} \right|}^2}} \right)}$ and ${\frac{{{\tau _B}}}{{{\rho_F}}}}$, given in (\ref{relationship1}), two scenarios (${\varepsilon _B}{\varepsilon _{th}} < 1$ and ${\varepsilon _B}{\varepsilon _{th}} > 1$) are considered as follows.

(i) When ${\varepsilon _B}{\varepsilon _{th}} < 1$, we have ${\alpha _1} > 0 $. Due to ${\alpha _B} < {\alpha _2}$, based on (\ref{relationship1}), we obtain
	\begin{equation}
		\begin{aligned}
			P_{out,1}^{21} &=\Pr \left\{ {{{\left| {{h_K}} \right|}^2} < {\omega _0}\left( {{{\left| {{h_B}} \right|}^2},{{\left| {{H_E}} \right|}^2}} \right),} {{{\left| {{h_1}} \right|}^2} > \frac{{{\tau _B}}}{{{\rho_F}}},{{\left| {{h_B}} \right|}^2} > {\alpha _B}} \right\}\\
			& =\underbrace {\begin{array}{l}
					\Pr \left\{ {\frac{{{\tau _B}}}{{{\rho_F}}} < {{\left| {{h_1}} \right|}^2} < {{\left| {{h_K}} \right|}^2} < {\omega _0}\left( {{{\left| {{h_B}} \right|}^2},{{\left| {{H_E}} \right|}^2}} \right), {{\left| {{h_B}} \right|}^2} > {\alpha _B},{{\left| {{H_E}} \right|}^2} > {\alpha _1}} \right\}\end{array}}_{{I_1}}\\
			&+\underbrace {\begin{array}{l}
					\Pr \left\{ {\frac{{{\tau _B}}}{{{\rho_F}}} < {{\left| {{h_1}} \right|}^2} < {{\left| {{h_K}} \right|}^2} < {\omega _0}\left( {{{\left| {{h_B}} \right|}^2},{{\left| {{H_E}} \right|}^2}} \right), {\alpha _B} < {{\left| {{h_B}} \right|}^2} < {\alpha _2},{{\left| {{H_E}} \right|}^2}  < {\alpha _1}} \right\}
			\end{array}}_{{I_2}}.
			\label{pout321}
		\end{aligned}
\end{equation}

Based on (\ref{cdfh1K}), we obtain
	\begin{equation}\small
		\begin{aligned}
			{I_1} & = \int_{{\alpha _1}}^\infty  {\int_{{\alpha _B}}^\infty  {\left( {{F_{{{\left| {{h_1}} \right|}^2},{{\left| {{h_K}} \right|}^2}}}\left( {\frac{{{\tau _B}}}{{\rho_F}},{\omega _0} \left( {x,y} \right)} \right)} \right){f_{{{\left| {{h_B}} \right|}^2}}}\left( x \right)dx{f_{{{\left| {{H_E}} \right|}^2}}}\left( y \right)dy} } \\
			& {= \frac{{r_B^\alpha r_E^{N\alpha }}}{{\Gamma \left( N \right)}}\sum\limits_{n = 0}^{K - 2} {\left( {{\mu _1}{e^{\frac{{Kr_F^\alpha }}{{{\rho _F}}}}}{\omega _2}\left( {0,{\alpha _4},r_E^\alpha } \right) + {\mu _2}{e^{ - Kr_F^\alpha {\alpha _{th}}}}{\omega _2}\left( {{\eta _1},{\eta _2},{\eta _3}} \right) - {\mu _3}{e^{\frac{{Kr_F^\alpha  - {C_0}}}{{{\rho _F}}} - {C_0}{\alpha _{th}}}}{\omega _2}\left( {{\eta _4},{\eta _5},{\eta _6}} \right)} \right)} ,}
			\label{pout3I11}
		\end{aligned}
	\end{equation}
where
{
${\alpha _4} = \frac{{Kr_F^\alpha }}{{{\rho _F}{\alpha _B}}} + r_B^\alpha$,
${\eta _1} = Kr_F^\alpha {\rho _B}{\theta _{th}}$,
${\eta _2} = Kr_F^\alpha {\rho _B}{\alpha _{th}} + r_B^\alpha $,
${\eta _3} = Kr_F^\alpha {\theta _{th}} + r_E^\alpha$,
}
$ {\eta _4} = {C_0}{\rho _B}{\theta _{th}}$,
{
${\eta _5} = {C_0}{\rho _B}{\alpha _{th}} + \frac{{Kr_F^\alpha  - {C_0}}}{{{\rho _F}{\alpha _B}}} + r_B^\alpha $,
and
$ {\eta _6} = {C_0}{\theta _{th}} + r_E^\alpha $.
}
Similarly, we obtain
	\begin{equation}\small
		\begin{aligned}
			I_2 & = \int_0^{{\alpha _1}} {\int_{{\alpha _B}}^ {{\alpha _2}} {\left( {{F_{{{\left| {{h_1}} \right|}^2},{{\left| {{h_K}} \right|}^2}}}\left( {\frac{{{\tau _B}}}{{{\rho_F}}},{\omega _0} \left( {x,y} \right)} \right)} \right){f_{{{\left| {{h_B}} \right|}^2}}}\left( x \right)dx{f_{{{\left| {{H_E}} \right|}^2}}}\left( y \right)dy} }\\
			& {= \frac{{r_B^\alpha r_E^{N\alpha }}}{{\Gamma \left( N \right)}}\sum\limits_{n = 0}^{K - 2} {\left( {{\mu _1}{e^{\frac{{Kr_F^\alpha }}{{{\rho _F}}}}}{\omega _3}\left( {0,{\alpha _4},r_E^\alpha } \right) + {\mu _2}{e^{ - Kr_F^\alpha {\alpha _{th}}}}{\omega _3}\left( {{\eta _1},{\eta _2},{\eta _3}} \right) - {\mu _3}{e^{\frac{{Kr_F^\alpha  - {C_0}}}{{{\rho _F}}} - {C_0}{\alpha _{th}}}}{\omega _3}\left( {{\eta _4},{\eta _5},{\eta _6}} \right)} \right)}.}
			\label{pout3I2}
		\end{aligned}
	\end{equation}	

(ii) When ${\varepsilon _B}{\varepsilon _{th}} > 1$, we have ${\alpha _1} < 0 $, then $\Pr \left\{ {\frac{{{\tau _B}}}{{{\rho_F}}} < {\omega _0}\left( {{{\left| {{h_B}} \right|}^2},{{\left| {{H_E}} \right|}^2}} \right)} \right\} = 1$. Thus, we obtain
	\begin{equation}\small
		\begin{aligned}
			{P}_{out,1}^{22} &= \Pr \left\{ {\frac{{{\tau _B}}}{{{\rho_F}}} < {{\left| {{h_1}} \right|}^2} < {{\left| {{h_K}} \right|}^2} < {\omega _0}\left( {{{\left| {{h_B}} \right|}^2},{{\left| {{H_E}} \right|}^2}} \right),{{\left| {{h_B}} \right|}^2} > {\alpha _B}} \right\}\\
			&= \int_0^\infty  {\int_{{\alpha _B}}^\infty  {\left( {{F_{{{\left| {{h_1}} \right|}^2},{{\left| {{h_K}} \right|}^2}}}\left( {\frac{{{\tau _B}}}{{{\rho_F}}},{\omega _0}\left( {x,y} \right)} \right)} \right){f_{{{\left| {{h_B}} \right|}^2}}}\left( x \right)dx{f_{{{\left| {{H_E}} \right|}^2}}}\left( y \right)dy} }  \\
			& {= \frac{{r_B^\alpha r_E^{N\alpha }}}{{\Gamma \left( N \right)}}\sum\limits_{n = 0}^{K - 2} {\left( {{\mu _1}{e^{\frac{{Kr_F^\alpha }}{{{\rho _F}}}}}{\omega _4}\left( {0,{\alpha _4},r_E^\alpha } \right) + {\mu _2}{e^{ - Kr_F^\alpha {\alpha _{th}}}}{\omega _4}\left( {{\eta _1},{\eta _2},{\eta _3}} \right) - {\mu _3}{e^{\frac{{Kr_F^\alpha  - {C_0}}}{{{\rho _F}}} - {C_0}{\alpha _{th}}}}{\omega _4}\left( {{\eta _4},{\eta _5},{\eta _6}} \right)} \right)}.}
			\label{H23211}
		\end{aligned}
	\end{equation}

2) \textit{ {Derivation of ${P_{out,2}}$}}

When ${\left| {{{\cal S}_{\mathrm{II}}}} \right| = K}$, $U_B$'s signal must be decoded during the first stage of SIC and the signals of all the GF users will be decoded in the second stage of SIC. Utilizing the best-user scheduling scheme, $U_K$ will be selected. Then, SOP in this case is expressed as
\begin{equation}
	\begin{aligned}
		{P_{out,2}} &= \Pr \left\{ {{R_K^{\mathrm{II}} - {R_E} < {R_{th}}},\left| {{{\cal S}_{\mathrm{II}}}} \right| = K,{{\left| {{h_B}} \right|}^2} > {\alpha _B}} \right\}\\
		& =\Pr \left\{ {{{\log }_2}\left( {1 + {\rho_F}{{\left| {{h_K}} \right|}^2}} \right) -{{{\log }_2}\left( {1 + {\rho_F}{{\left| {{H_E}} \right|}^2} } \right)} < {R_{th}}, {{\left| {{h_K}} \right|}^2} < \frac{{{\tau _B}}}{{{\rho_F}}},{{\left| {{h_B}} \right|}^2} > {\alpha _B}} \right\}\\
		& = \Pr \left\{ {{{\left| {{h_K}} \right|}^2} < {\theta _{th}}{{\left| {{H_E}} \right|}^2} + {\alpha _{th}},{{\left| {{h_K}} \right|}^2} < \frac{{{\tau _B}}}{{{\rho_F}}},{{\left| {{h_B}} \right|}^2} > {\alpha _B}} \right\}\\
		&  = \Pr \left\{ {{{\left| {{h_K}} \right|}^2} < \min \left( {\frac{{{\tau _B}}}{{{\rho_F}}},{\theta _{th}}{{\left| {{H_E}} \right|}^2} + {\alpha _{th}}} \right),{{\left| {{h_B}} \right|}^2} > {\alpha _B}} \right\}\\
		&  = \Pr \left\{ {{{\left| {{h_K}} \right|}^2} < {\theta _{th}}{{\left| {{H_E}} \right|}^2} + {\alpha _{th}},{{\left| {{h_B}} \right|}^2} > \left( {{\rho_F}{{\left| {{H_E}} \right|}^2} + 1} \right){\varepsilon _1}} \right\}\\
		& + \Pr \left\{ {{{\left| {{h_K}} \right|}^2} < \frac{{{\tau _B}}}{{{\rho_F}}},{\alpha _B} < {{\left| {{h_B}} \right|}^2} < \left( {{\rho_F}{{\left| {{H_E}} \right|}^2} + 1} \right){\varepsilon _1}} \right\}.
		\label{pout12}
	\end{aligned}
\end{equation}
With some simple algebraic manipulations, we obtain
\begin{equation}
	\begin{aligned}
		{P_{out,2}}
		&  = \sum\limits_{i = 0}^K {{\varphi _i}\int_0^\infty  {{e^{ - i\left( {{\theta _{th}}y + {\alpha _{th}}} \right)}}{f_{{{\left| {{H_E}} \right|}^2}}}\left( y \right)\int_{\left( {{\rho_F}y + 1} \right){\varepsilon _1}}^\infty  {{f_{{{\left| {{h_B}} \right|}^2}}}\left( x \right)dxdy} } } \\
		& + \sum\limits_{i = 0}^K {{\varphi _i}\int_0^\infty  {{f_{{{\left| {{H_E}} \right|}^2}}}\left( y \right)\int_{{\alpha _B}}^{\left( {{\rho_F}y + 1} \right){\varepsilon _1}} {{e^{ - \frac{i}{{{\rho_F}}}\left( {\frac{x}{{{\alpha _B}}} - 1} \right)}}{f_{{{\left| {{h_B}} \right|}^2}}}\left( x \right)dxdy} } }  \\
		& {= \sum\limits_{i = 0}^K {\left( {\frac{{{\varphi _i}}}{{ir_F^\alpha  + r_B^\alpha {\rho _F}{\alpha _B}}}\left( {\frac{{ir_F^\alpha r_E^{N\alpha }{e^{ - \left( {ir_F^\alpha {\alpha _{th}} + r_B^\alpha {\varepsilon _1}} \right)}}}}{{{{\left( {ir_F^\alpha {\theta _{th}} + r_B^\alpha {\rho _F}{\varepsilon _1} + r_E^\alpha } \right)}^N}}} + {\rho _F}{\alpha _B}r_B^\alpha {e^{ - r_B^\alpha {\alpha _B}}}} \right)} \right)}.}
		\label{pout22}
	\end{aligned}
\end{equation}
			
3) \textit{ {Derivation of ${P_{out,3}}$}}
		
When both ${{\cal S}_{\mathrm{I}}}$ and ${{\cal S}_{\mathrm{II}}}$ are not empty, SOP is expressed as
\begin{equation}
	\begin{aligned}
			{P_{out,3}} &=\sum\limits_{k = 1}^{K - 2} \underbrace { {\Pr \left\{ {\max \left\{ {R_K^{\mathrm{I}},R_k^{{\mathrm{II}}}} \right\} - {R_E} < {R_{th}},\left| {{S_{{\mathrm{II}}}}} \right| = k} \right\}} }_{ \buildrel \Delta \over = P_{out,3}^k}\\
			&+ \underbrace {\Pr \left\{ {\max \left\{ {R_K^{\mathrm{I}},R_{K-1}^{{\mathrm{II}}}} \right\} - {R_E} < {R_{th}},\left| {{S_{{\mathrm{II}}}}} \right| = K - 1} \right\}}_{ \buildrel \Delta \over = P_{out,3}^{K - 1}}.	
	\end{aligned}
\end{equation}
Based on (\ref{ratek}), (\ref{pout1}), and (\ref{pout12}), we have
\begin{equation}
	\begin{aligned}	
		\Pr \left\{ {\left| {{{\cal S}_{\mathrm{II}}}} \right| = k,{{\left| {{h_B}} \right|}^2} > {\alpha _B}} \right\}
			&= \Pr \left\{ {{{\left| {{h_k}} \right|}^2} < \frac{{{\tau _B}}}{{{\rho_F}}} < {{\left| {{h_{k + 1}}} \right|}^2},{{\left| {{h_B}} \right|}^2} > {\alpha _B}} \right\},
		\label{pout4k2}
  \end{aligned}
\end{equation}
and
\begin{equation}
	\begin{aligned}
		\Pr \left\{ {\max \left\{ {R_K^{\mathrm{I}},R_k^{\mathrm{II}}} \right\} - {R_E} < {R_{th}}} \right\}
		 &= \Pr \left\{ {{{\left| {{h_k}} \right|}^2} < {\theta _{th}}{{\left| {{H_E}} \right|}^2} + {\alpha _{th}}, {{\left| {{h_K}} \right|}^2} < {\omega _0}\left( {{{\left| {{h_B}} \right|}^2},{{\left| {{H_E}} \right|}^2}} \right)} \right\},
		\label{pout4k1}
	\end{aligned}
\end{equation}
respectively.
Thus, $P_{out,3}^k$ is expressed as
\begin{equation}
	\begin{aligned}
		P_{out,3}^k  &=	\Pr \left\{ {{{\left| {{h_k}} \right|}^2} < \min \left( {\frac{{{\tau _B}}}{{{\rho_F}}},{\theta _{th}}{{\left| {{H_E}} \right|}^2} + {\alpha _{th}}} \right),{{\left| {{h_B}} \right|}^2} > {\alpha _B}} \right.,\\
		&\;\;\;\;\;\;\;\;\;\;\;\,\left. {\frac{{{\tau _B}}}{{{\rho_F}}} < {{\left| {{h_{k + 1}}} \right|}^2} < {{\left| {{h_K}} \right|}^2} < {\omega _0}\left( {{{\left| {{h_B}} \right|}^2},{{\left| {{H_E}} \right|}^2}} \right)} \right\}\\
			& = \underbrace {\begin{array}{l}
			\Pr \left\{ {{{\left| {{h_k}} \right|}^2} < \frac{{{\tau _B}}}{{{\rho_F}}} < {{\left| {{h_{k + 1}}} \right|}^2} < {{\left| {{h_K}} \right|}^2} < {\omega _0}\left( {{{\left| {{h_B}} \right|}^2},{{\left| {{H_E}} \right|}^2}} \right),} \right.\\
			\;\;\;\;\;\;\;\ \left. {{\alpha _B} < {{\left| {{h_B}} \right|}^2} < \left( {{\rho_F}{{\left| {{H_E}} \right|}^2} + 1} \right){\varepsilon _1}} \right\}\end{array}}_{{I_3}}\\
		   & + \underbrace {\begin{array}{l}
		     \Pr \left\{ {{{\left| {{h_k}} \right|}^2} < {\alpha _{th}} + {\theta _{th}}{{\left| {{H_E}} \right|}^2},{{\left| {{h_B}} \right|}^2} > \left( {{\rho_F}{{\left| {{H_E}} \right|}^2} + 1} \right){\varepsilon _1},} \right.\\
			 \;\;\;\;\;\;\ \left. {\frac{{{\tau _B}}}{{{\rho_F}}} < {{\left| {{h_{k + 1}}} \right|}^2} < {{\left| {{h_K}} \right|}^2} < {\omega _0}\left( {{{\left| {{h_B}} \right|}^2},{{\left| {{H_E}} \right|}^2}} \right)} \right\}
		\end{array}}_{{I_4}}.
		\label{pout43}
	\end{aligned}
\end{equation}
		
Based on (\ref{cdfthree}) and utilizing \cite[3.352.2]{Gradshteyn2007Book}, we obtain
\begin{equation}
	\begin{aligned}
		{I_3} &= \int_0^\infty  {{f_{{{\left| {{H_E}} \right|}^2}}}\left( t \right)dt\int_{{\alpha _B}}^{\left( {{\rho _F}t + 1} \right){\varepsilon _1}} {{F_{{{\left| {{h_k}} \right|}^2},{{\left| {{h_{k + 1}}} \right|}^2}{{\left| {{h_K}} \right|}^2}}}\left( {0,\frac{{{\tau _B}}}{{{\rho _F}}},\frac{{{\tau _B}}}{{{\rho _F}}},{\omega _0}\left( {s,t} \right)} \right){f_{{{\left| {{h_B}} \right|}^2}}}\left( s \right)ds} } \\
		&= \sum\limits_{n = 0}^{K - k - 2} {\sum\limits_{m = 0}^{k - 1} {\sum\limits_{i = 1}^6 {{\varsigma _i}\int_0^\infty  {\int_{{\alpha _B}}^{\left( {{\rho _F}t + 1} \right){\varepsilon _1}} {{e^{ - \left( {{B_i} + {C_i}} \right)\frac{{{\tau _B}}}{{{\rho _F}}} - {W_i}{\omega _0}\left( {s,t} \right)}}{f_{{{\left| {{h_B}} \right|}^2}}}\left( s \right){f_{{{\left| {{H_E}} \right|}^2}}}\left( t \right)dsdt} } } } }  \\
		&{= \frac{{r_B^\alpha r_E^{N\alpha }}}{{\Gamma \left( N \right)}}\sum\limits_{n = 0}^{K - k - 2} {\sum\limits_{m = 0}^{k - 1} {\sum\limits_{i = 1}^6 {{\varsigma _i}{e^{ - {\xi _1}}}\int_0^\infty  {t^{N - 1}}{{e^{ - {\xi _2}t}}\int_{{\alpha _B}}^{\left( {{\rho_F}t + 1} \right){\varepsilon _1}} {{e^{ - \left( {{u_1}t + {\xi _3}} \right)s}}dsdt} } } } } } \\
		&  {= r_B^\alpha r_E^{N\alpha }\sum\limits_{n = 0}^{K - k - 2} {\sum\limits_{m = 0}^{k - 1} {\left( {\sum\limits_{i = 1}^4 {\frac{{{\varsigma _i}{e^{ - {\xi _1}}}}}{{\Gamma \left( N \right)}}{\Delta _1}}  + \sum\limits_{i = 5}^6 {{\varsigma _i}{e^{ - {\xi _1}}}{\Delta _2}} } \right)} } ,}
	\label{II3}
	\end{aligned}
\end{equation}
where
${\Delta _1}  = \frac{{\xi _3^{N - 1}\Gamma \left( {1 - N,{\xi _3}{\alpha _B} + \frac{{{\xi _2}{\xi _3}}}{{{u_1}}}} \right)}}{{u_1^N}}{e^{\frac{{{\xi _2}{\xi _3}}}{{{u_1}}}}} - \frac{{{e^{ - {\xi _3}{\varepsilon _1}}}{\omega _5}\left( {{u_1},{\xi _3},{v_1},{l_1}} \right)}}{{\Gamma \left( N \right)}}$,
${\Delta _2} = \frac{{{e^{ - {\xi _3}{\alpha _B}}}}}{{\xi _2^N{\xi _3}}} - \frac{{{e^{ - {\xi _3}{\varepsilon _1}}}}}{{{\xi _3}{{\left( {{\rho_F}{\xi _3}{\varepsilon _1} + {\xi _2}} \right)}^N}}}$,
${\xi _1} = {W_i}{\alpha _{th}} - \frac{{{B_i} + {C_i}}}{{\rho_F}}$,
{
${\xi _2} = {W_i}{\theta _{th}} + r_E^\alpha $,
${\xi _3} = {W_i}{\rho_B}{\alpha _{th}} + \frac{{{B_i} + {C_i}}}{{P_{F}{\alpha _B}}} + r_B^\alpha $,
}
${u_1} = {W_i}{\rho_B}{\theta _{th}}$,
${v_1} = {u_1}{\rho_F}{\varepsilon _1}$,
${l_1} = {u_1}{\varepsilon _1} + {\rho_F}{\xi _3}{\varepsilon _1} + {\xi _2}$, and
${\omega _5} \left( {a,b,c,f} \right) = \int_0^\infty  {\frac{1}{{ax + b}}{e^{ - \left( {c{x^2} + fx} \right)}}dx}$.
By utilizing \cite[(10), (11)]{Adamchik1990}, \cite[(6.2.8)]{Springer1979}, and \cite[(2.3)]{RGBHFH1972}, we obtain
\begin{equation}
	\begin{aligned}
		{\omega _5} \left( {a,b,c,f} \right) 
		& = \frac{1}{b}\int_0^\infty  {H_{0,1}^{1,0}\left[ {fx\left| {_{\left( {0,1} \right)}^ - } \right.} \right]H_{1,1}^{1,1}\left[ {\frac{a}{b}x\left| {_{\left( {0,1} \right)}^{\left( {0,1} \right)}} \right.} \right]H_{0,1}^{1,0}\left[ {c{x^2}\left| {_{\left( {0,1} \right)}^ - } \right.} \right]dx} \\
		& = \frac{{{f^{N - 1}}}}{{b}}H_{1,0:1,1:0,1}^{1,0:1,1:1,0}\left[ {_{\quad  - }^{\left( {0;1,2} \right)}\left| {_{\left( {0,1} \right)}^{\left( {0,1} \right)}\left| {_{\left( {0,1} \right)}^ - \left| {\frac{a}{{bf}},\frac{c}{{{f^2}}}} \right.} \right.} \right.} \right].
		\label{fai3fun}
	\end{aligned}
\end{equation}
With the same method, we obtain
\begin{equation}
	\begin{aligned}
		{I_4} &= \int_0^\infty  {{f_{{{\left| {{H_E}} \right|}^2}}}\left( t \right)dt\int_{\left( {{\rho _F}t + 1} \right){\varepsilon _1}}^\infty  {{f_{{{\left| {{h_B}} \right|}^2}}}\left( s \right){F_{{{\left| {{h_k}} \right|}^2},{{\left| {{h_{k + 1}}} \right|}^2}{{\left| {{h_K}} \right|}^2}}}\left( {0,{\alpha _{th}} + {\theta _{th}}t,\frac{{{\tau _B}}}{{{\rho _F}}},{\omega _0}\left( {s,t} \right)} \right)ds} } \\
		&= \sum\limits_{n = 0}^{K - k - 2} {\sum\limits_{m = 0}^{k - 1} {\sum\limits_{i = 1}^6 {{\varsigma _i}} \int_0^\infty  {\int_{\left( {{\rho _F}t + 1} \right){\varepsilon _1}}^\infty  {{e^{ - {B_i}\left( {{\alpha _{th}} + {\theta _{th}}t} \right) - {C_i}\frac{{{\tau _B}}}{{{\rho _F}}} - {W_i}{\omega _0}\left( {s,t} \right)}}{f_{{{\left| {{h_B}} \right|}^2}}}\left( s \right){f_{{{\left| {{H_E}} \right|}^2}}}\left( t \right)dsdt} } } } 	 \\
		&  {= r_B^\alpha r_E^{N\alpha }\sum\limits_{n = 0}^{K - k - 2} {\sum\limits_{m = 0}^{k - 1} {\left( {\sum\limits_{i = 1}^4 {\frac{{{\varsigma _i}{e^{ - {\xi _4}}}}}{{\Gamma \left( N \right)}}{\Delta _3}}  + \sum\limits_{i = 5}^6 {{\varsigma _i}{e^{ - {\xi _4}}}{\Delta _4}} } \right)} } ,}
    \label{II4}
	\end{aligned}
\end{equation}
where
${\xi _4} = \left( {{B_i} + {W_i}} \right){\alpha _{th}} - \frac{{{C_i}}}{{\rho_F}}$,
${\Delta _3} = \frac{{{e^{ - {\xi _6}{\varepsilon _1}}}{\omega _6}\left( {{u_1},{\xi _6},{v_2},{l_2}} \right)}}{{\Gamma \left( N \right)}}$,
${\Delta _4} = \frac{{{e^{ - {\xi _6}{\varepsilon _1}}}}}{{{\xi _6}{{\left( {{\rho_F}{\xi _6}{\varepsilon _1} + {\xi _5}} \right)}^N}}}$,
${v_2} = {u_1}{\rho_F}{\varepsilon _1}$,
${l_2} = {u_1}{\varepsilon _1} + {\xi _6}{\rho_F}{\varepsilon _1} + {\xi _5}$,
{
${\xi _5} = \left( {{B_i} + {W_i}} \right){\theta _{th}} + r_E^\alpha $,
and
${\xi _6} = {W_i}{\alpha _{th}}{\rho_B} + \frac{{{C_i}}}{{P_{F}{\alpha _B}}} + r_B^\alpha $.
}

Similar to (\ref{pout4k2}) and (\ref{pout4k1}), we obtain
\begin{equation}
		\begin{aligned}
	\Pr \left\{ {\left| {{{\cal S}_{\mathrm{II}}}} \right| = K - 1, {{\left| {{h_B}} \right|}^2} > {\alpha _B}} \right\} &= \Pr \left\{ {{{\left| {{h_{K - 1}}} \right|}^2} < \frac{{{\tau _B}}}{{{\rho_F}}} < {{\left| {{h_{K}}} \right|}^2},{{\left| {{h_B}} \right|}^2} > {\alpha _B}} \right\}
	\label{pout4K12}
		\end{aligned}
\end{equation}
and
\begin{equation}
	\begin{aligned}
		&\Pr \left\{\max \left\{ {R_K^{\mathrm{I}},R_{K-1}^{{\mathrm{II}}}} \right\}  - {R_E} < {R_{th}} \right\} \\
&= \Pr \left\{ {{{\left| {{h_{K - 1}}} \right|}^2} < {\theta _{th}}{{\left| {{H_E}} \right|}^2} + {\alpha _{th}}}, {{{\left| {{h_K}} \right|}^2} < {\omega _0}\left( {{{\left| {{h_B}} \right|}^2},{{\left| {{H_E}} \right|}^2}} \right)} \right\}.
		\label{pout4K11}
	\end{aligned}
\end{equation}
Then, $P_{out,3}^{K - 1}$ is obtained as
\begin{equation}
	\begin{aligned}
		P_{out,3}^{K - 1} &= \Pr \left\{ {R_{K - 1}^s < {R_{th}},\left| {{{\cal S}_{\mathrm{II}}}} \right| = K - 1,{{\left| {{h_B}} \right|}^2} > {\alpha _B}} \right\}\\
		& = \Pr \left\{ {{{\left| {{h_{K - 1}}} \right|}^2} < \frac{{{\tau _B}}}{{{\rho_F}}} < {{\left| {{h_K}} \right|}^2} < {\omega _0}\left( {{{\left| {{h_B}} \right|}^2},{{\left| {{H_E}} \right|}^2}} \right),{\alpha _B} < {{\left| {{h_B}} \right|}^2} < \left( {{\rho_F}{{\left| {{H_E}} \right|}^2} + 1} \right){\varepsilon _1}} \right\}\\
		& + \Pr \left\{ {{{\left| {{h_{K - 1}}} \right|}^2} < {\theta _{th}}{{\left| {{H_E}} \right|}^2} + {\alpha _{th}},\frac{{{\tau _B}}}{{{\rho_F}}} < {{\left| {{h_K}} \right|}^2} < {\omega _0}\left( {{{\left| {{h_B}} \right|}^2},{{\left| {{H_E}} \right|}^2}} \right),{{\left| {{h_B}} \right|}^2} > \left( {{\rho_F}{{\left| {{H_E}} \right|}^2} + 1} \right){\varepsilon _1}} \right\}\\
		& = \int_0^\infty  {{f_{{{\left| {{H_E}} \right|}^2}}}\left( t \right)dt\int_{{\alpha _B}}^{\left( {{\rho_F}t + 1} \right){\varepsilon _1}} {{f_{{{\left| {{h_B}} \right|}^2}}}\left( s \right){F_{{{\left| {{h_{K - 1}}} \right|}^2},{{\left| {{h_K}} \right|}^2}}}\left( {0,\frac{{{\tau _B}}}{{{\rho_F}}},\frac{{{\tau _B}}}{{{\rho_F}}},{\omega _0}\left( s, t \right)} \right)ds} }  \\
		& + \int_0^\infty  {{f_{{{\left| {{H_E}} \right|}^2}}}\left( t \right)dt\int_{\left( {{\rho_F}t + 1} \right){\varepsilon _1}}^\infty  {{f_{{{\left| {{h_B}} \right|}^2}}}\left( s \right){F_{{{\left| {{h_{K - 1}}} \right|}^2},{{\left| {{h_K}} \right|}^2}}}\left( {0,{\theta _{th}}t + {\alpha _{th}},\frac{{{\tau _B}}}{{{\rho_F}}},{\omega _0}\left( {s,t} \right)} \right)ds} }  \\
        &{=r_B^\alpha r_E^{N\alpha }\sum\limits_{n = 0}^{K - 2} {\frac{{{\mu _0}}}{{{C_0}}}\left( {\sum\limits_{j = 1}^2 {\frac{{{{\left( { - 1} \right)}^{j + 1}}}}{{\Gamma \left( N \right)}}} \left( {{e^{ - {\zeta _1}}}{\Delta _5} + {e^{ - {\zeta _4}}}{\Delta _7}} \right) + \sum\limits_{j = 3}^4 {{{\left( { - 1} \right)}^{j + 1}}\left( {{e^{ - {\zeta _1}}}{\Delta _6} + {e^{ - {\zeta _4}}}{\Delta _8}} \right)} } \right)},}
		\label{sop4K1}
	\end{aligned}
\end{equation}
where
{
${\zeta _1} = {q_j}{\alpha _{th}} - \frac{{{b_j} + {c_j}}}{{\rho_F}}$,
{
${\zeta _2} = {q_j}{\theta _{th}} + r_E^\alpha$,
${\zeta _3} = {q_j}{\alpha _{th}}{\rho_B} + \frac{{{b_j} + {c_j}}}{{P_{F}{\alpha _B}}} + r_B^\alpha$,
}
${\Delta _5} = \frac{{\zeta _3^{N - 1}}}{{u_2^N}}{e^{\frac{{{\zeta _2}{\zeta _3}}}{{{u_2}}}}}\Gamma \left( {1 - N,{\zeta _3}{\alpha _B} + \frac{{{\zeta _2}{\zeta _3}}}{{{u_2}}}} \right) \\
- \frac{{{e^{ - {\xi _3}{\varepsilon _1}}}{\omega _5}\left( {{u_2},{\zeta _3},{v_3},{l_3}} \right)}}{{\Gamma \left( N \right)}}$,
${u_2} = {q_j}{\rho_B}{\theta _{th}}$,
${v_3} = {u_2}{\rho_F}{\varepsilon _1}$,
${l_3} = {u_2}{\varepsilon _1} + {\zeta _3}{\rho_F}{\varepsilon _1} + {\zeta _2}$,
${\Delta _6} = \frac{{{e^{ - {\zeta _3}{\alpha _B}}}}}{{\zeta _2^N{\zeta _3}}} - \frac{{{e^{ - {\zeta _3}{\varepsilon _1}}}}}{{{\zeta _3}{{\left( {{P_F}{\zeta _3}{\varepsilon _1} + {\zeta _2}} \right)}^N}}}$,
${\zeta _4} = \left( {{b_j} + {q_j}} \right){\alpha _{th}} - \frac{{{c_i}}}{{\rho_F}}$,
{
${\zeta _5} = \left( {{b_j} + {q_j}} \right){\theta _{th}} + r_E^\alpha$,
${\zeta _6} = {q_i}{\alpha _{th}}{\rho_B} + \frac{{{c_j}}}{{P_{F}{\alpha _B}}} + r_B^\alpha$,
}
${\Delta_7} = \frac{{{e^{ - {\zeta _6}{\varepsilon _1}}}}{\omega _5}\left( {{u_2},{\zeta _6},{v_4},{l_4}} \right)}{{\Gamma \left( N \right)}}$,
${v_4} = {u_2}{\rho_F}{\varepsilon _1}$,
${l_4} = {u_2}{\varepsilon _1} +{\zeta _6} {\rho_F}{\varepsilon _1} + {\zeta _5}$,
and
${\Delta _8} =\frac{{{e^{ - {\zeta _6}{\varepsilon _1}}}}}{{{\zeta _6}{{\left( {{\rho_F}{\zeta _6}{\varepsilon _1} + {\zeta _5}} \right)}^N}}}$.
}		
\section{Proof of Corollary 4}
\label{appendicesF}
\setcounter{equation}{0}
\renewcommand\theequation{F.\arabic{equation}}

When ${\rho_F} = {\rho_B} \to \infty $, we have ${\alpha _B} \to 0,{\alpha _{th}} \to 0$.
One can obtain
{
$P_{out,1}^{1,\infty } \approx 0$
}
due  to $\Pr \left\{ {{{\left| {{h_B}} \right|}^2} < {\alpha _B}} \right\} \approx 0$.
Based on (\ref{pout31}) and  (\ref{pout321}), ${P_{out,1}^2}$ is approximated as
\begin{equation}\small
	\begin{aligned}
		P_{out,1}^{2,\infty } &\approx \Pr \left\{ {\frac{{{{\left| {{h_B}} \right|}^2}}}{{{\varepsilon _B}}} < {{\left| {{h_1}} \right|}^2} < {{\left| {{h_K}} \right|}^2} < {\omega _0}\left( {{{\left| {{h_B}} \right|}^2},{{\left| {{H_E}} \right|}^2}} \right)} \right\}\\
		&= \int_0^\infty  {\int_0^\infty  {\left( {{F_{{{\left| {{h_1}} \right|}^2},{{\left| {{h_K}} \right|}^2}}}\left( {\frac{x}{{{\varepsilon _B}}},{\omega _0}\left( {x,y} \right)} \right)} \right){f_{{{\left| {{h_B}} \right|}^2}}}\left( x \right)dx{f_{{{\left| {{H_E}} \right|}^2}}}\left( y \right)dy} } \\
		&{\mathop  \approx \limits^{\left( g \right)} \sum\limits_{n = 0}^{K - 2} {\frac{{{\varepsilon _B}{\mu _1}}}{{K{{\left( {\frac{{{r_F}}}{{{r_B}}}} \right)}^\alpha } + {\varepsilon _B}}}},}
		\label{poutAsy31}
	\end{aligned}
\end{equation}
where $(g)$ holds with the same method as $(f)$.

Based on (\ref{pout12}) and  (\ref{pout22}), ${P_{out,2}}$ is approximated as
\begin{equation}
	\begin{aligned}
		P_{out,2}^\infty  &\approx \Pr \left\{ {{{\left| {{h_K}} \right|}^2} < {\theta _{th}}{{\left| {{H_E}} \right|}^2},{{\left| {{h_B}} \right|}^2} > {\varepsilon _B}{{\left| {{H_E}} \right|}^2}} \right\}\\
		&+ \Pr \left\{ {{{\left| {{h_K}} \right|}^2} < \frac{{{{\left| {{h_B}} \right|}^2}}}{{{\varepsilon _B}}},{{\left| {{h_B}} \right|}^2} < {\varepsilon _B}{{\left| {{H_E}} \right|}^2}} \right\}\\
		&= \frac{{r_B^\alpha r_E^{N\alpha }}}{{\Gamma \left( N \right)}}\sum\limits_{i = 0}^K {{\varphi _i}\int_0^\infty  {{y^{N - 1}}{e^{ - \left( {ir_F^\alpha {\theta _{th}} + r_E^\alpha } \right)y}}\int_{{\varepsilon _B}{\theta _{th}}y}^\infty  {{e^{ - r_B^\alpha x}}dxdy} } } \\
		&+ \frac{{r_B^\alpha r_E^{N\alpha }}}{{\Gamma \left( N \right)}}\sum\limits_{i = 0}^K {{\varphi _i}\int_0^\infty  {{y^{N - 1}}{e^{ - r_E^\alpha y}}\int_0^{{\varepsilon _B}{\theta _{th}}y} {{e^{ - \left( {\frac{{ir_F^\alpha }}{{{\varepsilon _B}}} + r_E^\alpha } \right)x}}dxdy} } } \\
	    & {=\sum\limits_{i = 0}^K {\frac{{{\varphi _i}{\varepsilon _B}}}{{i{{\left( {\frac{{{r_F}}}{{{r_B}}}} \right)}^\alpha } + {\varepsilon _B}}}}  + \sum\limits_{i = 0}^K {\frac{{i{\varphi _i}{{\left( {i{\chi _1} + {\chi _2}} \right)}^{ - N}}}}{{i + {\varepsilon _B}{{\left( {\frac{{{r_B}}}{{{r_F}}}} \right)}^\alpha }}}},}
		\label{poutAsy21}
	\end{aligned}
\end{equation}
{where
${\chi _1} = {\theta _{th}}{\left( {\frac{{{r_F}}}{{{r_E}}}} \right)^\alpha }$
and
${\chi _2} = {\varepsilon _B}{\theta _{th}}{\left( {\frac{{{r_B}}}{{{r_E}}}} \right)^\alpha } + 1$.}

Based on (\ref{pout43}), we obtain
\begin{equation}
	\begin{aligned}
		I_3^\infty  &\approx \Pr \left\{ {{{\left| {{h_k}} \right|}^2} < \frac{{{{\left| {{h_B}} \right|}^2}}}{{{\varepsilon _B}}} < {{\left| {{h_{k + 1}}} \right|}^2} < {{\left| {{h_K}} \right|}^2} < {\omega _0}\left( {{{\left| {{h_B}} \right|}^2},{{\left| {{H_E}} \right|}^2}} \right),{{\left| {{h_B}} \right|}^2} < {\varepsilon _B}{\theta _{th}}{{\left| {{H_E}} \right|}^2}} \right\}\\
		&= \sum\limits_{n = 0}^{K - k - 2} {\sum\limits_{m = 0}^{k - 1} {\sum\limits_{i = 1}^6 {{\varsigma _i}\int_0^\infty  {\int_0^{{\varepsilon _B}{\theta _{th}}t} {{e^{ - \left( {{B_i} + {C_i}} \right)\frac{s}{{{\varepsilon _B}}} - {W_i}{\omega _0}\left( {s,t} \right)}}{f_{{{\left| {{h_B}} \right|}^2}}}\left( s \right){f_{{{\left| {{H_E}} \right|}^2}}}\left( t \right)dsdt} } } } } \\
		&{\mathop  = \limits^{\left( h \right)} \sum\limits_{n = 0}^{K - k - 2} {\sum\limits_{m = 0}^{k - 1} {\sum\limits_{i = 5}^6 {\frac{{{\varsigma _i}{\varepsilon _B}\left( {1 - {\chi _3}} \right)}}{{\left( {K + {\varpi _i}} \right){{\left( {\frac{{{r_F}}}{{{r_B}}}} \right)}^\alpha } + {\varepsilon _B}}}} } },}
		\label{pout3I1}
	\end{aligned}
\end{equation}
and
\begin{equation}
		\begin{aligned}
			I_4^\infty  &\approx \Pr \left\{ {{{\left| {{h_k}} \right|}^2} < {\theta _{th}}{{\left| {{H_E}} \right|}^2},\frac{{{{\left| {{h_B}} \right|}^2}}}{{{\varepsilon _B}}} < {{\left| {{h_{k + 1}}} \right|}^2} < {{\left| {{h_K}} \right|}^2} < {\omega _0}\left( {{{\left| {{h_B}} \right|}^2},{{\left| {{H_E}} \right|}^2}} \right),{{\left| {{h_B}} \right|}^2} > {\varepsilon _B}{\theta _{th}}{{\left| {{H_E}} \right|}^2}} \right\}\\
			&= \sum\limits_{n = 0}^{K - k - 2} {\sum\limits_{m = 0}^{k - 1} {\sum\limits_{i = 1}^6 {{\varsigma _i}} \int_0^\infty  {\int_{{\varepsilon _B}{\theta _{th}}t}^\infty  {{e^{ - {B_i}{\theta _{th}}t - {C_i}\frac{s}{{{\varepsilon _B}}} - {W_i}\left( {{\omega _0}\left( {s,t} \right)} \right)}}{f_{{{\left| {{h_B}} \right|}^2}}}\left( s \right){f_{{{\left| {{H_E}} \right|}^2}}}\left( t \right)dsdt} } } }  \\
			&{\mathop  = \limits^{\left( i \right)} \sum\limits_{n = 0}^{K - k - 2} {\sum\limits_{m = 0}^{k - 1} {\sum\limits_{i = 5}^6 {\frac{{{\varsigma _i}{\varepsilon _B}{\chi _3}}}{{\left( {\left( {K - k} \right){{\left( {\frac{{{r_F}}}{{{r_B}}}} \right)}^\alpha } + {\varepsilon _B}} \right)}}} ,} }}
			\label{pout3I4}
		\end{aligned}	
	\end{equation}
where
{${\chi _3} = {\left( {\left( {K + {\varpi _i}} \right){\chi _1} + {\chi _2}} \right)^{ - N}}$,}
$(h)$ and $(i)$ hold with the same method as $(f)$,
and
{
$\varpi  = \left[ {0,0,n + 2,1,m + 1 - k, - k} \right]$.
}

Thus, $P_{out,3}^k$, when ${\rho_F} = {\rho_B} \to \infty $, is approximated as
\begin{equation}
	P_{out,3}^{k,\infty } = \sum\limits_{k = 1}^{K - 2} {\left( {I_3^\infty  + I_4^\infty } \right)}.
	\label{H232}
\end{equation}
With the same method and based on (\ref{sop4K1}), $P_{out,3}^{K - 1} $ is approximated as
\begin{equation}
	\begin{aligned}
		P_{out,3}^{K - 1,\infty } &\approx \Pr \left\{ {R_{K - 1}^s < {R_{th}},\left| {{S_{\mathrm{II}}}} \right| = K - 1} \right\}\\
		&= \Pr \left\{ {{{\left| {{h_{K - 1}}} \right|}^2} < \frac{{{{\left| {{h_B}} \right|}^2}}}{{{\varepsilon _B}}} < {{\left| {{h_K}} \right|}^2} < {\omega _0}\left( {{{\left| {{h_B}} \right|}^2},{{\left| {{H_E}} \right|}^2}} \right),{{\left| {{h_B}} \right|}^2} < {\varepsilon _B}{\theta _{th}}{{\left| {{H_E}} \right|}^2}} \right\}\\
		&+ \Pr \left\{ {{{\left| {{h_{K - 1}}} \right|}^2} < {\theta _{th}}{{\left| {{H_E}} \right|}^2},\frac{{{{\left| {{h_B}} \right|}^2}}}{{{\varepsilon _B}}} < {{\left| {{h_K}} \right|}^2} < {\omega _0}\left( {{{\left| {{h_B}} \right|}^2},{{\left| {{H_E}} \right|}^2}} \right),{{\left| {{h_B}} \right|}^2} > {\varepsilon _B}{\theta _{th}}{{\left| {{H_E}} \right|}^2}} \right\}\\
		&= \int_0^\infty  {{f_{{{\left| {{H_E}} \right|}^2}}}\left( t \right)dt\int_0^{{\varepsilon _B}{\theta _{th}}t} {{f_{{{\left| {{h_B}} \right|}^2}}}\left( s \right){F_{{{\left| {{h_{K - 1}}} \right|}^2},{{\left| {{h_K}} \right|}^2}}}\left( {0,\frac{s}{{{\varepsilon _B}}},\frac{s}{{{\varepsilon _B}}},{\omega _0}\left( {s,t} \right)} \right)ds} } \\
		&+ \int_0^\infty  {{f_{{{\left| {{H_E}} \right|}^2}}}\left( t \right)dt\int_{{\varepsilon _B}{\theta _{th}}t}^\infty  {{f_{{{\left| {{h_B}} \right|}^2}}}\left( s \right){F_{{{\left| {{h_{K - 1}}} \right|}^2},{{\left| {{h_K}} \right|}^2}}}\left( {0,{\theta _{th}}t,\frac{s}{{{\varepsilon _B}}},{\omega _0}\left( {s,t} \right)} \right)ds} } \\
		&{\mathop = \limits^{\left( j \right)} \sum\limits_{n = 0}^{K - 2} {{\mu _4}\sum\limits_{j = 3}^4 {{{\left( { - 1} \right)}^{j + 1}}{\varepsilon _B}\left( {\frac{{1 - {\chi _4}}}{{{\varpi _j}r_B^{ - \alpha } + r_F^{ - \alpha }{\varepsilon _B}}}} \right.} {\mkern 1mu}  + \left. {\frac{{{\chi _4}}}{{r_B^{ - \alpha } + r_F^{ - \alpha }{\varepsilon _B}}}} \right)},}
		\label{H232}
	\end{aligned}
\end{equation}
where
${\mu _4} = \frac{{K!{{\left( { - 1} \right)}^n}\left( {_{\;\;n}^{K - 2}} \right)}}{{\left( {K - 2} \right)!\left( {n + 1} \right)}}$,
{${\chi _4} = {\left( {{\varpi _j}{\chi _1} + {\chi _2}} \right)^{ - N}}$,}
and
$(j)$ holds with the same method as $(f)$.

\end{appendices}	
	

\end{document}